\documentclass[10pt, conference, letterpaper]{IEEEtran}

\usepackage{booktabs}
\usepackage{amsmath,amssymb,bm,dsfont,bbm}
\usepackage{algorithmic}
\usepackage{amsthm,nccmath,lipsum,cuted}
\usepackage[dvips]{color}
\usepackage{graphicx}
\usepackage[lined,boxed,commentsnumbered, ruled]{algorithm2e}
\usepackage{subcaption}
\captionsetup[subfigure]{subrefformat=simple,labelformat=simple}

\usepackage{amsfonts}
\usepackage{ragged2e}
\newtheorem{theorem}{Theorem}
\newtheorem{lemma}{Lemma}
\newtheorem{definition}{Definition}

\newtheorem{corollary}{Corollary}

\usepackage{graphicx}
\usepackage{verbatim}
\usepackage{xcolor}
\hyphenation{make-span}

\usepackage[footnote,draft,danish,silent,nomargin]{fixme}
\usepackage[colorlinks=true,linkcolor=black]{hyperref}


\def\P{{\mathbb P}}  
\def\E{{\mathbb E}}  

\def\ceild{{\frac{\lceil \lambda d \rceil}{\lambda}}}

\def\TD{T_\text{D}}
\def\TA{T_\text{A}}

\graphicspath{{./figs/}}

\IEEEoverridecommandlockouts

\begin{document}

\title{On the Distribution of AoI for the GI/GI/1/1 and GI/GI/1/2* Systems: Exact Expressions and Bounds}
 \author{
    \IEEEauthorblockN{Jaya Prakash Champati\IEEEauthorrefmark{1}, 			     Hussein Al-Zubaidy\IEEEauthorrefmark{2}, 
    James Gross\IEEEauthorrefmark{1}}
    \IEEEauthorblockA{\IEEEauthorrefmark{1} Information Science and Engineering, KTH Royal Institute of Technology, Stockholm, Sweden}
    \IEEEauthorblockA{\IEEEauthorrefmark{2} Guest researcher, Network and Systems Engineering, KTH Royal Institute of Technology, Stockholm, Sweden}
$\{$jpra,hzubaidy$\}$@kth.se, james.gross@ee.kth.se   
\thanks{This work appeared in the proceedings of IEEE INFOCOM, 2019~\cite{champati:Infocom2019}. The current paper complements the INFOCOM version by including the missing proofs and some additional explanations.}
\thanks{This work has been partially supported by the Swedish Research Council VR under grant 2016-04404.}
}
\maketitle

\begin{abstract}
Since Age of Information (AoI) has been proposed as a metric that quantifies the freshness of information updates in a communication system, there has been a constant effort in understanding and optimizing different statistics of the AoI process for classical queueing systems. In addition to classical queuing systems, more recently, systems with no queue or a unit capacity queue storing the latest packet have been gaining importance as storing and transmitting older packets do not reduce AoI at the receiver. Following this line of research, we study the distribution of AoI for the GI/GI/1/1 and GI/GI/1/2* systems, under non-preemptive scheduling. For any single-source-single-server queueing system, we derive, using sample path analysis, a fundamental result that characterizes the AoI violation probability, and use it to obtain closed-form expressions for D/GI/1/1, M/GI/1/1 as well as systems that use zero-wait policy. Further, when exact results are not tractable, we present a simple methodology for obtaining upper bounds for the violation probability for both GI/GI/1/1 and GI/GI/1/2* systems. An interesting feature of the proposed upper bounds is that, if the departure rate is given, they overestimate the violation probability by at most a value that decreases with the arrival rate. Thus, given the departure rate and for a fixed average service, the  bounds are tighter at higher utilization. 
\end{abstract}

\section{Introduction}\label{sec:intro}
In the recent past, there is an ever-increasing demand for networked systems that support emerging time-critical control applications. These applications include,  among many others, smart grid, factory automation and augmented reality. A basic building block in these applications is a closed-loop control which in its simplest form comprises: 1) a source (e.g. sensor) that samples a process of interest and transmits the status updates or packets, 2) a receiver (e.g. controller/monitor), and 3) an actuator. In such a control-loop a status update received after certain duration of its generation may become stale as the control decision based on this sample may lead to untimely action by the actuator. Thus, the \textit{freshness} of the status updates at the receiver plays a key role in the design of such networked systems. Age of Information (AoI) has been proposed as a relevant metric to quantify the freshness of the information~\cite{kaul_2011a}. It is defined as the time elapsed since the generation of the latest status update that is received at the receiver. When a status update is received, AoI gives the time elapsed since its generation thus indicating the freshness of the status update.

In contrast to latency, AoI has interesting property that it increases at both low and high sampling rate for queueing systems using First-Come-First-Serve (FCFS) policy~\cite{kaul_2012b}. This property led to initial works focusing on quantifying and minimizing the average AoI for the M/M/1, M/D/1 and D/M/1 queues in~\cite{kaul_2012b}, and for an M/M/1 queue with multi-sources in~\cite{yates_2012a}, under FCFS policy. Subsequent works~\cite{kaul_2012a,Najm_2016} have studied Last-Come-First-Serve (LCFS) policy as it reduces AoI compared with FCFS policy. In fact, the authors in~\cite{Bedewy_2017a} proved that in the domain of non-preemptive scheduling policies LCFS minimizes the AoI process, in stochastic ordering sense. One may further reduce AoI compared to  LCFS policy by considering packet discarding. Intuitively, even if an infinite capacity queue is available, when the server is busy, discarding the arriving packets except for the most recent packet would result in a lower AoI when compared with storing and transmitting any older packets. This motivated research efforts toward studying systems with no queue or a single capacity queue storing the latest packet~\cite{Costa_2016,Najm_2018,Soysal_2018,Yoshiaki2018}. Following this line of research we study the GI/GI/1/1 and GI/GI/1/2* systems. 

The GI/GI/1/1 system has no queue and an arrival is discarded if the server is busy, otherwise it is served immediately. The GI/GI/1/2* system has a queue with unit capacity. Different from GI/GI/1/2, whenever a packet arrives, it \textit{replaces} the packet that is in the queue, or will be served if the server is idle. Even though these systems look rudimentary, they are very important models for systems that are driven by the AoI metric. Note that AoI is only reduced upon a packet departure that has a generation time later than that of the previously departed packet. From here on we refer to a packet that reduces AoI upon its departure as \textit{information update packet}. In both the above systems every packet served is an information update packet as they always serve the most recently generated packet. Furthermore, while GI/GI/1/1 naturally arises in systems with no queue, GI/GI/1/2* is an attractive choice among work-conserving single-server queueing systems with non-zero queue capacity as it only stores most recently generated packet in the queue. Finally, the statistics obtained for GI/GI/1/1 and GI/GI/1/2* systems can be immediately used for obtaining the statistics for a system using zero-wait policy/just-in-time updates~\cite{Costa_2016}.


Despite their significance, existing results for the GI/GI/1/1 and GI/GI/1/2* systems are limited. In~\cite{Costa_2016}, the authors studied the M/M/1/1 and M/M/1/2* systems, and computed the average AoI and the distribution of the peak AoI. Closed-form expressions for the average AoI and the average peak AoI are derived for an M/GI/1/1 system for a single source in~\cite{Najm_2017}, and for multiple sources in~\cite{Najm_2018}, under preemptive scheduling. Noting that exact expressions for non-exponential interarrival times are difficult to obtain, the authors in~\cite{Soysal_2018} derived upper bounds for the average AoI for a GI/GI/1/1 system. To the best of our knowledge, neither exact expressions nor bounds exist for the distribution of AoI for a GI/GI/1/1 system or any special case of it. On the other hand, for a GI/GI/1/2* system, the authors in~\cite{Yoshiaki2018} derived exact expressions for the characteristic function of AoI for special cases where either the inter-arrival times or the service times are exponential, i.e., for the GI/M/1/2* and M/GI/1/2* systems. The authors achieve this by proposing a general formula for the distribution of AoI for any single-server-single-source queueing system.  
Nevertheless, if neither the service times nor the inter-arrival times are exponential, the general formula becomes intractable and in this case no known results exist for the GI/GI/1/2* system.

In this paper, we study the distribution of AoI for the GI/GI/1/1 and GI/GI/1/2* systems under non-preemptive scheduling. In particular, we characterize the \textit{violation probability}, i.e., the probability that AoI exceeds a given \textit{age limit} $d$. 
This metric represents, for instance, a stochastic guarantee on the timeliness of the state information regarding a process being sampled. Assuming AoI is stationary and ergodic, we derive a fundamental result that characterizes the distribution of AoI in terms of the peak AoI process and the inter-departure time between information update packets. 
As we explain in Section~\ref{sec:generalresults}, the characterization we provide is different from
the general formula proposed in~\cite{Yoshiaki2018}. Further, using this characterization, we propose and analyse upper bounds for the violation probability when the distribution of AoI is intractable.
Our main contributions are summarized below:
\begin{itemize}
\item For any single-source-single-server queueing system, we present a general characterization for the distribution of AoI. Using this characterization we propose a methodology to derive bounds for the violation probability. 
\item For the case of D/GI/1/1 and M/GI/1/1 systems we provide exact expressions for the distribution of AoI. We compute the distribution for D/M/1/1 and M/M/1/1, and show that the resulting \textit{expected} AoI expressions conform with the existing results for these special cases reported in~\cite{kaul_2012a} and~\cite{Costa_2016}. As a by-product result, we also obtain exact expressions for the case of zero-wait policy. To the best of our knowledge, these are the first known results for the distribution of AoI for the respective systems. 
\item For general inter-arrival and service time distributions, we provide upper bounds for the violation probability for both GI/GI/1/1 and GI/GI/1/2* systems. The proposed upper bounds can be used as stochastic guarantees for the freshness of the status updates in these systems.
\item We analyse the worst-case performance of the proposed upper bounds and show that, given the departure rate, the upper bounds overestimate the violation probability by at most a value that decreases as the arrival rate increases. Furthermore, the upper bounds are asymptotically tight. We emphasize that our approach for deriving and analysing upper bounds is quite general and is not restricted to GI/GI/1/1 and GI/GI/1/2* systems.
\end{itemize}

The rest of the paper is organized as follows. In Section~\ref{sec:notation} we introduce the notations used in this paper. In Section~\ref{sec:generalresults} we present general results regarding the characterization of the distribution of AoI and the bounds. In Sections~\ref{sec:GG11} and~\ref{sec:GG12} we present the results for GI/GI/1/1 and  GI/GI/1/2* systems, respectively. In Section~\ref{sec:numerical} we present some numerical results and finally conclude in Section~\ref{sec:conclusion}.

\section{Notation and Definitions}\label{sec:notation}
Consider a single source generating status updates or packets which are immediately dispatched to a single-server queueing system. The inter-arrival time between the packets is denoted by the random variable $Z$ with mean-arrival rate $\lambda = \frac{1}{\mathbb{E}[Z]}$. The arriving packets may be stored in a queue and are served by a server using some scheduling policy. We use the random variable $X$ to denote the service time with mean-service rate $\mu = \frac{1}{\E[X]}$. 
We use packet $n$ to refer to a packet that is $n$th in the sequence of departures.
Let $\TD(n)$ denote the time instant of $n$th packet departure and $\TA(n)$ denote the corresponding arrival instant.

The AoI metric, denoted by $\Delta(t)$, is defined as:
\begin{align}\label{eq:AoI-Definition}
\Delta(t) \triangleq t - \max\{\TA(n): \TD(n) \leq t\}. 
\end{align}
For a given age limit $d \geq 0$, we are interested in computing the steady-state violation probability or simply violation probability given by $\P(\Delta > d) = \lim_{t \rightarrow \infty} \P(\Delta(t) > d)$.
\begin{figure}[t]
\centering
\includegraphics[width = 3.2in, height = 1.5in]{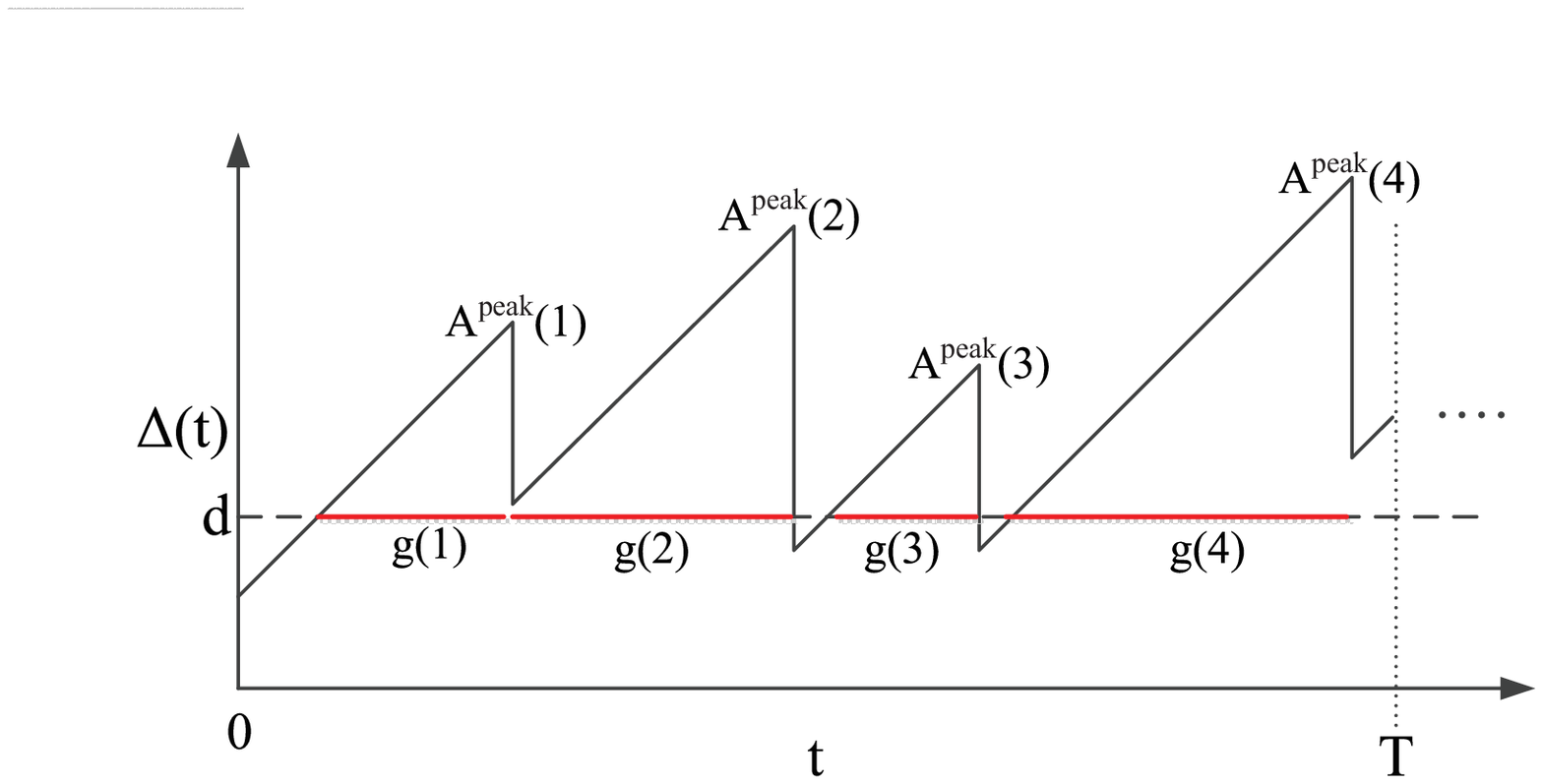}
\vspace{-.2cm}
\caption{A sample path of AoI process.}
\label{fig:AoIprocess}
\vspace{-.4cm}
\end{figure}

The AoI process increases linearly in time with slope one until the departure of information update packet and it drops to a value equal to the system delay of that packet. Let $\{A^\text{peak}(k), k\geq 1\}$ denote the peak AoI process, where $A^\text{peak}(k)$ denotes the $k$th peak of $\Delta(t)$ as shown in Figure~\ref{fig:AoIprocess}. Let $M(t)$ denote the number of peaks in the interval $(0,t]$. Also, in Figure~\ref{fig:AoIprocess} we plot $g(k)$, which is defined as the time duration for which AoI is greater than an age limit $d$ in the interval between $(k-1)$th peak and $k$th peak. As mentioned before, the characterization of the violation probability and the bounds presented in this paper are obtained in terms of $g(k)$.  

Note that the AoI peaks occur only at the departure instants of packets, but the converse might not be true as some packet departures might not result in a drop in the AoI. This may happen, for instance, in a GI/GI/1 queue under LCFS scheduling. If there is no new arrival during the service of a packet, the next packet from the queue does not reduce AoI upon its departure as its arrival time would be older than that of the previous departure. 
As noted before, we refer to packets that reduce AoI upon their departure as information update packets, and use $k$ to index them as it uniquely identifies an information update packet that departs at $k$th AoI peak. 
We note that packet $n$ and packet $k$ may not refer to the same packet for $k = n$. Let $\nu = 1/\mathbb{E}[\TD(k) - \TD(k-1)]$ denote the expected departure rate of information update packets.

We study the GI/GI/1/1 and GI/GI/1/2* systems under non-preemptive scheduling. In both systems the inter-arrival times and the service times are i.i.d. As mentioned before, a packet being served always has arrival time later than that of the previous departure. Thus, AoI is reduced at each departure instant and all departures are information update packets. In these systems, packet $n$ and packet $k$ refer to the same packet for $n = k$, $\TD(k)-\TD(k-1)$ represents the inter-departure time,  $M(t)$ is the number of departures till time $t$, and $\nu$ is simply the expected departure rate.

We use $\omega$ to denote a sample path of AoI, and $\Omega$ to denote the set of all sample paths.
Let $\gamma(k)$ and $\Gamma(k)$ denote the lower and upper bounds for $g(k)$ on any sample path, i.e., 
\begin{align}\label{eq:gkLBUB}
\gamma(\omega,k) \leq g(\omega,k) \leq \Gamma(\omega,k), \, \forall k \text{ and } \forall \omega \in \Omega.
\end{align}
In the rest of the paper, we explicitly drop $\omega$ if it is clear from the context.

\begin{table}[ht]
\renewcommand{\arraystretch}{1.2}
\caption{List of Symbols}
\centering
\begin{tabular}{|l|c|c|}
\hline
$k$ & Index of an information update packet \\
\hline
$\TA(k)$ & Arrival time of packet $k$\\
\hline
$\TD(k)$ & Departure time of packet $k$\\
\hline
$\check{Z}_k$  & Inter-arrival time between packet $k$ and its \textit{previous arrival} \\
\hline
$\hat{Z}_k$  & Inter-arrival time between packet $k$ and its \textit{next arrival} \\
\hline
$X_k$  & Service time of packet $k$ \\
\hline
$\nu$  & Expected departure rate of information update packets \\
\hline
$I_k$  & Idle time just before the service of packet $k$ \\
\hline
$W_k$ & Waiting time of packet $k$ \\
\hline
{\color{black} $M(t)$} & {\color{black}Number of AoI peaks in the interval $(0,t]$}\\
\hline
\end{tabular}
\label{tabel1}
\end{table}

The list of symbols used in the paper are summarized in Table~\ref{tabel1}. 
We use $(x)^+$ for $\max(0,x)$, and $\mathbbm{1}\{\cdot\}$ for the indicator function,  where $\mathbbm{1}\{E\}$ equals one if event $E$ is true, and is zero, otherwise. Finally, we use the functions $F_Y(\cdot)$ and $f_Y(\cdot)$ to denote the cumulative distribution function and the probability density function of a random variable $Y,$ respectively.

\section{General Results: Distribution of AoI and Bounds}\label{sec:generalresults}
In this section, we provide expressions for the distribution of AoI as well as bounds on the age limit violation probability. These results are general in the sense that they are applicable to any single-source-single-server queueing system. We first obtain a general characterization for the violation probability in terms of $g(k)$, assuming the process is stationary and ergodic.
We then present an upper bound characterization for systems where the violation probability characterised above can not be directly computed. Finally, we establish a generic lower bound that is used to analyse the performance of the proposed upper bounds.  

\subsection{AoI: Fundamental Relations}
Recall that AoI process increases linearly with slope one until the next information packet departure. Therefore, $A^\text{peak}(k)$ can be determined from $\Delta(t)$ at departure of packet $(k-1)$,  and the inter-departure time between $k$th and $(k-1)$th packets. 
{\allowdisplaybreaks \begin{align}\label{eq:Apeakdef}
A^\text{peak}(k) &= \TD(k) - \TA(k-1) \nonumber  \\
&= \Delta(\TD(k-1)) + \TD(k) - \TD(k-1).
\end{align}}
Note that, in general, there can be packet departures in between $(k-1)$th and $k$th packets.

Analysing $g(k)$ is central to the results presented in this paper. In the following lemma, we express $g(k)$ in terms of $k$th  AoI peak and inter-departure time between $k$th and $(k-1)$th packet.
\begin{lemma}\label{lem:gk}
Given $d \geq 0$, for any sample path of $\Delta(t)$,
\begin{align}\label{eq:g(k)}
g(k) =\min\{(A^\text{peak}(k)-d)^+,\TD(k) - \TD(k-1)\}, \, \forall k.
\end{align}
\end{lemma}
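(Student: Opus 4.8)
The plan is to fix an arbitrary sample path and an arbitrary index $k$, and to exploit the fact that between two consecutive AoI peaks the process $\Delta(t)$ consists of a single affine segment of unit slope. By the definition of consecutive peaks, no information update packet departs strictly between $\TD(k-1)$ and $\TD(k)$, and the departures in this open interval that are not update packets leave $\Delta(t)$ unchanged (they only contribute older arrival times, so the $\max$ in~\eqref{eq:AoI-Definition} stays at $\TA(k-1)$). Hence for every $t \in I_k := (\TD(k-1), \TD(k)]$,
\begin{align}\label{eq:affineseg}
\Delta(t) = v + \bigl(t - \TD(k-1)\bigr), \qquad v := \Delta(\TD(k-1)) \geq 0,
\end{align}
where $v = \TD(k-1) - \TA(k-1)$ is the system delay of packet $k-1$; note that $v + \bigl(\TD(k) - \TD(k-1)\bigr) = A^\text{peak}(k)$ by~\eqref{eq:Apeakdef}, so $\Delta$ runs nondecreasingly from $v$ up to $A^\text{peak}(k)$ over $I_k$.

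Next I would compute $g(k)$ as the total length of $\{t \in I_k : \Delta(t) > d\}$. Writing $s = t - \TD(k-1) \in (0, L]$ with $L := \TD(k) - \TD(k-1) \geq 0$, the condition $\Delta(t) > d$ in~\eqref{eq:affineseg} becomes $s > d - v$, whence
\begin{align}\label{eq:gkmeasure}
g(k) = \bigl|\{s \in (0, L] : s > d - v\}\bigr| = \bigl(L - (d-v)^+\bigr)^+.
\end{align}
Whether $I_k$ is taken open or closed at its endpoints, and whether the defining inequality is $>$ or $\geq$, only affects a set of measure zero and is therefore immaterial.

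It then remains to verify the elementary identity $\bigl(L - (d-v)^+\bigr)^+ = \min\bigl((v+L-d)^+, L\bigr)$ for all $v,L \geq 0$ and $d \geq 0$. I would split on the sign of $d-v$: if $d \leq v$, the left-hand side equals $L$, and since $v+L-d \geq L \geq 0$ the right-hand side is $\min(v+L-d, L) = L$; if $d > v$, the left-hand side equals $\bigl(L - (d-v)\bigr)^+ = (v+L-d)^+$, and since $v+L-d < L$ the right-hand side is $\min\bigl((v+L-d)^+, L\bigr) = (v+L-d)^+$. In either case the two sides coincide; substituting $v + L - d = A^\text{peak}(k) - d$ and $L = \TD(k) - \TD(k-1)$ into~\eqref{eq:gkmeasure} gives exactly~\eqref{eq:g(k)}.

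I do not expect a genuine obstacle: the only points requiring care are (i) the justification that $\Delta$ really is a single unit-slope ramp over $I_k$ in the presence of possible intermediate non-update departures, and (ii) bookkeeping the interplay of the $(\cdot)^+$ operators with the cap at $\TD(k)-\TD(k-1)$ — the outer $\min$ in~\eqref{eq:g(k)} being precisely what handles the subcase $v \geq d$, in which the whole ramp already lies above the age limit and $g(k)$ saturates at the inter-departure time.
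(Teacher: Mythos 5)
Your proof is correct and follows essentially the same route as the paper's: both rest on the observation that $\Delta(t)$ is a unit-slope ramp from $\Delta(\TD(k-1))$ up to $A^\text{peak}(k)$ over the inter-departure interval, and then read off the time spent above level $d$ by a case split (the paper phrases the cases in terms of $A^\text{peak}(k)-d$ versus $\TD(k)-\TD(k-1)$, you in terms of $d$ versus $v$, which is the same dichotomy via~\eqref{eq:Apeakdef}). Your explicit justification that intermediate non-update departures leave the ramp intact, and the closed-form identity $\bigl(L-(d-v)^+\bigr)^+=\min\bigl((v+L-d)^+,L\bigr)$, are slightly more careful renderings of steps the paper treats geometrically, but there is no substantive difference.
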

\begin{proof}
Consider the case where $A^\text{peak}(k) \leq d$. For this case $g(k)$ is zero, by definition, which is satisfied by \eqref{eq:g(k)} as $\TD(k) \geq \TD(k-1)$. For $A^\text{peak}(k) > d$ we further consider the following cases.

\textbf{Case 1:} $A^\text{peak}(k) > d$ and $A^\text{peak}(k) - d > \TD(k) - \TD(k-1)$. Using this in~\eqref{eq:Apeakdef}, we obtain $\Delta(\TD(k-1)) > d$. This implies that $\Delta(t) > d$ during the entire interval $[\TD(k-1),\TD(k))$. Therefore, $g(k)= \TD(k) - \TD(k-1)$. This is the case for $g(2)$ in Figure~\ref{fig:AoIprocess}.

\textbf{Case 2:} $A^\text{peak}(k) > d$ and $A^\text{peak}(k) - d \leq \TD(k) - \TD(k-1)$. Using this in~\eqref{eq:Apeakdef}, we obtain $\Delta(\TD(k-1)) \leq d$. In this case the horizontal line, with $y$ coordinate equal to $d$, intersects $\Delta(t)$ at some time $t' \in [\TD(k-1),\TD(k))$. Since $\Delta(t)$ increases linearly with slope one, by geometry we obtain $g(k) = \TD(k) - t' = A^\text{peak}(k) - d$.

From the above analysis, we conclude that $g(k)$ takes the minimum value of $(A^\text{peak}(k) - d)^+$ and $\TD(k) - \TD(k-1)$, and the lemma follows. 
\end{proof}

Next, we characterize the violation probability in terms of $g(k)$ in the following theorem.
\begin{theorem}\label{thm:vioprobAoIPeak}
If the AoI process is stationary and ergodic, given $d \geq 0$, the AoI violation probability, if exists, is given by
\begin{align}\label{eq:vioprobAoIPeak}
\P(\Delta > d) = \lim_{T \rightarrow \infty} \frac{1}{T} \sum_{k = 1}^{M(T)} g(k), \text{ a.s.},
\end{align}
where $g(k)$ is given in~\eqref{eq:g(k)}.
\end{theorem}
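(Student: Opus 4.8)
The plan is to express the indicator of the violation event as a time average and then identify the running integral of $\mathbbm{1}\{\Delta(t) > d\}$ with the sum of the $g(k)$'s. By definition of $g(k)$ as the time duration for which $\Delta(t) > d$ within the inter-peak interval $(\TD(k-1), \TD(k)]$, and since the AoI peaks partition the time axis (every local maximum of $\Delta$ occurs at a departure of an information update packet, and between consecutive peaks the process rises linearly then drops), I would first argue the pathwise identity
\begin{align}\label{eq:integralsum}
\int_0^{T} \mathbbm{1}\{\Delta(t) > d\}\, dt = \sum_{k=1}^{M(T)} g(k) + R(T),
\end{align}
where $R(T)$ is a boundary term accounting for the partial interval after the $M(T)$th peak (and possibly a transient before the first peak); one checks $0 \le R(T) \le \TD(M(T)+1) - \TD(M(T))$, so $R(T)$ is bounded by a single inter-departure time and hence is negligible after dividing by $T$ as $T \to \infty$ (assuming, as is implicit in stationarity and ergodicity, that inter-departure times do not grow linearly in $T$).

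Next I would invoke stationarity and ergodicity of the AoI process. By the ergodic theorem, the time average of the bounded measurable functional $\mathbbm{1}\{\Delta(t) > d\}$ converges almost surely to its expectation, which is exactly $\P(\Delta > d)$:
\begin{align}\label{eq:ergodic}
\lim_{T \to \infty} \frac{1}{T} \int_0^{T} \mathbbm{1}\{\Delta(t) > d\}\, dt = \P(\Delta > d), \text{ a.s.}
\end{align}
Here the existence of the limit on the left is what makes the statement "if exists" in the theorem meaningful; the right-hand side is the stationary violation probability $\lim_{t\to\infty}\P(\Delta(t)>d)$, which coincides with $\P(\Delta>d)$ under stationarity. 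Combining \eqref{eq:integralsum}, the bound on $R(T)$, and \eqref{eq:ergodic} yields \eqref{eq:vioprobAoIPeak}, with $g(k)$ given by Lemma~\ref{lem:gk}.

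The main obstacle I anticipate is making the pathwise bookkeeping in \eqref{eq:integralsum} fully rigorous, specifically showing that the intervals $\{\Delta(t) > d\}$ are exactly captured by the $g(k)$'s with no double counting or gaps. This requires the observation that $\Delta(t)$ is piecewise linear with slope $+1$ between peaks and that within the $k$th inter-peak interval the set $\{t : \Delta(t) > d\}$ is a (possibly empty) subinterval ending at $\TD(k)$ — this is precisely the geometric content of the case analysis already carried out in the proof of Lemma~\ref{lem:gk}. A secondary subtlety is controlling the boundary/transient term: one needs that $\TD(M(T)+1) - \TD(M(T)) = o(T)$ almost surely, which follows from ergodicity (the inter-departure times have finite mean $1/\nu$, so the maximal inter-departure time over $[0,T]$ is $o(T)$ a.s.). Everything else is a routine application of the pointwise ergodic theorem to the bounded functional $\mathbbm{1}\{\Delta(t)>d\}$.
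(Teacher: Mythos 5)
Your proposal is correct and follows essentially the same route as the paper's proof: Birkhoff's ergodic theorem applied to the indicator $\mathbbm{1}\{\Delta(t)>d\}$, followed by the pathwise decomposition of the integral into $\sum_{k=1}^{M(T)}g(k)$ plus a boundary term $\delta(T)$ (your $R(T)$) that vanishes after dividing by $T$. Your additional remarks on bounding the boundary term by a single inter-departure time and on the interval structure of $\{t:\Delta(t)>d\}$ within each inter-peak interval simply make explicit what the paper treats as ``easy to see.''
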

\begin{proof}
Since $\Delta(t)$ is stationary and ergodic, by Birkhoff's ergodic theorem~\cite{Kumar_2004}, we have
\begin{align}\label{eq:VioProbIndicator}
\P(\Delta > d) = \lim_{T \rightarrow \infty} \frac{1}{T}\int_{0}^{T} \mathbbm{1}{\{\Delta(\tau) > d\}} d\tau, \text{ a.s.}
\end{align}
The RHS above is the fraction of time $\Delta(t)$ is greater than $d$ in a given sample path. 

Consider a sample path of $\Delta(t)$ presented in Figure~\ref{fig:AoIprocess}. Let $\delta(T)$ denote the duration for which $\Delta(t)$ is greater than $d$ after the $M(T)$th peak and before time $T$. It is easy to see that
\begin{align}\label{eq:Ideltak}
\int_{0}^{T} \mathbbm{1}{\{\Delta(\tau) > d\}}d \tau &= \sum_{k=1}^{M(T)}g(k) + \delta(T) \nonumber\\
\Rightarrow \lim_{T \rightarrow \infty} \frac{1}{T}\int_{0}^{T} \mathbbm{1}{\{\Delta(\tau) > d\}} d\tau &= \lim_{T \rightarrow \infty} \frac{1}{T}\sum_{k=1}^{M(T)}g(k).
\end{align}
In the last step above we have used the fact that  $\frac{\delta(T)}{T}$ goes to zero as $T$ goes to infinity. The result follows by substituting~\eqref{eq:Ideltak} in~\eqref{eq:VioProbIndicator}. 
\end{proof}
Theorem~\ref{thm:vioprobAoIPeak} is quite general in the sense that it holds for any scheduling policy (e.g., FCFS/LCFS, preemptive/non-premptive etc.), general service times (possibly correlated), and general inter-arrival times (possibly correlated), as long as it is ensured that the resulting AoI process is stationary and ergodic. Note that even if the AoI process is stationary and ergodic, the violation probability may not exist. For example, for a D/G/1 system using FCFS the violation probability does not exist if $d < \frac{1}{\lambda}$~\cite{Champati_2018a}. 

As mentioned before, a general formula was proposed in~\cite{Yoshiaki2018} that characterizes the distribution of AoI in terms of the distribution of peak AoI, and the distribution of system delay. In contrast to the general formula, Theorem~\ref{thm:vioprobAoIPeak} characterizes the violation probability in terms of $g(k)$, which is a function of peak AoI process and the inter-departure time between information update packets during the $k$th AoI peak. As we will show later, it is easy to obtain upper bounds and lower bounds for $g(k)$ which enables us to derive upper bounds for the violation probability and analyse their worst-case performance in cases where an exact expression is not tractable.


The challenge in evaluating the infinite summation in the RHS of~\eqref{eq:vioprobAoIPeak} is that the sequence $\{g(k),k \geq 1\}$ is not i.i.d., and we cannot directly use the Strong Law of Large Numbers (SLLN). However, we will later show that quantities involving $g(k)$ have structural independence property, defined below, which enables us to use SLLN.

\begin{definition}\label{def:siid}
An infinite sequence of random variables $\{X_n, n\geq 1 \}$ is structurally independent and identically distributed (s.i.i.d.) iff $X_n$ are identically distributed and have the following structural independence: for $1 \leq m < \infty$, $X_{i+jm}$ is independent of $X_{i+km}$, for all $1\leq i \leq m$, $j \geq 0$, $k \geq 0$, and $j \neq k$.
\end{definition}

In the results that follow we make use of the following lemma, which extends SLLN for s.i.i.d. random variables.
\begin{lemma}\label{lem:SLLN}
For any sequence $\{X_n, n\geq 1 \}$ that is s.i.i.d. according to Definition 1, we have
\begin{align*}
\lim_{N \rightarrow \infty} \frac{1}{N} \sum_{n=0}^{N} X_{n} = \mathbb{E}[X], \text{ a.s.}, 
\end{align*}
where $\mathbb{E}[X] = \mathbb{E}[X_n]$ for all $n$. 
\end{lemma}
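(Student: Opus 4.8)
The plan is to reduce the claim to a strong law for \emph{pairwise} independent random variables by splitting $\{X_n\}$ into the interleaved subsequences distinguished by Definition~\ref{def:siid}. First note that for $\E[X]$ to be meaningful one needs $\E[|X|]<\infty$, which we take as implicit in the hypothesis. Fix an integer $m$ for which the structural independence of Definition~\ref{def:siid} holds (one may take $m=1$ if it holds for every $m$), and for each residue $r\in\{1,\dots,m\}$ define the subsequence $Y^{(r)}_j:=X_{r+jm}$ for $j\ge 0$. By the structural independence property, for each fixed $r$ the family $\{Y^{(r)}_j\}_{j\ge 0}$ is pairwise independent; and since the $X_n$ are identically distributed, so are the $Y^{(r)}_j$, with common mean $\E[X]$.

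The key step is then to invoke a strong law of large numbers that is valid under pairwise independence alone, namely Etemadi's theorem: for pairwise independent, identically distributed, integrable random variables the arithmetic averages converge almost surely to the common mean. Applying this to each of the finitely many subsequences, I would obtain that, outside a single null set, $\frac{1}{n}\sum_{j=0}^{n-1} Y^{(r)}_j \to \E[X]$ holds simultaneously for all $r\in\{1,\dots,m\}$.

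It then remains to reassemble the full average from the subsequence averages. Writing $N=qm+s$ with $0\le s<m$, the partial sum satisfies $\sum_{n=1}^{N}X_n=\sum_{r=1}^{m}\sum_{j=0}^{n_r(N)-1}Y^{(r)}_j$, where $n_r(N)\in\{q,q+1\}$ counts the indices $r+jm$ in $\{1,\dots,N\}$, so that $n_r(N)\to\infty$ and $n_r(N)/N\to 1/m$ as $N\to\infty$. Hence
\begin{align*}
\frac{1}{N}\sum_{n=1}^{N}X_n=\sum_{r=1}^{m}\frac{n_r(N)}{N}\cdot\frac{1}{n_r(N)}\sum_{j=0}^{n_r(N)-1}Y^{(r)}_j \;\xrightarrow[N\to\infty]{}\; \sum_{r=1}^{m}\frac{1}{m}\,\E[X]=\E[X],\quad\text{a.s.}
\end{align*}
Starting the sum at $n=0$ rather than $n=1$ (or at any other fixed finite index) alters it by only $O(1)$, which is washed out by the $1/N$ factor, so the same limit holds for $\frac{1}{N}\sum_{n=0}^{N}X_n$, which is exactly the assertion of the lemma.

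The main obstacle, and really the only point of substance, is that Definition~\ref{def:siid} guarantees only \emph{pairwise} (not mutual) independence within a residue class, so the classical Kolmogorov strong law cannot be applied directly; this is precisely why Etemadi's strong law is the right tool. Should a self-contained argument be preferred, one can instead run the standard truncation / Chebyshev / Borel--Cantelli proof within each residue class, where pairwise independence already yields $\mathrm{Var}\big(\sum_{j}Y^{(r)}_j\big)=\sum_{j}\mathrm{Var}(Y^{(r)}_j)$ (after first reducing to nonnegative $X$ via $X=X^+-X^-$ if desired), but invoking Etemadi keeps the proof short.
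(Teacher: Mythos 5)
Your proof is correct and follows essentially the same route as the paper's: split the sequence into the $m$ interleaved residue-class subsequences, apply a strong law of large numbers within each class, and reassemble the full Ces\`aro average using $n_r(N)/N \to 1/m$. The one substantive difference is that you correctly observe that Definition~1 only guarantees \emph{pairwise} independence within each residue class and therefore invoke Etemadi's SLLN, whereas the paper simply declares the subsequences i.i.d. and applies the classical SLLN; your reading is the more careful one.
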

\begin{proof}
The proof is based on partitioning the sum into multiple terms which themselves are infinite sums of  i.i.d. random variables and then apply SLLN for these summations.
{\allowdisplaybreaks
\begin{align*}
&\lim_{N \rightarrow \infty} \frac{1}{N} \sum_{n=0}^{N} X_{n} = \lim_{N \rightarrow \infty} \frac{1}{N}  \sum_{i=1}^{m} \sum_{j=1}^{\lfloor\frac{N-i+m }{m}\rfloor} X_{i+(j-1)m} \\
&= \frac{1}{m}\sum_{i=1}^{m} \lim_{N \rightarrow \infty} \frac{\lfloor\frac{N-i+m }{m}\rfloor}{N/m}\sum_{j=1}^{\lfloor\frac{N-i+m }{m}\rfloor} \frac{X_{i+(j-1)m}}{\lfloor\frac{N-i+m }{m}\rfloor} \\
&= \frac{1}{m}\sum_{i=1}^{m} \mathbb{E}[X] = \mathbb{E}[X], \text{ a.s.}
\end{align*}}
In the third step above, we have used SLLN as $\{X_{i+(j-1)m},j \geq 1\}$ are i.i.d. (Definition~\ref{def:siid}), and $\lfloor\frac{N-i+m }{m}\rfloor$ differs from $\frac{N}{m}$ by utmost $1$.
\end{proof}

\begin{theorem}\label{thm:evaluategk}
Given age limit $d \geq 0$, $\lambda > 0$, $0 < \mathbb{E}[X] = \frac{1}{\mu} < \infty$,
 $\{g(k),k \geq 1\}$ are s.i.i.d., and $\{\TD(k)-\TD(k-1),k \geq 1\}$ are s.i.i.d., then
\begin{align*}
\lim_{T \rightarrow \infty} \frac{1}{T} \sum_{k = 1}^{M(T)} g(k) = \nu\mathbb{E}[g(k)], \text{ a.s.}
\end{align*}
\end{theorem}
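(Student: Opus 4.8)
The plan is to factor the normalized partial sum into a renewal-count factor and an averaging factor, and to handle each with Lemma~\ref{lem:SLLN}. Concretely, I would write
\begin{align*}
\frac{1}{T}\sum_{k=1}^{M(T)} g(k) \;=\; \frac{M(T)}{T}\cdot\frac{1}{M(T)}\sum_{k=1}^{M(T)} g(k),
\end{align*}
so that it suffices to prove (i) $M(T)/T\to\nu$ a.s.\ as $T\to\infty$, and (ii) $\frac{1}{M(T)}\sum_{k=1}^{M(T)} g(k)\to\mathbb{E}[g(k)]$ a.s.; the theorem then follows because a product of two a.s.\ convergent sequences converges a.s.\ to the product of the limits.

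For step (i) I would invoke the classical renewal sandwich. The count $M(T)$ equals the number of AoI peaks in $(0,T]$, and every peak coincides with an information-update departure, so $\TD(M(T))\leq T<\TD(M(T)+1)$. Since the hypotheses $\lambda>0$ and $0<\mathbb{E}[X]=1/\mu<\infty$ ensure $0<1/\nu=\mathbb{E}[\TD(k)-\TD(k-1)]<\infty$, and $\{\TD(k)-\TD(k-1),k\geq1\}$ is s.i.i.d.\ by assumption, Lemma~\ref{lem:SLLN} gives $\TD(n)/n=\frac1n\sum_{k=1}^n(\TD(k)-\TD(k-1))+\TD(0)/n\to 1/\nu$ a.s., the boundary term $\TD(0)/n$ being negligible. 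As $1/\nu>0$, this forces $\TD(n)\to\infty$, hence $M(T)\to\infty$ a.s.\ as $T\to\infty$. Dividing the sandwich by $M(T)$ and letting $T\to\infty$, both $\TD(M(T))/M(T)$ and $\frac{\TD(M(T)+1)}{M(T)+1}\cdot\frac{M(T)+1}{M(T)}$ tend to $1/\nu$ a.s., so $T/M(T)\to 1/\nu$, i.e.\ $M(T)/T\to\nu$.

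For step (ii), applying Lemma~\ref{lem:SLLN} to the s.i.i.d.\ sequence $\{g(k),k\geq1\}$ yields $\frac1n\sum_{k=1}^n g(k)\to\mathbb{E}[g(k)]$ a.s.; integrability of $g(k)$ is immediate from Lemma~\ref{lem:gk}, since $0\leq g(k)\leq\TD(k)-\TD(k-1)$ and the latter has finite mean $1/\nu$. Composing this convergence with $M(T)\to\infty$ a.s.\ (any sequence converging to a limit also converges along an integer-valued index tending to infinity) gives (ii). Combining (i) and (ii) finishes the argument.

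I expect the only delicate points to be the bookkeeping around the random index $M(T)$: verifying $M(T)\to\infty$, legitimizing the evaluation of the a.s.\ limit along the random index, and confirming that the discarded boundary contributions (the term $\TD(0)/n$ here, and the partial-peak term $\delta(T)$ already dropped in Theorem~\ref{thm:vioprobAoIPeak}) vanish in the limit. None of this is deep, but it is where care is required; everything else collapses to two applications of Lemma~\ref{lem:SLLN} once integrability of $g(k)$ and of $\TD(k)-\TD(k-1)$ has been confirmed from the stated moment conditions.
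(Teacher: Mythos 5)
Your proof is correct and follows essentially the same route as the paper: the same factorization $\frac{1}{T}\sum_{k=1}^{M(T)}g(k)=\frac{M(T)}{T}\cdot\frac{1}{M(T)}\sum_{k=1}^{M(T)}g(k)$, with Lemma~\ref{lem:SLLN} applied once to the s.i.i.d.\ inter-departure times to get $T/M(T)\to 1/\nu$ and once to the s.i.i.d.\ sequence $\{g(k)\}$ to get the Cesàro limit $\mathbb{E}[g(k)]$. The only difference is that you spell out the renewal sandwich $\TD(M(T))\leq T<\TD(M(T)+1)$ and the integrability of $g(k)$ explicitly, details the paper leaves implicit.
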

\begin{proof} 
We have
\begin{align}\label{eq1:thm3}
\lim_{T \rightarrow \infty} \frac{1}{T} \sum_{k = 1}^{M(T)} g(k) = \lim_{T \rightarrow \infty} \frac{M(T)}{T}\cdot \frac{\sum_{k = 1}^{M(T)} g(k)}{M(T)}.
\end{align}
Since $\lambda > 0$ and $\mathbb{E}[X] < \infty$, $M(T)$ approaches infinity, almost surely, as $T$ approaches infinity, and we obtain,
\begin{align*}
\lim_{T\rightarrow \infty} \frac{T}{M(T)} = \lim_{T\rightarrow \infty} \sum_{k=1}^{M(T)}(\TD(k) - \TD(k-1))/M(T).
\end{align*}

Since $\{\TD(k) - \TD(k-1),k \geq 1\}$ are s.i.i.d., from Lemma~\ref{lem:SLLN} we have
\begin{align}\label{eq3:interdepart}
\lim_{T\rightarrow \infty} \frac{T}{M(T)} &= \mathbb{E}[\TD(k) - \TD(k-1)], \text{ a.s.}
\end{align}
Similarly, we invoke Lemma~\ref{lem:SLLN} for $\{g(k),k \geq 1\}$ and obtain  
\begin{align}\label{eq:ExpYk}
&\lim_{M(T) \rightarrow \infty} \frac{\sum_{k = 1}^{M(T)} g(k)}{M(T)} = \mathbb{E}[g(k)], \text{ a.s.}
\end{align}
The result follows by substituting~\eqref{eq3:interdepart} and~\eqref{eq:ExpYk} in~\eqref{eq1:thm3}.
\end{proof}
Theorem~\ref{thm:evaluategk} can be seen as an extension of renewal reward theorem for s.i.i.d. renewals and rewards. Later, we use the theorem to derive exact expressions for the violation probability for the D/GI/1/1 and M/GI/1/1 systems.

\subsection{Bounds for AoI Violation Probability}
As one can expect $g(k)$ and $\TD(k)-\TD(k-1)$ depend on the idle time $I_k$ and waiting time $W_k$ in the queuing system. Therefore, computing $\mathbb{E}[g(k)]$ and $\nu$ is hard, in general, as the distributions of $I_k$ and $W_k$ become intractable for general inter-arrival-time and service-time distributions.
To this end, in the following theorem we present a result that is useful in deriving upper bounds for the violation probability and only requires the AoI process to be stationary.

\begin{theorem}\label{thm:vioprobAoIPeakUB}
If the AoI process is stationary, then 
{\allowdisplaybreaks \begin{align*}
\mathbb{E}_\omega \!\!\left[\!\lim_{T \rightarrow \infty} \! \frac{1}{T} \!\! \sum_{k = 1}^{M(T)} \!\!\!\gamma(\omega,\!k)\! \right]\!\! \leq \! \P(\Delta \! > \! d)  \!\leq \! \mathbb{E}_\omega \!\! \left[\lim_{T \rightarrow \infty}\! \frac{1}{T} \!\! \sum_{k = 1}^{M(T)}\!\!\! \Gamma(\omega,\!k) \!\right]\!.
\end{align*}}
\end{theorem}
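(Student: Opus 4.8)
The plan is to combine Theorem~\ref{thm:vioprobAoIPeak}, which already gives $\P(\Delta > d) = \lim_{T\to\infty}\frac{1}{T}\sum_{k=1}^{M(T)} g(\omega,k)$ almost surely, with the pathwise bounds $\gamma(\omega,k) \le g(\omega,k) \le \Gamma(\omega,k)$ from~\eqref{eq:gkLBUB}. Summing the pathwise inequalities over $k = 1,\dots,M(T)$, dividing by $T$, and taking $T\to\infty$ gives, on every sample path $\omega$ for which the relevant limits exist,
\begin{align*}
\lim_{T \rightarrow \infty} \frac{1}{T} \sum_{k = 1}^{M(T)} \gamma(\omega,k) \;\leq\; \lim_{T \rightarrow \infty} \frac{1}{T} \sum_{k = 1}^{M(T)} g(\omega,k) \;\leq\; \lim_{T \rightarrow \infty} \frac{1}{T} \sum_{k = 1}^{M(T)} \Gamma(\omega,k).
\end{align*}
By Theorem~\ref{thm:vioprobAoIPeak} the middle term equals the constant $\P(\Delta > d)$ almost surely, so the displayed chain holds almost surely.

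Next I would take expectations over $\omega$. Since $\P(\Delta > d)$ is a (deterministic) constant, $\mathbb{E}_\omega[\P(\Delta>d)] = \P(\Delta>d)$, and monotonicity of expectation applied to the almost-sure inequalities yields
\begin{align*}
\mathbb{E}_\omega\!\left[\lim_{T \rightarrow \infty} \frac{1}{T} \sum_{k = 1}^{M(T)} \gamma(\omega,k)\right] \;\leq\; \P(\Delta > d) \;\leq\; \mathbb{E}_\omega\!\left[\lim_{T \rightarrow \infty} \frac{1}{T} \sum_{k = 1}^{M(T)} \Gamma(\omega,k)\right],
\end{align*}
which is exactly the claim. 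Note that only stationarity of the AoI process is invoked here through Theorem~\ref{thm:vioprobAoIPeak}; ergodicity is not needed for the \emph{existence} of the two outer expectations, because they are written as expectations of (possibly sample-path-dependent) limits rather than as deterministic constants.

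The main subtlety — and the step I would be most careful about — is the interchange of limit and summation implicit in writing $\lim_{T\to\infty}\frac1T\sum_{k=1}^{M(T)}$ and the measurability/well-definedness of the random variables $\lim_{T\to\infty}\frac1T\sum_{k=1}^{M(T)}\gamma(\omega,k)$ and the corresponding one for $\Gamma$. One should argue that $M(T)\to\infty$ almost surely (as in the proof of Theorem~\ref{thm:evaluategk}, using $\lambda>0$ and $\mathbb{E}[X]<\infty$), that the partial Cesàro-type averages are monotone or otherwise convergent so the limits exist (at least in $[0,\infty]$), and that the $\limsup$/$\liminf$ versions of the inequalities suffice if a genuine limit is not guaranteed for $\gamma$ and $\Gamma$ separately. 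A clean way to sidestep this is to phrase the argument with $\limsup$ on the upper side and $\liminf$ on the lower side: $\frac1T\sum_{k\le M(T)}\gamma(\omega,k)\le\frac1T\sum_{k\le M(T)}g(\omega,k)$ for all $T$ forces $\liminf_T\frac1T\sum\gamma \le \P(\Delta>d)$ a.s., and symmetrically $\limsup_T\frac1T\sum\Gamma \ge \P(\Delta>d)$ a.s.; taking expectations then gives the stated bounds whenever the limits in the statement exist. I would also remark that, unlike Theorem~\ref{thm:evaluategk}, no s.i.i.d. assumption on $\gamma(k)$ or $\Gamma(k)$ is required at this stage — those are only needed later when one actually evaluates the two outer expectations in closed form via Lemma~\ref{lem:SLLN}.
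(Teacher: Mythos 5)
There is a genuine gap: your proof invokes Theorem~\ref{thm:vioprobAoIPeak}, whose hypothesis is that the AoI process is stationary \emph{and ergodic}, whereas Theorem~\ref{thm:vioprobAoIPeakUB} assumes only stationarity --- and the paper explicitly points out right after the proof that dropping ergodicity is the whole point of this result. Your remark that ``only stationarity is invoked here through Theorem~\ref{thm:vioprobAoIPeak}'' misreads that theorem's hypotheses. Without ergodicity, the pathwise limit $\lim_{T\to\infty}\frac{1}{T}\sum_{k=1}^{M(T)} g(\omega,k)$ (equivalently, $\lim_{T\to\infty}\frac{1}{T}\int_0^T \mathbbm{1}\{\Delta(\omega,\tau)>d\}\,d\tau$) still exists almost surely by Birkhoff's theorem, but it is a genuine random variable --- the conditional probability of $\{\Delta>d\}$ given the invariant $\sigma$-field --- not the constant $\P(\Delta>d)$. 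So the step ``the middle term equals the constant $\P(\Delta>d)$ almost surely'' fails, and with it the sandwiching of the constant between the two outer expectations.

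The paper avoids this by never asserting an almost-sure constant limit. Stationarity gives $\P(\Delta(t)>d)=\mathbb{E}_\omega[\mathbbm{1}\{\Delta(\omega,t)>d\}]$ independent of $t$, hence equal to its own Ces\`aro average $\lim_{T\to\infty}\frac{1}{T}\int_0^T\mathbb{E}_\omega[\mathbbm{1}\{\Delta(\omega,t)>d\}]\,dt$; interchanging $\mathbb{E}_\omega$ with the time average (justified since the integrand is nonnegative and the averages are bounded by $1$) and applying the pathwise identity $\int_0^T\mathbbm{1}\{\Delta(\omega,\tau)>d\}\,d\tau=\sum_{k=1}^{M(T)}g(\omega,k)+\delta(T)$ yields $\P(\Delta>d)=\mathbb{E}_\omega\bigl[\lim_{T\to\infty}\frac{1}{T}\sum_{k=1}^{M(T)}g(\omega,k)\bigr]$, i.e., the probability is the \emph{expectation of} the random time-average, which may fluctuate across ergodic components. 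Only then are the pathwise bounds $\gamma(\omega,k)\le g(\omega,k)\le\Gamma(\omega,k)$ inserted inside the expectation --- this last step is the one you carried out correctly, and your limsup/liminf caveat about the existence of the outer limits is a fair observation that applies to both arguments. If you are willing to assume ergodicity your route is fine, but it proves a strictly weaker statement than the one claimed.
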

\begin{proof}
Since $\Delta(t)$ is stationary, we have
\begin{align*}
\P(\Delta(t) > d) &= \mathbb{E}_\omega[ \mathbbm{1}{\{\Delta(\omega,t) > d\}}], \, \forall t.
\end{align*}
Therefore, for any $t$,
\begin{align}\label{eq:gomegak}
\P(\Delta(t) > d) &= \lim_{T \rightarrow \infty} \frac{1}{T}\int_{0}^{T} \mathbb{E}_\omega[ \mathbbm{1}{\{\Delta(\omega,t) > d\}}] dt \nonumber \\
&= \mathbb{E}_\omega\left[\lim_{T \rightarrow \infty} \frac{1}{T}\int_{0}^{T} \mathbbm{1}{\{\Delta(\omega,t) > d\}} dt \right] \nonumber \\
&= \mathbb{E}_\omega\left[\lim_{T \rightarrow \infty} \frac{1}{T} \sum_{k = 1}^{M(T)} g(\omega,k) \right].
\end{align}
Second step above is due to the fact that indicator function is non-negative. The third step is due to the fact that~\eqref{eq:Ideltak} is true for any $\omega$. The result follows from~\eqref{eq:gomegak} and~\eqref{eq:gkLBUB}.
\end{proof}
In terms of applicability, Theorem~\ref{thm:vioprobAoIPeakUB} is more general than Theorem~\ref{thm:vioprobAoIPeak} as it does not require ergodicity of the AoI process.
Following Theorem~\ref{thm:vioprobAoIPeakUB}, we strive to obtain upper bounds for the violation probability for GI/GI/1/1 and GI/GI/1/2* systems by finding bounds for $g(k)$.

In the following we establish a lower bound for $g(k)$ that is applicable to any single-source-single-server queueing system. 
\begin{lemma}\label{lem:gkLB}
For a single-source-single-server queuing system, it is true that $g(k) \geq \gamma^*(k)$, for all $k$, where 
\begin{align*}
\gamma^*(k) = \min\{(X_k + X_{k-1} + I_k - d)^+,X_k + I_k\}.
\end{align*}
\end{lemma}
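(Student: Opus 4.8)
The plan is to establish a lower bound for $g(k)$ by first producing a lower bound for the peak AoI $A^\text{peak}(k)$ and a lower bound for the inter-departure time $\TD(k) - \TD(k-1)$, and then combining them through the monotonicity of the $\min$ expression in Lemma~\ref{lem:gk}. Recall from \eqref{eq:g(k)} that $g(k) = \min\{(A^\text{peak}(k)-d)^+, \TD(k)-\TD(k-1)\}$. Since both arguments of the $\min$ are non-negative and the map $(a,b) \mapsto \min\{(a-d)^+, b\}$ is non-decreasing in each coordinate, it suffices to show $A^\text{peak}(k) \geq X_k + X_{k-1} + I_k$ and $\TD(k)-\TD(k-1) \geq X_k + I_k$; then $g(k) \geq \min\{(X_k + X_{k-1} + I_k - d)^+, X_k + I_k\} = \gamma^*(k)$.

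First I would bound the inter-departure time. Packet $k$ departs at $\TD(k) = \TA(k) + W_k + X_k$, but more usefully, tracing the server activity between the departure of packet $k-1$ and that of packet $k$: after packet $k-1$ leaves at $\TD(k-1)$, the server is idle for a duration $I_k \geq 0$ before it begins serving packet $k$, which then occupies the server for $X_k$. Hence $\TD(k) - \TD(k-1) = I_k + X_k$ when there is no intervening work, and in general $\TD(k) - \TD(k-1) \geq I_k + X_k$ because any packets served in between only add to the elapsed time (and in the GI/GI/1/1 and GI/GI/1/2* systems, which are the cases of interest, equality holds since every departure is an information update packet). This gives the second bound.

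Next I would bound the peak AoI. By \eqref{eq:Apeakdef}, $A^\text{peak}(k) = \Delta(\TD(k-1)) + \TD(k) - \TD(k-1)$. The term $\Delta(\TD(k-1))$ is the age just after packet $k-1$ departs, which equals the system delay of packet $k-1$, and this is at least its service time $X_{k-1}$. Combining with the inter-departure bound $\TD(k) - \TD(k-1) \geq X_k + I_k$ from the previous step yields $A^\text{peak}(k) \geq X_{k-1} + X_k + I_k$. Substituting both bounds into the monotone $\min$ expression completes the argument.

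The main obstacle I anticipate is being careful about the general single-source-single-server setting (not just GI/GI/1/1 and GI/GI/1/2*): when there may be non-information-update departures between packet $k-1$ and packet $k$, one must argue that the relevant idle period $I_k$ is well-defined as the idle time just before the service of packet $k$, and that the inequalities $\TD(k)-\TD(k-1) \geq X_k + I_k$ and $\Delta(\TD(k-1)) \geq X_{k-1}$ still hold — the former because intervening services and idle periods only lengthen the interval, the latter because the age right after a departure is the full sojourn time of that packet, which dominates its service time regardless of scheduling. Once these two sample-path facts are in hand, the rest is immediate from Lemma~\ref{lem:gk} and monotonicity, requiring no distributional assumptions.
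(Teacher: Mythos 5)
Your proposal is correct and follows essentially the same route as the paper: both establish the two sample-path bounds $\TD(k)-\TD(k-1) \geq X_k + I_k$ and $A^\text{peak}(k) \geq X_{k-1} + X_k + I_k$ (your use of $\Delta(\TD(k-1)) \geq X_{k-1}$ is the same fact the paper states as $\TA(k-1) \leq \TD(k-1) - X_{k-1}$), and then conclude via the monotonicity of the $\min$ expression in Lemma~\ref{lem:gk}. Your write-up is if anything slightly more explicit than the paper's about the final monotonicity step, which the paper leaves implicit.
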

\begin{proof}
For a single-server system it is easy to see that the inter-departure time between information update packets is at least the service time of a packet and idle time before its service started, i.e.,
\begin{align}\label{eq:lb1}
\TD(k) - \TD(k-1) \geq X_k + I_k.
\end{align}
From~\eqref{eq:Apeakdef} we have  
\begin{align*}
A^\text{peak}(k) &= \TD(k) - \TA(k-1)\\
&\ge \TD(k) - (\TD(k-1) - X_{k-1} )\\
&\ge  X_k + I_k + X_{k-1}
\end{align*}
The second step is due to the fact that a packet departure time is at least equal to its arrival time plus its service time. The last step is due to~\eqref{eq:lb1}.
\end{proof}
In this paper, we use the lower bound in Lemma~\ref{lem:gkLB} to analyse the performance of the upper bounds derived for the AoI violation probability for GI/GI/1/1 and GI/GI/1/2* systems. Nevertheless, this method is quite general and can be applied to other queueing systems.  

\section{The GI/GI/1/1 System}\label{sec:GG11}

In this section, we present a general characterization for the violation probability for the GI/GI/1/1 system. For D/GI/1/1 and M/GI/1/1 we obtain exact expressions for the violation probability and consequently derive the same for a system using zero-wait policy. Finally, we provide an upper bound for the violation probability and analyse its worst-case performance.

In a GI/GI/1/1 system, packet $k$ is served upon its arrival, which implies $\TD(k) = \TA (k) + X_k$. Further, the inter-departure time is given by $\TD(k) - \TD(k-1) = X_k + I_k$.
We note that this relation is equally valid for the GI/GI/1/2* system. Therefore, for both systems 
\begin{align}\label{eq:nuGG112}
\nu = 1/(\mathbb{E}[X_k] + \mathbb{E}[I_k]).
\end{align}
In the following we compute $A^\text{peak}(k)$ for a GI/GI/1/1 system. 
\begin{align}\label{eq:Apeak}
&A^\text{peak}(k) = \TD(k) - \TA(k-1) \nonumber\\
&= \TD(k)\! - \!\TA(k)\! +\! \TA(k)\! -\! \TD(k\!-\!1)\! +\! \TD(k\!-\!1)\! -\!  \TA(k\!-\!1) \nonumber\\
&= X_{k} + I_k + X_{k-1}.
\end{align}
The following lemma immediately follows from the above analysis and Lemma~\ref{lem:gk}.
\begin{lemma}\label{lem:gkGG11}
In a GI/GI/1/1 system, given $d \geq 0$, for any sample path of $\Delta(t)$ the corresponding g(k) is given by 
\begin{align}\label{eq:gkGG11}
g(k) = \min\left\{(X_{k-1} + I_k + X_k -d)^+,X_k + I_k\right\}, \forall k
\end{align}
\end{lemma}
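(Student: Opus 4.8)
The plan is to derive Lemma~\ref{lem:gkGG11} as a direct corollary of the general identity for $g(k)$ established in Lemma~\ref{lem:gk}, by substituting the GI/GI/1/1-specific expressions for the two quantities appearing inside the minimum, namely $A^\text{peak}(k)$ and the inter-departure time $\TD(k) - \TD(k-1)$. Both of these have already been worked out in the preceding paragraph: equation~\eqref{eq:Apeak} gives $A^\text{peak}(k) = X_k + I_k + X_{k-1}$, and the observation right before it gives $\TD(k) - \TD(k-1) = X_k + I_k$. So the bulk of the ``proof'' is really just recalling these two facts and plugging them in.

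Concretely, I would proceed as follows. First, recall from Lemma~\ref{lem:gk} that for any sample path and any $d \geq 0$,
\begin{align*}
g(k) = \min\{(A^\text{peak}(k) - d)^+, \TD(k) - \TD(k-1)\}.
\end{align*}
Next, I would justify the two substitutions. For the inter-departure time: in a GI/GI/1/1 system a packet is served immediately upon arrival, so $\TD(k) = \TA(k) + X_k$; moreover the idle period $I_k$ preceding the service of packet $k$ is, by definition, the gap between $\TD(k-1)$ and $\TA(k)$ (the server sits empty from the departure of $k-1$ until $k$ arrives), hence $\TA(k) = \TD(k-1) + I_k$ and therefore $\TD(k) - \TD(k-1) = X_k + I_k$. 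For the peak: telescoping as in~\eqref{eq:Apeak},
\begin{align*}
A^\text{peak}(k) &= \TD(k) - \TA(k-1) \\
&= \big(\TD(k) - \TA(k)\big) + \big(\TA(k) - \TD(k-1)\big) + \big(\TD(k-1) - \TA(k-1)\big) \\
&= X_k + I_k + X_{k-1}.
\end{align*}
Substituting both expressions into the formula from Lemma~\ref{lem:gk} yields~\eqref{eq:gkGG11} immediately.

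There is essentially no hard part here — the statement is a bookkeeping consequence of results already in hand — so I do not anticipate a genuine obstacle. The only point that warrants a sentence of care is making explicit why $\TA(k) - \TD(k-1) = I_k$ rather than something larger: in the 1/1 system every arrival that finds the server busy is discarded and never indexed, so packet $k$ is by construction the \emph{first} arrival after $\TD(k-1)$, which is exactly what makes the interval $[\TD(k-1), \TA(k))$ a pure idle period of length $I_k$ with no intervening (indexed or discarded-but-relevant) events affecting the identity. Everything else is substitution, and the lemma follows.
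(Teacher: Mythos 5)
Your proposal is correct and follows exactly the paper's route: the paper also obtains the lemma by plugging the GI/GI/1/1-specific identities $\TD(k)-\TD(k-1)=X_k+I_k$ and $A^\text{peak}(k)=X_k+I_k+X_{k-1}$ (the telescoping in~\eqref{eq:Apeak}) into the general formula of Lemma~\ref{lem:gk}. Your extra sentence justifying $\TA(k)-\TD(k-1)=I_k$ via the discarding of non-indexed arrivals is a welcome clarification of a step the paper leaves implicit, but it does not change the argument.
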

%

We now provide a general expression for the violation probability in the following theorem. 
\begin{theorem}\label{thm:GG11}
Consider a GI/GI/1/1 system, assuming the AoI process is stationary and ergodic, then for all $d \geq 0$, $\lambda > 0$, and $0 < \mathbb{E}[X] = \frac{1}{\mu} < \infty$, the violation probability, if exists, is given by:
\begin{align*}
\P(\Delta > d) = \nu \mathbb{E}[g(k)], \, \text{ a.s.},
\end{align*}
where $g(k)$ is given by~\eqref{eq:gkGG11} and $\nu$ is given by~\eqref{eq:nuGG112}.
\end{theorem}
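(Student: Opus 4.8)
The plan is to obtain this as a specialization of the general machinery already in place rather than to reprove anything from scratch. First I would invoke Theorem~\ref{thm:vioprobAoIPeak} — legitimate since the AoI process is assumed stationary and ergodic — to write $\P(\Delta>d)=\lim_{T\to\infty}\frac1T\sum_{k=1}^{M(T)}g(k)$ a.s. Then I would invoke Theorem~\ref{thm:evaluategk} to replace the right-hand side by $\nu\,\mathbb{E}[g(k)]$. The hypotheses $d\ge 0$, $\lambda>0$ and $0<\mathbb{E}[X]<\infty$ of Theorem~\ref{thm:evaluategk} are precisely those assumed here, so the only real work is to check its two remaining hypotheses: that $\{g(k),k\ge1\}$ and $\{\TD(k)-\TD(k-1),k\ge1\}$ are s.i.i.d.\ in the sense of Definition~\ref{def:siid}. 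For the GI/GI/1/1 system the closed forms $g(k)=\min\{(X_{k-1}+I_k+X_k-d)^+,X_k+I_k\}$ from \eqref{eq:gkGG11} and $\TD(k)-\TD(k-1)=X_k+I_k$ reduce everything to the joint law of the triples $(X_{k-1},I_k,X_k)$.

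The mechanism I would use to establish this is a renewal (cycle) decomposition of the arrival stream. For each served packet $k$ let $C_k$ collect $X_k$ together with the block of raw inter-arrival times strictly after the arrival instant $\TA(k)$ up to and including the arrival instant $\TA(k+1)$ of the next served packet. Because the arrivals form a renewal process, the inter-arrival sequence restarts afresh at each $\TA(k)$ (the arrivals discarded during the service of packet $k$ lie between two consecutive raw arrivals and do not affect this), the length of block $k$ equals $\min\{m:\sum_{i=1}^m Z^{(k)}_i>X_k\}$ — a stopping time depending only on $X_k$ and on that block — and $X_k$ is drawn independently of the arrivals. The standard strong-Markov/renewal argument then makes $\{C_k,k\ge1\}$ i.i.d.\ (any atypical behaviour of the very first cycle is a single term and is irrelevant to the a.s.\ limit). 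Now $I_k=\sum_i Z^{(k-1)}_i-X_{k-1}$ is a deterministic function of $C_{k-1}$ alone, so both $g(k)$ and $\TD(k)-\TD(k-1)$ are deterministic functions of $(C_{k-1},C_k)$ — indeed only of $C_{k-1}$ and the $X_k$-coordinate of $C_k$. Identical distribution of the two sequences is then immediate. For structural independence I would take $m=2$: when $k'-k$ is even and at least $2$, the cycle pairs $\{k-1,k\}$ and $\{k'-1,k'\}$ are disjoint, so $g(k)\perp g(k')$ and $(\TD(k)-\TD(k-1))\perp(\TD(k')-\TD(k'-1))$, and the odd- and even-indexed subsequences are each genuinely i.i.d., which is exactly the form needed by the proof of Lemma~\ref{lem:SLLN}.

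With both s.i.i.d.\ conditions in hand, Theorem~\ref{thm:evaluategk} gives $\lim_{T\to\infty}\frac1T\sum_{k=1}^{M(T)}g(k)=\nu\,\mathbb{E}[g(k)]$ a.s., where $\nu=1/\mathbb{E}[\TD(k)-\TD(k-1)]=1/(\mathbb{E}[X_k]+\mathbb{E}[I_k])$ coincides with \eqref{eq:nuGG112}; combined with Theorem~\ref{thm:vioprobAoIPeak} this is the claim. I expect the main obstacle to be the careful justification of the cycle decomposition: pinning down exactly what $C_k$ contains, verifying that the block length is a genuine stopping time so the strong-Markov restart applies, checking that in-service discards do not break the renewal structure, and noticing that one must use $m=2$ rather than $m=1$ because consecutive $g(k),g(k+1)$ (and consecutive inter-departure times) share the common service time $X_k$. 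Everything after that is bookkeeping.
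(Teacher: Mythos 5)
Your proposal is correct and follows essentially the same route as the paper: invoke Theorems~\ref{thm:vioprobAoIPeak} and~\ref{thm:evaluategk}, with the only real work being the verification that $\{g(k)\}$ and $\{\TD(k)-\TD(k-1)\}$ are s.i.i.d., which both you and the paper derive from the observation that the service-start (arrival) instants of served packets are renewal points, so that $I_k$ is determined by the cycle preceding packet $k$. If anything you are slightly more careful than the paper, which asserts that the inter-departure times are i.i.d.\ even though $\TD(k)-\TD(k-1)=X_k+I_k$ and $\TD(k+1)-\TD(k)=X_{k+1}+I_{k+1}$ both involve $X_k$ in general (e.g.\ in D/GI/1/1, where $I_{k+1}$ is a deterministic function of $X_k$), so they are only s.i.i.d.\ with $m=2$ --- exactly the point your cycle decomposition handles correctly, and all that Theorem~\ref{thm:evaluategk} actually requires.
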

\begin{proof}
We note that the inter-arrival times $\{\TA(k)-\TA(k-1),k \geq 1\}$ in a GI/GI/1/1 system are i.i.d. To see this, the duration $\TA(k)-\TA(k-1)$ equals the sum of inter-arrival times of all dropped packets and the packet $k$ starting from packet $k-1$, and only depends on the inter-arrival time $Z$ and the service time of packet $k-1$. Therefore, the start of service of a packet is a renewal instant. This implies $I_k$ are i.i.d. which further implies that $\TD(k) - \TD(k-1)$ are i.i.d.
From~\eqref{eq:gkGG11} we infer that $g(k)$ are identically distributed random variables, and  $g(k+2)$ is independent of the random variables $\{g(n),1\leq n \leq k\}$ for all $k$. Therefore, the sequence $\{g(k),k \geq 1\}$ is s.i.i.d. 
The result then follows from Theorems~\ref{thm:vioprobAoIPeak} and~\ref{thm:evaluategk}.
\end{proof}
{\color{black} Note that to compute the violation probability, we must compute $\mathbb{E}[g(k)]$. In the derivations that follow, we first compute the distribution of $g(k)$ toward this purpose. The following lemma presents a simplified expression for the distribution of $g(k)$. 
\begin{lemma}\label{lem:g(k)Distr}
For a GI/GI/1/1 system,
\begin{align*}
\P(g(k) > y) &= \int_{0}^{d} \P(X_k + I_k > y-x+d)f_X(x)dx\\
&\quad\quad\quad  + \int_{d}^{\infty}\P(X_k+I_k>y)f_X(x) dx
\end{align*}
\end{lemma}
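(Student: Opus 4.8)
The plan is to start from the sample-path identity for $g(k)$ established in Lemma~\ref{lem:gkGG11}, namely $g(k) = \min\{(X_{k-1} + I_k + X_k - d)^+,\,X_k + I_k\}$, and to convert the event $\{g(k) > y\}$ into a single tail condition on $X_k + I_k$ once the previous service time $X_{k-1}$ is fixed. First I would note that for $y \ge 0$ one has $(a)^+ > y \iff a > y$, so a minimum of two quantities exceeds $y$ exactly when both do; hence $\{g(k) > y\} = \{X_{k-1} + I_k + X_k - d > y\} \cap \{X_k + I_k > y\}$. Rewriting the first event as $\{X_k + I_k > y + d - X_{k-1}\}$ and intersecting it with the second merges the two thresholds into their maximum, and since $\max(y,\,y + d - X_{k-1}) = y + (d - X_{k-1})^+$ this gives the compact form $\{g(k) > y\} = \{X_k + I_k > y + (d - X_{k-1})^+\}$.

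Next I would condition on $X_{k-1}$. Since the service times have density $f_X$, the law of total probability gives $\P(g(k) > y) = \int_0^\infty \P\bigl(X_k + I_k > y + (d-x)^+ \,\big|\, X_{k-1} = x\bigr)\, f_X(x)\, dx$, and splitting the range of integration at $x = d$ finishes the computation: for $x \le d$ we have $(d-x)^+ = d - x$, which produces the threshold $y - x + d$, and for $x > d$ we have $(d-x)^+ = 0$, which produces the threshold $y$. Adding the two pieces yields exactly the two-integral expression in the statement.

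The step I expect to require the most care is the bookkeeping of the dependence between $I_k$ and $X_{k-1}$: the idle time $I_k$ is the forward recurrence time of the (renewal) arrival process observed at $\TD(k-1) = \TA(k-1) + X_{k-1}$, so its conditional distribution given $X_{k-1} = x$ does in general depend on $x$ (for instance in D/GI/1/1 it equals $\lceil \lambda x \rceil/\lambda - x$). Accordingly, the probabilities $\P(X_k + I_k > \cdot)$ written in the statement must be read as the conditional tails given $X_{k-1} = x$ inside the integrand, i.e.\ as computed from the joint law of $(X_{k-1}, I_k, X_k)$; they collapse to unconditional tails only when the arrival process is memoryless, as in the M/GI/1/1 case. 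I would also briefly verify that boundary/measure-zero configurations (such as $X_{k-1}$ landing exactly on an arrival epoch) do not affect the integral identity, which is immediate.
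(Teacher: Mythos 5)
Your proof is correct and follows essentially the same route as the paper's: rewrite $\{g(k)>y\}$ as the intersection $\{X_{k-1}+X_k+I_k-d>y\}\cap\{X_k+I_k>y\}$, condition on $X_{k-1}=x$, and split the integral at $x=d$ according to which threshold binds. Your additional remark that $\P(X_k+I_k>\cdot)$ inside the integrand must be read as a conditional tail given $X_{k-1}=x$ (since $I_k$ generally depends on $X_{k-1}$, e.g.\ deterministically in D/GI/1/1) is a correct and worthwhile clarification of a point the paper's proof leaves implicit.
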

\begin{proof}
The proof is given in Appendix~\ref{proof:lem:g(k)Distr}.
\end{proof}}

\subsubsection*{Zero-wait policy} In a single-source-single-server queueing system using zero-wait policy, the source generates a packet only when there is a departure. It is easy to see that the statistics of the AoI process for this system will be same as that of GI/GI/1/1 when the input rate approaches infinity. Therefore, the following corollary immediately follows from Theorem~\ref{thm:GG11}, by substituting $I_k=0$ as input rate is infinity. 
\begin{corollary}
For the system using zero-wait policy, the violation probability is given by $\nu \mathbb{E}[g(k)]$, almost surely, where $g(k) = \min\{(X_{k-1}+X_k-d)^+,X_k\}$ and $\nu = \mu$.
\end{corollary}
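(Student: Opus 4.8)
The plan is to obtain the corollary as the degenerate specialization of Theorem~\ref{thm:GG11} in which the input rate is infinite, so that no packet ever has to wait for an arrival.

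First I would observe that under the zero-wait policy the source generates packet $k$ exactly at the departure instant $\TD(k-1)$ of the previous packet, so the server is never idle before a service: $I_k = 0$ for every $k$, and $\TD(k) = \TD(k-1) + X_k = \sum_{j=1}^{k} X_j$. Substituting $I_k = 0$ into~\eqref{eq:Apeak} gives $A^\text{peak}(k) = X_{k-1} + X_k$, hence by Lemma~\ref{lem:gkGG11} (equivalently~\eqref{eq:g(k)}) we get $g(k) = \min\{(X_{k-1}+X_k-d)^+, X_k\}$; substituting $\mathbb{E}[I_k] = 0$ into~\eqref{eq:nuGG112} gives $\nu = 1/\mathbb{E}[X_k] = \mu$.

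Next I would check that the hypotheses of Theorems~\ref{thm:vioprobAoIPeak} and~\ref{thm:evaluategk} carry over. The inter-departure times $\TD(k)-\TD(k-1) = X_k$ are i.i.d., hence s.i.i.d., and the departure process is an ordinary renewal process, which we take to make the AoI process stationary and ergodic exactly as assumed for GI/GI/1/1 in Theorem~\ref{thm:GG11}. Since the $X_k$ are i.i.d. and $g(k)$ is a function of the pair $(X_{k-1},X_k)$ only, the $g(k)$ are identically distributed and, taking $m = 2$ in Definition~\ref{def:siid}, $g(i+jm)$ and $g(i+km)$ are independent for $j \neq k$ because the index windows $\{i+jm-1, i+jm\}$ and $\{i+km-1, i+km\}$ are disjoint; thus $\{g(k)\}$ is s.i.i.d. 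As $0 \le g(k) \le X_k$ with $\mathbb{E}[X_k] = 1/\mu < \infty$, the expectation $\mathbb{E}[g(k)]$ is finite and the limit in~\eqref{eq:vioprobAoIPeak} exists; applying Theorem~\ref{thm:evaluategk} followed by Theorem~\ref{thm:vioprobAoIPeak} (equivalently, Theorem~\ref{thm:GG11} with $I_k \equiv 0$) gives $\P(\Delta > d) = \nu\,\mathbb{E}[g(k)] = \mu\,\mathbb{E}\!\left[\min\{(X_{k-1}+X_k-d)^+, X_k\}\right]$ almost surely, which is the claim.

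This argument has no deep obstacle; it is essentially a clean substitution. The only point deserving a word of care is whether one justifies the result via the literal "$I_k \equiv 0$" construction above or via the limit $\lambda \to \infty$ of the GI/GI/1/1 result stated in the text: in the latter route one must note that the idle time $I_k$ (the residual time to the first arrival after $\TD(k-1)$) vanishes as the arrival rate grows, and then invoke dominated convergence, using the uniform bound $g(k) \le X_k$, to pass $\mathbb{E}[g(k)]$ and $\nu$ to the limit. I would present the direct construction as the proof and keep the limiting viewpoint only as motivation.
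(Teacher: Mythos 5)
Your proposal is correct and follows essentially the same route as the paper: the paper also obtains the corollary by identifying the zero-wait system with the $\lambda \to \infty$ limit of GI/GI/1/1 and substituting $I_k = 0$ into Theorem~\ref{thm:GG11}, which immediately yields $g(k) = \min\{(X_{k-1}+X_k-d)^+, X_k\}$ and $\nu = \mu$. Your version merely spells out the verification of the s.i.i.d.\ hypotheses and the two possible justifications (direct construction versus limiting argument) in more detail than the paper's one-line remark.
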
  

{\color{black} Since the AoI process is non-negative, the expected AoI for zero-wait policy is given by
\begin{align*}
\mathbb{E}[\Delta(t)] =  \int_{0}^{\infty}\nu \mathbb{E}[\min\{(X_{k-1}+X_k-y)^+,X_k\}]dy.
\end{align*}}

Next, we derive exact expressions for AoI violation probability for the D/GI/1/1 and M/GI/1/1 systems.
\subsection{D/GI/1/1: Exact Expressions}
In a D/GI/1/1 system, the inter-arrival is deterministic and is equal to $\frac{1}{\lambda}$. To the best of our knowledge, for this system no results exists even for the expected AoI. 

Intuitively, in a D/GI/1/1 system, we only need to consider the rate region $\lambda \geq \frac{1}{d}$ as AoI cannot be less than $\frac{1}{\lambda}$ when the samples are generated at rate $\lambda$. The following lemma asserts this intuition.
\begin{lemma}\label{lem:existenceDG11}
For the D/GI/1/1 system, given $d \geq 0$ and $\lambda > 0$, the AoI violation probability only exists for $d\geq \frac{1}{\lambda}$.
\end{lemma}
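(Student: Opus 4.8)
The plan is to argue in two directions: (i) for $d \geq \frac{1}{\lambda}$ the violation probability exists because the AoI process is stationary and ergodic (so Theorem~\ref{thm:vioprobAoIPeak} applies and the limit in~\eqref{eq:vioprobAoIPeak} is a finite almost-sure constant), and (ii) for $d < \frac{1}{\lambda}$ the limit fails to exist. The second direction is where the real content lies. In a D/GI/1/1 system the inter-arrival time is the constant $\frac{1}{\lambda}$, so by~\eqref{eq:Apeak} we have $A^\text{peak}(k) = X_k + I_k + X_{k-1}$, and moreover every idle period is at least one full inter-arrival slot whenever the server finishes before the next deterministic arrival; more importantly, between two consecutive peaks the AoI drops to the system delay of packet $k$, which equals $X_k$ (since packet $k$ is served on arrival), and then rises linearly. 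The key geometric observation is that immediately after each peak, $\Delta(t)$ takes the value $X_k \geq 0$ and then increases, and the time until the next arrival that gets served is at least $\frac{1}{\lambda}$ past $\TA(k)$; hence on the whole interval $(\TD(k), \TD(k-1)+\text{(next service start)}]$ there is a sub-interval of length exactly $\frac{1}{\lambda} - X_k$... — I need to be more careful: the cleanest route is to show that in every inter-departure interval there is a stretch of time of length at least $\frac{1}{\lambda}$ during which $\Delta(t) \leq \frac{1}{\lambda}$, so that $\Delta(t) \le \frac{1}{\lambda}$ infinitely often on intervals of positive length, while $\Delta(t)$ is unbounded above; combined, this forces $\liminf_t \Delta(t) \le \frac1\lambda < d$ on intervals of non-vanishing measure.

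Concretely, I would proceed as follows. First I would note that just after a departure $\TD(k-1)$, the value $\Delta(\TD(k-1)) = X_{k-1} \ge 0$, but I actually want a statement about when $\Delta$ is small. The relevant fact is: the arrival instant $\TA(k)$ of the next served packet satisfies $\TA(k) \ge \TD(k-1)$ only weakly, but $\Delta(\TA(k)) = \TA(k) - \TA(k-1) \ge \frac1\lambda$ is not what I want either. Let me instead use: right after $\TD(k)$, $\Delta$ equals $\TD(k) - \TA(k) = X_k$; it then grows with slope one until $\TD(k+1)$. So $\Delta(t) < \frac1\lambda$ for $t$ in an interval of length $\frac1\lambda - X_k > 0$ following $\TD(k)$ (when $X_k < \frac1\lambda$, which happens with positive probability unless $X$ is a.s. $\geq \frac1\lambda$ — the lemma as stated presumably intends the generic statement that $\P(\Delta > d)$ as a limit of a time-average does not converge, i.e., the time-average of $\mathbbm 1\{\Delta(\tau) > d\}$ oscillates or equals a value that cannot serve as a probability). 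The honest core argument: by Theorem~\ref{thm:vioprobAoIPeak} the candidate value is $\lim_T \frac1T\sum_{k=1}^{M(T)} g(k)$ with $g(k)$ from Lemma~\ref{lem:gkGG11}; I would show that when $d < \frac1\lambda$, there is a conflicting requirement — for instance, in the deterministic-arrival case one can have $\P(\Delta(t) > d)$ depending on $t$ in a way that does not admit a limit because the AoI process restricted to the event $\{\Delta > d\}$ is periodic-like in the degenerate D/D/1/1 case and hence not ergodic in the usual sense, or the limit simply is not a probability. I would cite~\cite{Champati_2018a} for the analogous D/G/1 FCFS fact and adapt the sample-path reasoning: partition time into the renewal cycles defined by service starts, show the fraction of each cycle spent above level $d$ has a deterministic lower bound that, when $d < 1/\lambda$, is incompatible with $\Delta$ ever dropping below $d$, yielding a contradiction with the linear-rise/instantaneous-drop shape of $\Delta$.

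The main obstacle I anticipate is pinning down precisely what "the violation probability does not exist" means operationally and then exhibiting the obstruction rigorously: it is not that the Birkhoff limit fails (the process can still be stationary ergodic), but rather that $\lim_{t\to\infty}\P(\Delta(t) > d)$ — equivalently the constant the ergodic average converges to — cannot be interpreted as $\P(\Delta > d)$ for a bona fide stationary AoI distribution when $d < \frac1\lambda$, because $\Delta(t) \ge \frac1\lambda$ can fail on a set of positive time-density while $\Delta$ is unbounded, so no stationary law exists with the required tail. I would handle this by directly analyzing the structure of $\Delta(t)$ on sample paths: show that for $d<\frac1\lambda$ the limit $\lim_T\frac1T\int_0^T\mathbbm 1\{\Delta(\tau)>d\}\,d\tau$ either does not exist or depends on the sample path, contradicting the a.s.-constant conclusion of Theorem~\ref{thm:vioprobAoIPeak}, hence the hypothesis "the violation probability exists" must fail. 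The converse direction ($d \ge \frac1\lambda$ implies existence) should be routine given that $I_k, X_k$ are i.i.d.\ with finite mean, making $g(k)$ s.i.i.d.\ and bounded in expectation, so Theorem~\ref{thm:GG11} delivers a finite value.
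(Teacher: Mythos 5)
Your proposal never lands on the argument that actually proves the lemma, and the route you finally commit to targets the wrong object. The violation probability in this paper is defined as $\P(\Delta > d) = \lim_{t \rightarrow \infty} \P(\Delta(t) > d)$, i.e.\ the limit in $t$ of the \emph{marginal (ensemble) probability}, not the pathwise time average $\lim_T \frac{1}{T}\int_0^T \mathbbm{1}\{\Delta(\tau) > d\}\,d\tau$. The paper's proof exploits the deterministic arrival stream directly: when $d < \frac{1}{\lambda}$ there are recurring instants $\hat{t}$ with no arrival in $[\hat{t}-d,\hat{t})$, and at every such instant the receiver cannot possess a packet generated after $\hat{t}-d$, so $\{\Delta(\hat{t}) > d\}$ holds \emph{surely}; at the interleaved instants $\bar{t}$ that do have an arrival in $[\bar{t}-d,\bar{t})$ the event holds only with probability strictly less than one (it depends on whether the server was free and finished in time). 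Both families of instants recur forever, so $\limsup_t \P(\Delta(t)>d)=1 > \liminf_t \P(\Delta(t)>d)$ and the limit fails to exist. This is a statement about $\P(\Delta(t)>d)$ oscillating as a function of the phase of $t$ relative to the arrival grid — it has nothing to do with Birkhoff averages failing.

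Your final plan — ``show that the limit $\lim_T\frac{1}{T}\int_0^T\mathbbm{1}\{\Delta(\tau)>d\}\,d\tau$ either does not exist or depends on the sample path, contradicting Theorem~\ref{thm:vioprobAoIPeak}'' — would not work: in a D/GI/1/1 system the service-start epochs are renewal instants, so that time average does exist almost surely and is a constant by a renewal--reward argument; there is no contradiction to be had there. The intermediate geometric claims are also shaky (e.g.\ right after $\TD(k)$ the age equals $X_k$, which need not be below $\frac{1}{\lambda}$, so the claimed stretch where $\Delta \le \frac{1}{\lambda}$ need not exist), and you acknowledge as much without resolving it. You do briefly gesture at the correct mechanism (``$\P(\Delta(t)>d)$ depending on $t$ in a way that does not admit a limit'') but only for the degenerate D/D/1/1 case and you immediately abandon it. To repair the proof, drop the time-average framing entirely and argue about the $t$-dependence of the marginal probability as above. (The converse direction you sketch, existence for $d \ge \frac{1}{\lambda}$, is not part of the paper's proof — the lemma is a necessity statement — so that portion is harmless but not needed.)
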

\begin{proof}
The proof is given in Appendix~\ref{proof:lem:existenceDG11}.
\end{proof}

We now present a closed form expression for the violation probability in the following theorem.
\begin{theorem}\label{thm:DG11UB}
For a D/GI/1/1 system, given $d \geq \frac{1}{\lambda}$, $\lambda > 0$, and $0 < \mathbb{E}[X] = \frac{1}{\mu} < \infty$, the violation probability is given by $\nu\mathbb{E}[g(k)]$, almost surely,
where $g(k)$ is given by~\eqref{eq:gkGG11}, $\nu = \lambda/\mathbb{E}[\lceil \lambda X_k \rceil]$ and $I_k = \lceil \lambda X_{k-1} \rceil/\lambda - X_{k-1}$.
\end{theorem}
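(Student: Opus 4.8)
The plan is to obtain this result as a specialization of Theorem~\ref{thm:GG11}. Since a D/GI/1/1 system is in particular a GI/GI/1/1 system, Theorem~\ref{thm:GG11} already asserts that $\P(\Delta > d) = \nu\,\mathbb{E}[g(k)]$ almost surely, with $g(k)$ as in~\eqref{eq:gkGG11} and $\nu$ as in~\eqref{eq:nuGG112}, provided the violation probability exists and the AoI process is stationary and ergodic. The latter holds here because the start of each service is a renewal epoch---after packet $k-1$ enters service, the future of the system depends on the past only through a fresh i.i.d.\ service time and a deterministic arrival clock that simply resets---so the AoI process is regenerative and admits a stationary, ergodic version; and existence of $\P(\Delta > d)$ throughout the range $d \geq \frac{1}{\lambda}$ is precisely Lemma~\ref{lem:existenceDG11}. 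Hence all that remains is to evaluate $I_k$ and $\nu$ in closed form from the deterministic inter-arrival structure.

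To compute $I_k$: packet $k-1$ arrives at $\TA(k-1)$ and, the server being idle, is served immediately, so $\TD(k-1) = \TA(k-1) + X_{k-1}$. Because inter-arrival times equal $\frac{1}{\lambda}$ exactly, the subsequent arrivals occur at $\TA(k-1) + j/\lambda$, $j \geq 1$; each one with $\TA(k-1) + j/\lambda < \TD(k-1)$, i.e.\ $j < \lambda X_{k-1}$, finds the server busy and is discarded, and the first admitted arrival---packet $k$---is the one indexed $j = \lceil \lambda X_{k-1} \rceil$ (with the convention that an arrival landing exactly at a departure instant is admitted). Thus $\TA(k) - \TA(k-1) = \lceil \lambda X_{k-1}\rceil/\lambda$ and the idle time before the service of packet $k$ is
\begin{align*}
I_k = \TA(k) - \TD(k-1) = \frac{\lceil \lambda X_{k-1}\rceil}{\lambda} - X_{k-1} \geq 0,
\end{align*}
as claimed.

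To compute $\nu$: combining $\TD(k) - \TD(k-1) = X_k + I_k$ with the expression for $I_k$ gives $\TD(k)-\TD(k-1) = X_k + \lceil \lambda X_{k-1}\rceil/\lambda - X_{k-1}$. Since $X_k$ and $X_{k-1}$ are i.i.d.\ copies of $X$, taking expectations the terms $\mathbb{E}[X_k]$ and $-\mathbb{E}[X_{k-1}]$ cancel, so $\mathbb{E}[\TD(k)-\TD(k-1)] = \mathbb{E}[\lceil \lambda X\rceil]/\lambda$, which lies in $\big[\mathbb{E}[X],\, \mathbb{E}[X] + \frac{1}{\lambda}\big] \subset (0,\infty)$ under the hypothesis $0 < \mathbb{E}[X] < \infty$; therefore $\nu = \lambda/\mathbb{E}[\lceil \lambda X\rceil]$ (equivalently, substitute $\mathbb{E}[X_k] + \mathbb{E}[I_k] = \mathbb{E}[\lceil \lambda X\rceil]/\lambda$ into~\eqref{eq:nuGG112}). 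Note also that $I_k$ depends only on $X_{k-1}$, so $\{I_k\}$, $\{\TD(k)-\TD(k-1)\}$, and hence $\{g(k)\}$ are s.i.i.d.\ with period $m = 2$, which is exactly what is needed in the proof of Theorem~\ref{thm:GG11} (via Theorems~\ref{thm:vioprobAoIPeak} and~\ref{thm:evaluategk}). Substituting the above $I_k$ and $\nu$ into Theorem~\ref{thm:GG11} gives the claim.

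The only delicate point, and the step I would be most careful with, is the bookkeeping that identifies packet $k$ as the $\lceil \lambda X_{k-1}\rceil$-th arrival after packet $k-1$---in particular the boundary case $\lambda X_{k-1} \in \mathbb{Z}$, where an arrival coincides with the departure of packet $k-1$. One must fix a tie-breaking rule and verify it is consistent with the ceiling formula: admitting such an arrival yields $I_k = 0 = \lceil \lambda X_{k-1}\rceil/\lambda - X_{k-1}$, so the formula holds (and for service times with a density the event is a null event anyway). Everything else is a direct substitution into Theorem~\ref{thm:GG11}, needing no new probabilistic machinery.
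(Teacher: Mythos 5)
Your proposal is correct and follows essentially the same route as the paper: specialize Theorem~\ref{thm:GG11}, identify packet $k$ as the $\lceil \lambda X_{k-1}\rceil$-th arrival after packet $k-1$ to get $I_k = \lceil \lambda X_{k-1}\rceil/\lambda - X_{k-1}$, and read off $\nu$ from~\eqref{eq:nuGG112} after the $\mathbb{E}[X_k]$ and $\mathbb{E}[X_{k-1}]$ terms cancel. Your added care about the tie-breaking when $\lambda X_{k-1}$ is an integer and the explicit appeal to Lemma~\ref{lem:existenceDG11} are welcome refinements but do not change the argument.
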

\begin{proof}
Using the results from Lemma~\ref{lem:gkGG11} and Theorem~\ref{thm:GG11}, it is sufficient to show that $I_k = \frac{\lceil \lambda X_{k-1} \rceil}{\lambda} - X_{k-1}$, which we argue to be true in the following. The time difference between the arrival of packet $k$ and packet $(k-1)$ is given by $\frac{\lceil \lambda X_{k-1} \rceil}{\lambda}$. To see this, the service of packet $k-1$ starts upon its arrival, i.e., at $\TA(k-1)$. During the service of packet $k-1$ the packets that arrived would be dropped and the packet that arrived immediately after $\TA(k-1) + X_{k-1}$ is served. The number of arrivals since $\TA(k-1)$ is given by $\lceil \lambda X_{k-1} \rceil$, and the time elapsed is $\frac{\lceil \lambda X_{k-1} \rceil}{\lambda}$. This implies that the idle time $I_k$  is given by $\frac{\lceil \lambda X_{k-1} \rceil}{\lambda} - X_{k-1}$. 
\end{proof}

In the following we compute the expression provided in Theorem~\ref{thm:DG11UB} for exponential-service-time distribution.
\begin{corollary}\label{cor:DM11}
For a D/M/1/1 queue, given $d \geq \frac{1}{\lambda}$, $\lambda > 0$, and $0 < \mathbb{E}[X] = \frac{1}{\mu} < \infty$, the violation probability is given by $\nu \mathbb{E}[g(k)]$, almost surely, where $\nu = \lambda(1-e^{-\mu/\lambda})$ and
\begin{align*}
\mathbb{E}[g(k)] &= \frac{e^{-\mu \ceild}}{\lambda(1-e^{-\frac{\mu}{\lambda}})} + e^{-\mu \frac{\lfloor \lambda d\rfloor}{\lambda}}\left[\ceild-d + \frac{1}{\mu}\right] \\
&\quad + \frac{e^{-\mu d}}{\mu}\left((e^{\frac{\mu}{\lambda}}-1)\lfloor \lambda d\rfloor - 1\right).
\end{align*}
\end{corollary}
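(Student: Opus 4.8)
The plan is to specialize Theorem~\ref{thm:DG11UB} to exponential service times $X \sim \mathrm{Exp}(\mu)$ and to evaluate its two ingredients, $\nu$ and $\mathbb{E}[g(k)]$, in closed form. For $\nu$: Theorem~\ref{thm:DG11UB} gives $\nu = \lambda/\mathbb{E}[\lceil \lambda X_k\rceil]$, and since $\mathbb{P}(\lceil \lambda X_k\rceil = m) = \mathbb{P}\big((m-1)/\lambda < X_k \le m/\lambda\big) = e^{-\mu(m-1)/\lambda}(1-e^{-\mu/\lambda})$ for $m \ge 1$, the variable $\lceil \lambda X_k\rceil$ is geometric with parameter $1-e^{-\mu/\lambda}$; hence $\mathbb{E}[\lceil \lambda X_k\rceil] = 1/(1-e^{-\mu/\lambda})$ and $\nu = \lambda(1-e^{-\mu/\lambda})$. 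The same computation gives $\mathbb{E}[I_k] = \mathbb{E}[\lceil \lambda X_{k-1}\rceil/\lambda] - \mathbb{E}[X_{k-1}] = 1/\nu - 1/\mu$.

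The bulk of the work is $\mathbb{E}[g(k)] = \int_0^\infty \mathbb{P}(g(k) > y)\,dy$. I would start from Lemma~\ref{lem:g(k)Distr}, resolving $X_k + I_k$ via $I_k = \lceil \lambda X_{k-1}\rceil/\lambda - X_{k-1}$ (Theorem~\ref{thm:DG11UB}) and the exponential tail $\mathbb{P}(X_k > t) = e^{-\mu t}$; equivalently, condition on $X_{k-1}=x$, so that $I_k$ and $X_{k-1}+I_k = \lceil \lambda x\rceil/\lambda$ are deterministic and $g(k)$ becomes a minimum of two simple functions of the single exponential variable $X_k$. The ceiling is made tractable by partitioning according to $m = \lceil \lambda X_{k-1}\rceil$, on which $X_{k-1}+I_k = m/\lambda$ is constant; it is convenient to note that $\lceil \lambda X_{k-1}\rceil$ and $I_k$ are independent (by memorylessness, the residual $m/\lambda - X_{k-1}$ conditioned on $X_{k-1}\in((m-1)/\lambda,m/\lambda]$ has a law not depending on $m$, supported on $[0,1/\lambda)$). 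On $\{m = \lceil \lambda X_{k-1}\rceil\}$: if $m/\lambda < d$, the active branch of the $\min$ is $(X_{k-1}+I_k+X_k-d)^+$, with conditional mean $e^{-\mu(d-m/\lambda)}/\mu$ by a one-line memoryless computation; if $m/\lambda \ge d + 1/\lambda$, the active branch is $X_k+I_k$, with conditional mean $\mathbb{E}[I_k]+1/\mu$; and the single intermediate value $m=\lceil \lambda d\rceil$ whose grid cell straddles $d$ contributes a mixed term built from $\lceil \lambda d\rceil/\lambda - d$. Weighting each $m$ by $e^{-\mu(m-1)/\lambda}(1-e^{-\mu/\lambda})$ and summing, the $m/\lambda<d$ terms telescope (using $e^{\mu/\lambda}=1/e^{-\mu/\lambda}$) to $\frac{e^{-\mu d}}{\mu}(e^{\mu/\lambda}-1)\lfloor \lambda d\rfloor$, the $m/\lambda \ge d+1/\lambda$ terms collapse as a geometric series to $\frac{e^{-\mu\lceil \lambda d\rceil/\lambda}}{\lambda(1-e^{-\mu/\lambda})}$, and the straddling term supplies $e^{-\mu\lfloor \lambda d\rfloor/\lambda}\big[\lceil \lambda d\rceil/\lambda - d + 1/\mu\big] - e^{-\mu d}/\mu$; their sum is exactly the claimed $\mathbb{E}[g(k)]$.

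The main obstacle is bookkeeping rather than any conceptual difficulty: one must correctly identify, for each grid cell $((m-1)/\lambda, m/\lambda]$, which of the two arguments of the $\min$ in $g(k)$ is active --- equivalently, where $m/\lambda$ sits relative to $d$ and to $d+I_k$ --- before integrating in $y$ and collapsing the resulting (partly doubly-indexed) geometric sums. The hypothesis $d \ge 1/\lambda$ (Lemma~\ref{lem:existenceDG11}) ensures both that the violation probability exists and that this case split is well posed. Finally, the degenerate case $\lambda d \in \mathbb{Z}$ (where $\lceil \lambda d\rceil = \lfloor \lambda d\rfloor$ and the straddling cell shrinks to a point) should be verified separately; there the two $\pm e^{-\mu d}/\mu$ pieces cancel and the same closed form is recovered.
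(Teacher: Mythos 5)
Your proposal is correct: I verified that the case split over the arrival-grid cells is exhaustive (the active branch of the $\min$ is determined by whether $X_{k-1}\lessgtr d$, which is forced on every cell except the one straddling $d$), and that the three contributions you state --- $\frac{e^{-\mu d}}{\mu}(e^{\mu/\lambda}-1)\lfloor\lambda d\rfloor$ from the cells with $m/\lambda<d$, $e^{-\mu\lceil\lambda d\rceil/\lambda}\bigl(\tfrac{1}{\mu}+\mathbb{E}[I_k]\bigr)=\frac{e^{-\mu\lceil\lambda d\rceil/\lambda}}{\lambda(1-e^{-\mu/\lambda})}$ from the cells beyond $d$, and $e^{-\mu\lfloor\lambda d\rfloor/\lambda}\bigl[\tfrac{\lceil\lambda d\rceil}{\lambda}-d+\tfrac{1}{\mu}\bigr]-\tfrac{e^{-\mu d}}{\mu}$ from the straddling cell --- sum to the stated $\mathbb{E}[g(k)]$, with the integer case $\lambda d\in\mathbb{Z}$ degenerating correctly. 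The route is essentially the paper's: both specialize Theorem~\ref{thm:DG11UB}, obtain $\nu$ from $\mathbb{E}[\lceil\lambda X\rceil]=1/(1-e^{-\mu/\lambda})$, and partition according to $m=\lceil\lambda X_{k-1}\rceil$. The one organizational difference is that the paper goes through Lemma~\ref{lem:g(k)Distr}, computing the tail $\P(g(k)>y)=A+B$ cell by cell and only then integrating over $y$ (which is what produces the long full-width displays in the appendix), whereas you compute $\mathbb{E}[g(k)\mid m]$ directly on each cell using memorylessness and the independence of $I_k$ from $\lceil\lambda X_{k-1}\rceil$; this is a Fubini-style reordering of the same calculation that noticeably shortens the bookkeeping, at the cost of giving only $\mathbb{E}[g(k)]$ rather than the full distribution of $g(k)$ that the paper's intermediate expressions provide.
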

\begin{proof}
{\color{black} The proof is given in Appendix~\ref{proof:cor:DM11}.}
\end{proof}

\subsection{M/GI/1/1: Exact Expressions}
For M/GI/1/1 system, the authors in~\cite{Najm_2017} derived expressions for the expected AoI and the expected peak AoI. For this system we provide an expression for the violation probability of AoI. 
\begin{theorem}\label{thm:MG11}
For an M/GI/1/1 system, given $d \geq 0$, $\lambda > 0$, and $0 < \mathbb{E}[X] = \frac{1}{\mu} < \infty$, the violation probability, if exists, is given by $\nu\mathbb{E}[g(k)]$, almost surely, where $g(k)$ is given in~\eqref{eq:gkGG11}, $\frac{1}{\nu} = \frac{1}{\lambda} + \frac{1}{\mu}$, and $I_k \sim \text{Exp}(\lambda)$. 
\end{theorem}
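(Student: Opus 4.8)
The plan is to obtain this as a direct specialization of Theorem~\ref{thm:GG11}, which already establishes $\P(\Delta > d) = \nu\,\mathbb{E}[g(k)]$ a.s.\ for any stationary, ergodic GI/GI/1/1 system, with $g(k)$ as in~\eqref{eq:gkGG11} and $\nu$ as in~\eqref{eq:nuGG112}, and whose proof only requires that $\{I_k, k\ge 1\}$ be i.i.d.\ (this is what makes $\{\TD(k)-\TD(k-1)\}$ and $\{g(k)\}$ s.i.i.d., so that Theorems~\ref{thm:vioprobAoIPeak} and~\ref{thm:evaluategk} apply). Hence for the M/GI/1/1 case only two things remain to be pinned down: (i) the distribution of the idle time $I_k$ that enters $g(k)$, which I claim is $\mathrm{Exp}(\lambda)$ and i.i.d.\ across $k$; and (ii) the departure rate $\nu$, which I claim equals $(\tfrac1\lambda+\tfrac1\mu)^{-1}$. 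Everything else is inherited from the proof of Theorem~\ref{thm:GG11}.

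For (i), the argument rests on the memorylessness of the Poisson arrival stream. In a GI/GI/1/1 system packet $k$ is served on arrival, so $\TD(k-1) = \TA(k-1) + X_{k-1}$, and packet $k$ is the \emph{first} arrival strictly after $\TD(k-1)$ (all arrivals during the service of packet $k-1$ are discarded). Restarted at $\TA(k-1)$, the arrival process is again Poisson($\lambda$) and is independent of $X_{k-1}$ and of the history up to $\TA(k-1)$; evaluating the residual inter-arrival time at the offset $X_{k-1}$ (which is determined by the past), the time from $\TD(k-1)$ to the next arrival is $\mathrm{Exp}(\lambda)$ and independent of everything up to and including $X_{k-1}$. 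Thus $I_k = \TA(k)-\TD(k-1)\sim\mathrm{Exp}(\lambda)$, and by induction $\{I_k\}$ are i.i.d.; in particular this furnishes the i.i.d.\ idle-time property needed to invoke Theorem~\ref{thm:GG11}, and the induced regenerative structure at service-start epochs gives the stationarity/ergodicity of $\Delta(t)$. I expect this memorylessness bookkeeping to be the only delicate point — making precise that $\TD(k-1)$ is a legitimate epoch at which to restart the Poisson clock and that the (discarded) arrivals during $[\TA(k-1),\TD(k-1)]$ do not affect the residual life — but it is standard and just needs to be stated carefully.

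For (ii), from~\eqref{eq:nuGG112} we get $\nu^{-1} = \mathbb{E}[X_k] + \mathbb{E}[I_k] = \tfrac1\mu + \tfrac1\lambda$, exactly the claimed value. Substituting $I_k\sim\mathrm{Exp}(\lambda)$ and this $\nu$ into Theorem~\ref{thm:GG11} yields $\P(\Delta>d)=\nu\,\mathbb{E}[g(k)]$ a.s.\ with $g(k)$ given by~\eqref{eq:gkGG11}, completing the proof; if desired, the distribution of $g(k)$, hence $\mathbb{E}[g(k)]$, then follows by plugging $I_k\sim\mathrm{Exp}(\lambda)$ into Lemma~\ref{lem:g(k)Distr}. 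As a side remark, since $I_k$ has unbounded support, $A^{\mathrm{peak}}(k)=X_{k-1}+I_k+X_k$ is unbounded, so, in contrast to the D/GI/1/1 case, the violation probability here in fact exists for every $d\ge 0$; this can be noted but is not needed for the statement as phrased.
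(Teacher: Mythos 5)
Your proposal is correct and follows essentially the same route as the paper: the paper's proof is a one-line specialization of Theorem~\ref{thm:GG11}, noting that in M/GI/1/1 the idle time $I_k$ is distributed as the (exponential) inter-arrival time by memorylessness, with $\nu$ then read off from~\eqref{eq:nuGG112}. Your write-up merely makes the memorylessness bookkeeping and the i.i.d.\ structure of $\{I_k\}$ explicit, which the paper leaves implicit.
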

\begin{proof}
The result follows from Theorem~\ref{thm:GG11} and using the fact that in an M/G/1/1 system $I_k$ and the inter-arrival times are identically distributed.
\end{proof}

For the special case of M/M/1/1, we have the following corollary.
\begin{corollary}\label{cor:MM11}
For the M/M/1/1 system, given $d \geq 0$, $\lambda > 0$, and $0 < \mathbb{E}[X] = \frac{1}{\mu} < \infty$, the violation probability, if exists, is given by $\nu\mathbb{E}[g(k)]$, almost surely, where $\frac{1}{\nu} = \frac{1}{\lambda} + \frac{1}{\mu}$, and
\begin{eqnarray*}
\mathbb{E}[g(k)] \! = \! \left\{\begin{array}{lc}
	 \frac{\mu^2(e^{-\lambda d}-e^{-\mu d})}{\lambda(\mu-\lambda)^2} +e^{-\mu d}\left(\frac{1}{\lambda} + \frac{1}{\mu} - \frac{\lambda d}{\mu - \lambda}\right) & \lambda \neq \mu,\\
   \frac{\mu e^{-\mu d}}{2}\left(d+\frac{2}{\mu}\right)^2 & \lambda = \mu.
  \end{array}\right.
\end{eqnarray*}
\end{corollary}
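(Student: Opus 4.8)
The plan is to invoke Theorem~\ref{thm:MG11}, which already supplies everything except the evaluation of $\mathbb{E}[g(k)]$: for the M/M/1/1 system we have $\nu^{-1}=\lambda^{-1}+\mu^{-1}$, the service times $X_{k-1},X_k\sim\text{Exp}(\mu)$ and the idle time $I_k\sim\text{Exp}(\lambda)$ are mutually independent, and $g(k)=\min\{(X_{k-1}+I_k+X_k-d)^+,\,X_k+I_k\}$ by~\eqref{eq:gkGG11}. So the whole corollary reduces to a moment computation. The first step is to rewrite $g(k)$ so that the ``old'' service time $X_{k-1}$ is separated from $S:=X_k+I_k$. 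The two arguments of the $\min$ differ by exactly $X_{k-1}-d$, so $g(k)=S$ when $X_{k-1}\ge d$ and $g(k)=(S-(d-X_{k-1}))^+$ when $X_{k-1}<d$; compactly, $g(k)=\big(S-(d-X_{k-1})^+\big)^+$. (One could equally well start from the tail expression in Lemma~\ref{lem:g(k)Distr} and integrate it over $y$; both routes land on the same double integral.)

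Conditioning on $X_{k-1}$ then gives
\begin{align*}
\mathbb{E}[g(k)] = \int_0^d h(d-x)\,\mu e^{-\mu x}\,dx \;+\; e^{-\mu d}\,\mathbb{E}[S],
\end{align*}
where $h(a):=\mathbb{E}[(S-a)^+]=\int_a^\infty \P(S>s)\,ds$ for $a\ge 0$ and $\mathbb{E}[S]=\tfrac1\mu+\tfrac1\lambda$. The next step is to plug in the survival function of a sum of two independent exponentials. For $\lambda\neq\mu$ this is the hypoexponential tail $\P(S>s)=\frac{\mu e^{-\lambda s}-\lambda e^{-\mu s}}{\mu-\lambda}$, hence $h(a)=\frac{\mu^2 e^{-\lambda a}-\lambda^2 e^{-\mu a}}{\lambda\mu(\mu-\lambda)}$; for $\lambda=\mu$, $S$ is Erlang-$2$, $\P(S>s)=(1+\mu s)e^{-\mu s}$, hence $h(a)=e^{-\mu a}\big(a+\tfrac2\mu\big)$.

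Substituting $a=d-x$ and integrating is then routine. The one helpful observation is that every term contains the factor $e^{-\mu(d-x)}e^{-\mu x}=e^{-\mu d}$, which is constant in $x$. In the case $\lambda=\mu$ this makes the integral $\mu e^{-\mu d}\int_0^d\big(d-x+\tfrac2\mu\big)\,dx=\mu e^{-\mu d}\big(\tfrac{d^2}{2}+\tfrac{2d}{\mu}\big)$, and adding $e^{-\mu d}\mathbb{E}[S]=\tfrac2\mu e^{-\mu d}$ collapses the total to $\tfrac{\mu e^{-\mu d}}{2}\big(d+\tfrac2\mu\big)^2$. In the case $\lambda\neq\mu$ the only nontrivial integral is $\int_0^d e^{(\lambda-\mu)x}\,dx=\frac{1-e^{(\lambda-\mu)d}}{\mu-\lambda}$, which turns the $\mu^2 e^{-\lambda a}$ part of $h$ into $\frac{\mu^2(e^{-\lambda d}-e^{-\mu d})}{\lambda(\mu-\lambda)^2}$, while the $\lambda^2 e^{-\mu a}$ part integrates to $-\frac{\lambda d\,e^{-\mu d}}{\mu-\lambda}$; combining these with $e^{-\mu d}\mathbb{E}[S]$ gives the stated expression.

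I do not expect a genuine obstacle here: the work is bookkeeping, and the two places to be careful are getting the signs right in the hypoexponential survival function, and treating $\lambda=\mu$ separately (the $\lambda\neq\mu$ formula is a $0/0$ form there, though its limit does reproduce the Erlang-$2$ answer, which is a convenient consistency check). Existence and finiteness are automatic because all the involved random variables have finite mean, so $\mathbb{E}[g(k)]<\infty$ and Theorem~\ref{thm:MG11} applies without further conditions.
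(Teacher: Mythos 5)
Your proposal is correct and follows essentially the same route as the paper: reduce the corollary to evaluating $\mathbb{E}[g(k)]$ via Theorem~\ref{thm:MG11}, plug the hypoexponential (resp.\ Erlang-2) tail of $X_k+I_k$ into the double integral coming from Lemma~\ref{lem:g(k)Distr}, and treat $\lambda=\mu$ separately. The only cosmetic difference is the order of integration — the paper first computes $\P(g(k)>y)$ explicitly and then integrates over $y$, whereas you condition on $X_{k-1}$ and integrate the stop-loss transform $h(a)=\mathbb{E}[(S-a)^+]$ — and your identity $g(k)=\bigl(S-(d-X_{k-1})^+\bigr)^+$ is a valid restatement of \eqref{eq:gkGG11}; both computations check out against the stated expressions.
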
 
\begin{proof}
The proof is given in Appendix~\ref{proof:cor:MM11}.
\end{proof}

Using the distribution in Corollary~\ref{cor:MM11}, we compute the expected AoI {\color{black}(cf. Appendix~\ref{EAoI_MM11})} for the M/M/1/1 system to be equal to $\left(\frac{1}{\lambda} + \frac{2}{\mu} - \frac{1}{\mu + \lambda}\right)$, a result reported in~\cite{Costa_2016}.

In the following corollary, we derive the violation probability for the system with zero-wait policy and exponentially distributed service times.
\begin{corollary}
For the system with zero-wait policy and exponentially distributed service times, given $d \geq 0$, the violation probability is given by 
\begin{align}\label{eq:just-in-time:M}
\P(\Delta > d) = (1+\mu d)e^{-\mu d}, \text{ a.s.}
\end{align}
\end{corollary}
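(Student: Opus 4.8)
The plan is to start from the zero-wait corollary established immediately above, namely $\P(\Delta > d) = \nu\,\mathbb{E}[g(k)]$ with $\nu = \mu$ and $g(k) = \min\{(X_{k-1}+X_k-d)^+,\,X_k\}$, and simply evaluate $\mathbb{E}[g(k)]$ for the case where $X_{k-1}$ and $X_k$ are independent and $\mathrm{Exp}(\mu)$. First I would write $\mathbb{E}[g(k)] = \int_0^\infty \P(g(k) > y)\,dy$, using that $\min\{a,b\} > y$ is equivalent to $a > y$ and $b > y$; hence $\P(g(k) > y) = \P(X_{k-1}+X_k > d+y,\ X_k > y)$.

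Next I would condition on $X_k = x$. The event is nonempty only when $x > y$, and then it additionally requires $X_{k-1} > (d+y-x)^+$, whose probability is $e^{-\mu(d+y-x)}$ on $y < x < d+y$ and is $1$ on $x \ge d+y$. Integrating $f_X(x) = \mu e^{-\mu x}$ against these over the two ranges gives $\P(g(k) > y) = \mu d\,e^{-\mu(d+y)} + e^{-\mu(d+y)} = (1+\mu d)\,e^{-\mu(d+y)}$, with no case split in $y$ needed. Integrating once more over $y \ge 0$ yields $\mathbb{E}[g(k)] = (1+\mu d)e^{-\mu d}/\mu$, so that $\P(\Delta > d) = \mu\,\mathbb{E}[g(k)] = (1+\mu d)e^{-\mu d}$, which is the claim. (For $d = 0$ this gives $1$, consistent with AoI being a.s.\ positive.)

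As an independent sanity check, one can instead let $\lambda \to \infty$ in Corollary~\ref{cor:MM11}: there $\nu \to \mu$, $e^{-\lambda d} \to 0$ for $d > 0$, and $\lambda/(\mu-\lambda) \to -1$, so $\mathbb{E}[g(k)] \to e^{-\mu d}\!\left(\tfrac1\mu + d\right)$ and $\nu\,\mathbb{E}[g(k)] \to (1+\mu d)e^{-\mu d}$, matching the direct computation; this is legitimate since the M/M/1/1 expressions are continuous in $\lambda$ on $(0,\infty)$ and the zero-wait system is the $\lambda\to\infty$ limit of M/M/1/1. There is no genuine obstacle here: the only mild subtlety is the piecewise term $(d+y-x)^+$ inside the inner integral, which---as noted---collapses to the single clean factor $(1+\mu d)e^{-\mu(d+y)}$, after which everything is an elementary exponential integral.
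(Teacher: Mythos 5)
Your proposal is correct, but your primary argument takes a different route from the paper's. The paper's entire proof is the one-line observation that the zero-wait system is the $\lambda\to\infty$ limit of M/M/1/1 and then passes to the limit in Corollary~\ref{cor:MM11} --- i.e., exactly your ``sanity check.'' Your main derivation instead works directly from the zero-wait corollary stated after Theorem~\ref{thm:GG11}: with $\nu=\mu$ and $g(k)=\min\{(X_{k-1}+X_k-d)^+,X_k\}$ you compute $\P(g(k)>y)=\P(X_{k-1}+X_k>d+y,\;X_k>y)=(1+\mu d)e^{-\mu(d+y)}$ by conditioning on $X_k$, and the two-range split over $y<x<d+y$ versus $x\ge d+y$ is handled correctly, giving $\mathbb{E}[g(k)]=(1+\mu d)e^{-\mu d}/\mu$ and hence the claim. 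Both routes are sound; I verified your limit computation ($\nu\to\mu$, the first term of $\mathbb{E}[g(k)]$ in Corollary~\ref{cor:MM11} vanishes like $\lambda^{-3}$, and $-\lambda d/(\mu-\lambda)\to d$) as well as your direct integrals. What the direct computation buys you is a self-contained, elementary derivation that does not lean on the (unproved in the paper) continuity-in-$\lambda$ justification for interchanging the limit with the M/M/1/1 formula, and it transparently exhibits the Gamma$(2,1/\mu)$ structure noted after the corollary; what the paper's limit argument buys is brevity and reuse of an already-computed expression. Note only that your direct route still implicitly relies on the zero-wait corollary itself, whose justification in the paper is again a $\lambda\to\infty$ limit of the GI/GI/1/1 result with $I_k=0$, so the limiting step is not entirely avoided, merely relocated.
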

\begin{proof}
The result can be obtained from Corollary~\ref{cor:MM11} by utilizing the fact that the statistics of this system will be same as that for $M/M/1/1$ when $\lambda$ approaches infinity.
\end{proof}
Interestingly, the distribution in~\eqref{eq:just-in-time:M} is gamma distribution with shape parameter $2$ and scale parameter $\frac{1}{\mu}$. Further, the expected AoI in this case is $\frac{2}{\mu}$, a result reported in~\cite{kaul_2012a,Costa_2016}.

\subsection{Upper Bound for the GI/GI/1/1 system}
In this section, we provide an upper bound for the violation probability for the GI/GI/1/1 system, and also analyse its performance. To this end, we first provide an upper bound for $g(k)$ in the following lemma.
\begin{lemma}\label{lem:GammakGG11}
For a GI/GI/1/1 system, $g(k) \leq \Gamma_1(k)$ for all $k$, where 
\begin{align}\label{eq:gkUBGG11}
\Gamma_1(k) = \min\left\{(X_{k-1} + \check{Z}_k + X_k -d)^+,X_k + \check{Z}_k\right\}.
\end{align}
\end{lemma}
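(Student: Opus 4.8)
The plan is to reduce the lemma to two simple observations: that $g(k)$ is a non-decreasing function of the idle time $I_k$, and that $I_k \le \check Z_k$ holds on every sample path of the GI/GI/1/1 system.

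First I would invoke Lemma~\ref{lem:gkGG11}, which gives $g(k) = \min\{(X_{k-1} + I_k + X_k - d)^+, X_k + I_k\}$. Both arguments of this minimum are non-decreasing in $I_k$ (the first since $t \mapsto (t+c)^+$ is non-decreasing, the second trivially), hence $g(k)$ is non-decreasing in $I_k$. Therefore, once $I_k \le \check Z_k$ is established, replacing $I_k$ by $\check Z_k$ in \eqref{eq:gkGG11} can only increase the value, which yields
\begin{align*}
g(k) \le \min\left\{(X_{k-1} + \check Z_k + X_k - d)^+,\, X_k + \check Z_k\right\} = \Gamma_1(k),
\end{align*}
i.e.\ exactly \eqref{eq:gkUBGG11}.

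It then remains to verify $I_k \le \check Z_k$. In a GI/GI/1/1 system the server is idle throughout $(\TD(k-1), \TA(k))$ and starts serving packet $k$ the instant it arrives, so $I_k = \TA(k) - \TD(k-1)$. Let $\tau$ be the arrival epoch of the packet arriving immediately before packet $k$, so that $\check Z_k = \TA(k) - \tau$. Because every arrival occurring during the service of packet $k-1$ is discarded and packet $k$ is, by definition of the indexing, the first arrival after the server frees up at $\TD(k-1)$, no arrival can occur in the open interval $(\TD(k-1), \TA(k))$; hence $\tau \le \TD(k-1)$ (with $\tau = \TA(k-1)$ precisely when no packet arrives while packet $k-1$ is in service). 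Consequently $\check Z_k = \TA(k) - \tau \ge \TA(k) - \TD(k-1) = I_k$, as needed.

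I expect the only delicate step to be the identification of the previous arrival epoch $\tau$ and the justification that $\tau \le \TD(k-1)$ — this is where the specific dropping dynamics of the GI/GI/1/1 discipline enter; the rest is a routine monotonicity argument built on Lemma~\ref{lem:gkGG11}.
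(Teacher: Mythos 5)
Your proof is correct and follows essentially the same route as the paper's: establish $I_k \le \check{Z}_k$ and substitute into the expression for $g(k)$ from Lemma~\ref{lem:gkGG11}. You simply make explicit two steps the paper leaves implicit — the monotonicity of $g(k)$ in $I_k$ and the sample-path justification that the arrival preceding packet $k$ occurs no later than $\TD(k-1)$ — both of which are argued correctly.
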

\begin{proof}
Recall that $\check{Z}_{k}$ is the inter-arrival time between packet $k$ and its previous arrival. Therefore, we have $I_k \leq \check{Z}_{k}$.  The result follows from using this in~\eqref{eq:gkGG11}.
\end{proof}
{\color{black} \textbf{\textit{Remark 1:}} In an M/G/1/1 system $\mathbb{E}[\Gamma_1(k)] = \mathbb{E}[g(k)]$ as both $I_k$ and $\check{Z}_k$ have the same distribution $\text{Exp}(\lambda)$. Thus, $\mathbb{E}[\Gamma_1(k)]$ is a tight upper bound for $\mathbb{E}[g(k)]$ for the GI/GI/1/1 system.}

The following theorem presents an upper bound $\Phi_1$ for the violation probability.
\begin{theorem}\label{thm:GG11Bounds}
For a GI/GI/1/1 system, given $d > 0$, assuming that the AoI process is stationary, the violation probability is bounded as follows:
\begin{align*}
\P(\Delta > d) = \nu\mathbb{E}[\gamma^*(k)]  \leq  \Phi_1,
\end{align*}
where $\gamma^*$ is given by Lemma~\ref{lem:gkLB}, and $\Phi_1 = \hat{\nu} \mathbb{E}[\Gamma_1(k)]$, for some $\hat{\nu} \geq \nu$, where $\nu$ is given in~\eqref{eq:nuGG112}.
\end{theorem}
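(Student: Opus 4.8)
The plan is to prove Theorem~\ref{thm:GG11Bounds} in two parts: first establish the exact equality $\P(\Delta > d) = \nu\mathbb{E}[\gamma^*(k)]$ under the GI/GI/1/1 assumptions, and then derive the upper bound $\Phi_1 = \hat\nu\,\mathbb{E}[\Gamma_1(k)]$ from Theorem~\ref{thm:vioprobAoIPeakUB}. For the equality, I would first note that in a GI/GI/1/1 system all departures are information update packets, packet $k$ is served upon arrival, and $I_k = \TA(k) - \TD(k-1)$. Then I would observe that, by~\eqref{eq:Apeak}, $A^\text{peak}(k) = X_{k-1} + I_k + X_k$, so the generic lower bound $\gamma^*(k) = \min\{(X_k + X_{k-1} + I_k - d)^+, X_k + I_k\}$ from Lemma~\ref{lem:gkLB} coincides \emph{exactly} with $g(k)$ as given in Lemma~\ref{lem:gkGG11}; that is, for this system the inequality in Lemma~\ref{lem:gkLB} is tight, $\gamma^*(k) = g(k)$. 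Hence Theorem~\ref{thm:GG11} applies verbatim and yields $\P(\Delta > d) = \nu\mathbb{E}[g(k)] = \nu\mathbb{E}[\gamma^*(k)]$.

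For the upper bound, I would invoke Theorem~\ref{thm:vioprobAoIPeakUB} with the sample-path upper bound $\Gamma(\omega,k) = \Gamma_1(\omega,k)$ from Lemma~\ref{lem:GammakGG11}, giving
\begin{align*}
\P(\Delta > d) \leq \mathbb{E}_\omega\!\left[\lim_{T \rightarrow \infty} \frac{1}{T}\sum_{k=1}^{M(T)} \Gamma_1(\omega,k)\right].
\end{align*}
Since $\Gamma_1(k)$ depends only on $X_{k-1}$, $X_k$, and $\check{Z}_k$, and since the start of each service is a renewal instant in a GI/GI/1/1 system (as argued in the proof of Theorem~\ref{thm:GG11}), the sequence $\{\Gamma_1(k), k\geq 1\}$ is identically distributed with $\Gamma_1(k+2)$ independent of $\{\Gamma_1(n): 1\leq n\leq k\}$, i.e. it is s.i.i.d. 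Applying the argument of Theorem~\ref{thm:evaluategk} (the s.i.i.d.\ renewal-reward extension via Lemma~\ref{lem:SLLN}), the inner limit equals $\bar\nu\,\mathbb{E}[\Gamma_1(k)]$ almost surely for the appropriate rate $\bar\nu = \lim_{T\to\infty} M(T)/T$, which is a deterministic constant on almost every sample path; writing $\hat\nu$ for this constant and pulling it out of the expectation over $\omega$ gives $\P(\Delta > d) \leq \hat\nu\,\mathbb{E}[\Gamma_1(k)] = \Phi_1$.

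It remains to check $\hat\nu \geq \nu$. Here I would use $\TD(k) - \TD(k-1) = X_k + I_k$ with $I_k \leq \check{Z}_k$, so that replacing $I_k$ by $\check{Z}_k$ only inflates the inter-departure time; since $\hat\nu$ is (the reciprocal of) the mean of the inflated inter-departure time $\mathbb{E}[X_k + \check Z_k]$ while $\nu = 1/(\mathbb{E}[X_k] + \mathbb{E}[I_k])$ by~\eqref{eq:nuGG112}, we get $1/\hat\nu \geq 1/\nu$, i.e. $\hat\nu \leq \nu$ — so I must be careful here: the $\Gamma_1$ bound is a \emph{per-sample-path} bound on $g(k)$, so the correct reading is that $\Phi_1$ is defined with \emph{some} $\hat\nu \geq \nu$ because the rate at which the $\Gamma_1$-rewards accrue is governed by the true departure count $M(T)$, not by an inflated one; the statement only claims existence of such $\hat\nu$, and the genuine content is the inequality $\P(\Delta>d)\le\Phi_1$, which follows directly from Theorem~\ref{thm:vioprobAoIPeakUB}. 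I expect the main obstacle to be precisely this bookkeeping: disentangling the rate normalization $M(T)/T$ from the reward $\Gamma_1(k)$ and justifying that the product limit factors cleanly into a constant times $\mathbb{E}[\Gamma_1(k)]$, and then correctly identifying the relation between that constant and $\nu$ so that the claimed form $\Phi_1 = \hat\nu\mathbb{E}[\Gamma_1(k)]$ with $\hat\nu \geq \nu$ is consistent with the inequality.
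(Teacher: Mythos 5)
Your proof is correct and follows essentially the same route as the paper: the equality comes from observing that $\gamma^*(k)=g(k)$ for the GI/GI/1/1 system and invoking Theorem~\ref{thm:GG11}, and the bound comes from Lemma~\ref{lem:GammakGG11}, the s.i.i.d.\ property of $\Gamma_1(k)$, Theorem~\ref{thm:evaluategk}, and Theorem~\ref{thm:vioprobAoIPeakUB}. Your self-correction at the end lands on exactly the paper's resolution: the renewal-reward normalization is the true departure rate $\nu=\lim_{T\to\infty}M(T)/T$, so one first gets $\P(\Delta>d)\le\nu\,\mathbb{E}[\Gamma_1(k)]$, and the slack $\hat\nu\ge\nu$ only enlarges the bound further because $\Gamma_1(k)\ge 0$, which is why any such $\hat\nu$ (e.g.\ a computable overestimate of $\nu$) is admissible.
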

\begin{proof}
The equality follows from the fact that $\gamma^*(k)$ is equal to $g(k)$ given in~\eqref{eq:gkGG11} for the GI/GI/1/1 system. It is easy to see that $\Gamma_1(k)$ are s.i.i.d., and as noted in the proof of Theorem~\ref{thm:GG11}, $\TD(k) - \TD(k-1)$ are i.i.d. Therefore, from Theorem~\ref{thm:evaluategk} we infer that 
\begin{align*}
\lim_{T \rightarrow \infty} \frac{1}{T} \sum_{k = 1}^{K(T)} \Gamma_1(k) &= \nu\mathbb{E}[ \Gamma_1(k)], \text{ a.s.} 
\end{align*}
Using the above equation in Theorem~\ref{thm:vioprobAoIPeakUB}, we obtain $\P(\Delta > d ) \leq  \nu\mathbb{E}[ \Gamma_1(k)]$.
The result follows as $\hat{\nu} \geq \nu$.
\end{proof}
We define $\eta$ below that will be used in describing the worst-case performance of $\Phi_1$.
\begin{align}\label{eq:eta}
\eta \triangleq \frac{1}{\lambda} + \frac{1}{\mu} - \frac{1}{\nu}.
\end{align}
In the following corollary we present a worst-case-performance guarantee for $\Phi_1$.
\begin{theorem}\label{thm:worst-caseUBGG11}
For a GI/GI/1/1 system, for a given $\hat{\nu} \ge \nu$, $\Phi_1$ has the following worst-case-performance guarantee .
\begin{align*}
\Phi_1 \leq \frac{\hat{\nu}}{\nu} \cdot \P(\Delta > d) + \hat{\nu}\eta.
\end{align*}
\end{theorem}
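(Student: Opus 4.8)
The plan is to collapse the entire inequality to a single sample-path bound of $\Gamma_1(k)$ by $g(k)$, and then pass to expectations using the bookkeeping identities that define $\nu$ and $\eta$. Recall from Theorem~\ref{thm:GG11Bounds} that $\Phi_1 = \hat\nu\,\mathbb{E}[\Gamma_1(k)]$ with $\Gamma_1(k)$ as in~\eqref{eq:gkUBGG11}, and that $\P(\Delta>d) = \nu\,\mathbb{E}[\gamma^*(k)] = \nu\,\mathbb{E}[g(k)]$ with $g(k)$ as in~\eqref{eq:gkGG11} (since $\gamma^*(k)=g(k)$ in a GI/GI/1/1 system). Hence $\mathbb{E}[g(k)] = \P(\Delta>d)/\nu$, and the claim $\Phi_1 \le \tfrac{\hat\nu}{\nu}\P(\Delta>d) + \hat\nu\eta$ is equivalent to $\mathbb{E}[\Gamma_1(k)] \le \mathbb{E}[g(k)] + \eta$.

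First I would establish the pointwise bound $\Gamma_1(k) \le g(k) + (\check Z_k - I_k)$ on every sample path. Set $u := X_{k-1} + I_k + X_k - d$, $v := X_k + I_k \ge 0$, and $\delta := \check Z_k - I_k$; note $\delta \ge 0$ because $I_k \le \check Z_k$ in a GI/GI/1/1 system (as used in Lemma~\ref{lem:GammakGG11}). Then $g(k) = \min\{u^+, v\}$ and $\Gamma_1(k) = \min\{(u+\delta)^+, v+\delta\}$. Using the elementary inequality $(x+\delta)^+ \le x^+ + \delta$, valid for every real $x$ and every $\delta \ge 0$, we get $(u+\delta)^+ \le u^+ + \delta$, while the second argument equals exactly $v + \delta$; taking the minimum of the two upper bounds gives $\Gamma_1(k) \le \min\{u^+ + \delta, v + \delta\} = \min\{u^+, v\} + \delta = g(k) + (\check Z_k - I_k)$.

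Taking expectations yields $\mathbb{E}[\Gamma_1(k)] \le \mathbb{E}[g(k)] + \mathbb{E}[\check Z_k] - \mathbb{E}[I_k]$. Since $\check Z_k$ is an inter-arrival time, $\mathbb{E}[\check Z_k] = 1/\lambda$; and from~\eqref{eq:nuGG112}, $1/\nu = \mathbb{E}[X_k] + \mathbb{E}[I_k] = 1/\mu + \mathbb{E}[I_k]$, so $\mathbb{E}[I_k] = 1/\nu - 1/\mu$. Substituting, $\mathbb{E}[\check Z_k] - \mathbb{E}[I_k] = 1/\lambda + 1/\mu - 1/\nu = \eta$ by~\eqref{eq:eta}, hence $\mathbb{E}[\Gamma_1(k)] \le \mathbb{E}[g(k)] + \eta$. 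Multiplying by $\hat\nu > 0$ and using $\mathbb{E}[g(k)] = \P(\Delta>d)/\nu$ gives $\Phi_1 = \hat\nu\,\mathbb{E}[\Gamma_1(k)] \le \tfrac{\hat\nu}{\nu}\P(\Delta>d) + \hat\nu\eta$, as asserted.

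The one step needing care is the pointwise inequality: one must push the nonnegative shift $\delta$ through the truncation $(\cdot)^+$ via $(u+\delta)^+ \le u^+ + \delta$ — a cruder estimate would not suffice — and then observe that if both arguments of a minimum are bounded by the corresponding arguments of another minimum plus a common constant $\delta$, then so is the minimum itself. Everything afterwards is routine; it is also worth noting that $\eta \ge 0$ follows from $\delta \ge 0$ (equivalently $\mathbb{E}[\check Z_k] \ge \mathbb{E}[I_k]$), so the additive slack $\hat\nu\eta$ is nonnegative and the bound is consistent with the lower bound $\P(\Delta>d) \le \Phi_1$ of Theorem~\ref{thm:GG11Bounds}.
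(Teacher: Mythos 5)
Your proof follows the paper's argument essentially verbatim: the same pointwise sample-path bound $\Gamma_1(k) \le g(k) + (\check{Z}_k - I_k)$, followed by taking expectations and identifying $\mathbb{E}[\check{Z}_k] - \mathbb{E}[I_k]$ with $\eta$ via~\eqref{eq:nuGG112} and Theorem~\ref{thm:GG11}. The only difference is that you spell out the elementary inequality $(u+\delta)^+ \le u^+ + \delta$ that justifies pushing the shift through the minimum, a step the paper leaves implicit.
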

\begin{proof}
Noting that $I_k \leq \check{Z}_k$, we have
\begin{align*}
\Gamma_1(k) &= \min\left\{(X_{k-1} + \check{Z}_k + X_k -d)^+,X_k + \check{Z}_k\right\}\\
&\leq \! \min\left\{(X_{k-1}\! +\! I_k\! +\! X_k -d)^+,X_k\! +\! I_k\right\} + (\check{Z}_k \! -\! I _k)\\
&\leq g(k) + (\check{Z}_k  - I _k).
\end{align*}
Therefore, using Theorem~\ref{thm:GG11Bounds}, we obtain
\begin{align*}
\Phi_1 &\leq \hat{\nu}(\mathbb{E}[g(k)] + \mathbb{E}[\check{Z}_k] - \mathbb{E}[I_k])\\
&= \frac{\hat{\nu}}{\nu}\cdot \P(\Delta(t) > d) + \hat{\nu}\left(\frac{1}{\lambda} + \frac{1}{\mu} - \frac{1}{\nu}\right).
\end{align*}
In the last step above we have used Theorem~\ref{thm:GG11} and~\eqref{eq:nuGG112}.
\end{proof}
From Theorem~\ref{thm:worst-caseUBGG11}, we infer that if $\hat{\nu} = \nu$, i.e., the departure rate is given, then $\Phi_1$ overestimates the violation probability by at most $\eta$. We note that $\frac{1}{\lambda} + \frac{1}{\mu} \geq \frac{1}{\nu}$, and the relation holds with equality for an M/GI/1/1 system. Further, $\nu$ increases sub-linearly with $\lambda$ in a GI/GI/1/1 system, in general. For example, $\nu = \lambda(1-e^{-\mu/\lambda})$ for the D/M/1/1 system (Corollary~\ref{cor:DM11}). Therefore, for a fixed $\mu$, $\eta$ decreases with $\lambda$, in general. In other words, the derived upper bound is tighter at higher utilization. 
Finally, the worst-case guarantee in Theorem~\ref{thm:worst-caseUBGG11} is provided for any $d \geq 0$. Therefore, we expect that $\Phi_1$ may not be tight for larger $d$ values for which the violation probability takes smaller values. 

We require to compute the value of expected idle time to obtain $\nu$. When $\nu$ is not tractable, we propose to use $\hat{\nu} = \min\{\lambda,\mu\}$, a trivial upper bound on the departure rate.
We note however that the conclusion about tightness of the upper bound at higher utilization may no longer be valid in this case.

\section{The GI/GI/1/2* System}\label{sec:GG12}
The analysis of a GI/GI/1/2* system follows similar steps to the analysis we have presented for the GI/GI/1/1 system. We first obtain expressions for $I_k$ and $A^\text{peak}(k)$, and use them to obtain $g(k)$.

In Figure~\ref{fig:GG12}, we present a possible sequence of arrivals (in blue) and departures (in red) in a GI/GI/1/2* system. Note that there are no arrivals during the service of packet $(k-1)$. This happens only when $\hat{Z}_{k-1} > W_{k-1} + X_{k-1}$ and in this case, $I_k = \hat{Z}_{k-1} - W_{k-1} - X_{k-1}$. If $\hat{Z}_{k-1} \leq W_{k-1} + X_{k-1}$, then $I_k = 0$. Therefore, we have 
\begin{align}\label{eq:IkGG12}
I_k = (\hat{Z}_{k-1} - X_{k-1} - W_{k-1})^+.
\end{align}
Recall that $A^\text{peak}(k) = \TD(k) - \TA(k-1)$. From Figure~\ref{fig:GG12}, it is easy to infer that $A^\text{peak}(k) = X_k + X_{k-1} + I_k + W_k$.  
The following lemma immediately follows from the above analysis and Lemma~\ref{lem:gk}.
\begin{lemma}\label{lem:gkGG12}
Given $d \geq 0$, for any sample path of $\Delta(t)$ in a GI/GI/1/2* system, we have for all $k$,
\begin{align}\label{eq:gkGG12}
\!g(k)\! =\! \min\left\{(X_k\! +\! X_{k-1}\! +\! I_k\! +\! W_{k-1}\! -d)^+,X_k\! +\! I_k\right\}.
\end{align}
\end{lemma}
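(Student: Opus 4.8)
The plan is to verify that the two ingredients feeding into Lemma \ref{lem:gk} — namely $A^\text{peak}(k)$ and $\TD(k)-\TD(k-1)$ — take the claimed forms for a GI/GI/1/2* system, and then simply substitute into \eqref{eq:g(k)}. The excerpt has already done most of this work in the paragraph preceding the statement, so the proof is mainly a matter of assembling the pieces cleanly. First I would recall from the discussion around Figure~\ref{fig:GG12} and \eqref{eq:IkGG12} that the idle time immediately before the service of packet $k$ is $I_k = (\hat{Z}_{k-1} - X_{k-1} - W_{k-1})^+$, and that the inter-departure time satisfies $\TD(k)-\TD(k-1) = X_k + I_k$; this last identity holds exactly as in the GI/GI/1/1 case, since after packet $(k-1)$ departs the server waits out the idle period $I_k$ (if any) and then serves packet $k$ for time $X_k$, with no further queue-induced delay because in the GI/GI/1/2* system a packet that begins service is the one in service and any waiting is already accounted for in $W_{k-1}$ having elapsed before $\TD(k-1)$.

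Next I would establish $A^\text{peak}(k) = X_k + X_{k-1} + I_k + W_{k-1}$. Starting from $A^\text{peak}(k) = \TD(k)-\TA(k-1)$ as in \eqref{eq:Apeakdef}, I would split the interval into the telescoping pieces
\begin{align*}
A^\text{peak}(k) &= \bigl(\TD(k)-\TA(k)\bigr) + \bigl(\TA(k)-\TD(k-1)\bigr) + \bigl(\TD(k-1)-\TA(k-1)\bigr),
\end{align*}
identify $\TD(k)-\TA(k) = W_k + X_k$ but then observe that $\TA(k)-\TD(k-1) = I_k - W_k$ (the new arrival occurs either during the idle period, contributing positive idle time and a subsequent wait, or during the residual service, contributing zero idle and a wait), so the $W_k$ terms cancel; meanwhile $\TD(k-1)-\TA(k-1) = W_{k-1}+X_{k-1}$, giving $A^\text{peak}(k) = X_k + I_k + X_{k-1} + W_{k-1}$ after collecting terms. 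Actually, it is cleaner to match the excerpt's own claim $A^\text{peak}(k) = X_k + X_{k-1} + I_k + W_k$ from the text and reconcile it with the statement's $W_{k-1}$: the point is that the relevant waiting time entering $(A^\text{peak}(k)-d)^+$ after the $\TD(k)-\TD(k-1)$ cap is applied is $W_{k-1}$, because $A^\text{peak}(k) - (\TD(k)-\TD(k-1)) = \Delta(\TD(k-1)) = X_{k-1} + W_{k-1}$, i.e. the system delay of packet $(k-1)$, which is exactly the quantity controlling Case~1 versus Case~2 in Lemma~\ref{lem:gk}.

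Finally I would plug $A^\text{peak}(k)$ and $\TD(k)-\TD(k-1) = X_k + I_k$ into \eqref{eq:g(k)}:
\begin{align*}
g(k) = \min\bigl\{(A^\text{peak}(k)-d)^+,\ \TD(k)-\TD(k-1)\bigr\} = \min\bigl\{(X_k + X_{k-1} + I_k + W_{k-1} - d)^+,\ X_k + I_k\bigr\},
\end{align*}
which is \eqref{eq:gkGG12}. The main obstacle — really the only subtle point — is the bookkeeping of which waiting time appears where: one must be careful that $A^\text{peak}(k)$ genuinely contains $W_{k-1}$ (the delay of the previously departed information update packet, which sets the level $\Delta(\TD(k-1))$ to which AoI drops) and not $W_k$, and that the cancellation of $W_k$ in the telescoping sum is valid precisely because the arrival of packet $k$ relative to $\TD(k-1)$ is $I_k - W_k$. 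Once that identity is pinned down via the two sub-cases ($\hat Z_{k-1}$ exceeding versus not exceeding $X_{k-1}+W_{k-1}$, matching \eqref{eq:IkGG12}), the rest is immediate from Lemma~\ref{lem:gk}.
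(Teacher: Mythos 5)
Your proof is correct and follows essentially the same route as the paper: it establishes $\TD(k)-\TD(k-1)=X_k+I_k$ and $A^\text{peak}(k)=X_k+X_{k-1}+I_k+W_{k-1}$ and substitutes both into Lemma~\ref{lem:gk}, which is exactly what the paper does when it says the lemma ``immediately follows from the above analysis.'' Your bookkeeping via $\TA(k)-\TD(k-1)=I_k-W_k$ (so that $W_k$ cancels and $\Delta(\TD(k-1))=W_{k-1}+X_{k-1}$ remains) also correctly pins down that the waiting time in the peak is $W_{k-1}$, not the $W_k$ appearing in the paper's preceding sentence, which is evidently a typo since the lemma statement itself uses $W_{k-1}$.
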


\begin{figure}[t]
\centering
\includegraphics[width = 3.2in]{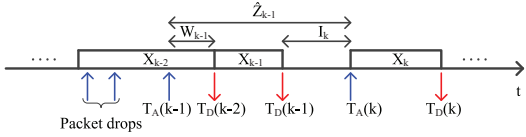}
\vspace{-.5cm}
\caption{An example illustration of arrivals and departures in a GI/GI/1/2* system.}
\label{fig:GG12}
\vspace{-.4cm}
\end{figure}

Unlike the case of the GI/GI/1/1 system, for the GI/GI/1/2* system it is hard to derive a closed-form expression for the violation probability in terms of $X_k$, $X_{k-1}$, $I_k$ and $W_{k-1}$, because $g(k)$, given in~\eqref{eq:gkGG12}, does not satisfy the s.i.i.d. property. Further, computing the violation probability requires the distributions of both $I_k$ and $W_{k-1}$. While these quantities can be computed for exponential service or exponential inter-arrival times (cf \cite{Yoshiaki2018}), they become intractable for general inter-arrival and service-time distributions.  To this end we present upper bounds in the next section.
\subsection{Upper Bound for the GI/GI/1/2* system}
In this subsection we propose an upper bound for the violation probability and analyse its worst-case performance.
\begin{lemma}\label{lem:GammakGG12}
For a GI/GI/1/2* system, given $d \geq 0$, $g(k) \leq \Gamma_2(k)$ for all $k$, where 
\begin{align*}
\Gamma_2(k)\! = \! \min\!\{\!(X_k \! + \! X_{k-1} \! + \! \hat{Z}_{k-1}\!  - d)^+\!,\!X_k \! + \! (\hat{Z}_{k-1} \! - \! X_{k-1})^+\!\}.
\end{align*}
\end{lemma}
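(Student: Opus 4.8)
The plan is to bound each of the two arguments of the minimum defining $g(k)$ in Lemma~\ref{lem:gkGG12} by the corresponding argument of $\Gamma_2(k)$, using only monotonicity of $(\cdot)^+$ together with one structural fact about the replacement discipline. Recall from \eqref{eq:gkGG12} that $g(k) = \min\{(X_k + X_{k-1} + I_k + W_{k-1} - d)^+,\, X_k + I_k\}$ and from \eqref{eq:IkGG12} that $I_k = (\hat{Z}_{k-1} - X_{k-1} - W_{k-1})^+$.

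The key observation I would establish first is that $W_{k-1} \le \hat{Z}_{k-1}$ in a GI/GI/1/2* system. Indeed, packet $k-1$ is served (it is the $(k-1)$-th departure), so no arrival may occur between $\TA(k-1)$ and the instant its service starts; otherwise the buffered packet $k-1$ would be \emph{replaced} and never served, contradicting its being an information update packet. Hence the next arrival after packet $k-1$ occurs no earlier than the start of its service, i.e. at time $\ge \TA(k-1) + W_{k-1}$, which gives $\hat{Z}_{k-1} \ge W_{k-1}$ (when the server is idle upon the arrival of packet $k-1$ we simply have $W_{k-1}=0$ and the inequality is trivial). This step is the only place where the ``$2*$'' discipline (as opposed to ordinary GI/GI/1/2) is essential, and it is the main point to get right; the rest is bookkeeping.

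Granting this, the second argument is immediate: since $W_{k-1}\ge 0$ and $(\cdot)^+$ is nondecreasing, $I_k = (\hat{Z}_{k-1} - X_{k-1} - W_{k-1})^+ \le (\hat{Z}_{k-1} - X_{k-1})^+$, hence $X_k + I_k \le X_k + (\hat{Z}_{k-1} - X_{k-1})^+$. For the first argument I would rewrite $I_k + W_{k-1} = \max\{\hat{Z}_{k-1} - X_{k-1} - W_{k-1},\,0\} + W_{k-1} = \max\{\hat{Z}_{k-1} - X_{k-1},\, W_{k-1}\}$; then $X_{k-1}\ge 0$ gives $\hat{Z}_{k-1} - X_{k-1} \le \hat{Z}_{k-1}$ and the structural fact gives $W_{k-1} \le \hat{Z}_{k-1}$, so $I_k + W_{k-1} \le \hat{Z}_{k-1}$. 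Monotonicity of $(\cdot)^+$ then yields $(X_k + X_{k-1} + I_k + W_{k-1} - d)^+ \le (X_k + X_{k-1} + \hat{Z}_{k-1} - d)^+$.

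Finally I would combine the two bounds: since the minimum of two quantities is nondecreasing in each argument,
\[
g(k) = \min\{(X_k + X_{k-1} + I_k + W_{k-1} - d)^+,\, X_k + I_k\} \le \min\{(X_k + X_{k-1} + \hat{Z}_{k-1} - d)^+,\, X_k + (\hat{Z}_{k-1} - X_{k-1})^+\} = \Gamma_2(k),
\]
which is the claim. No delicate estimates are needed beyond the $W_{k-1}\le\hat{Z}_{k-1}$ argument.
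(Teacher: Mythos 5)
Your proposal is correct and follows essentially the same route as the paper's proof: reduce to the two inequalities $I_k \le (\hat{Z}_{k-1}-X_{k-1})^+$ and $I_k + W_{k-1} \le \hat{Z}_{k-1}$, with the latter resting on the same structural fact that the replacement discipline forbids any arrival between $\TA(k-1)$ and the start of packet $(k-1)$'s service, so $W_{k-1}\le\hat{Z}_{k-1}$. Your use of the identity $I_k+W_{k-1}=\max\{\hat{Z}_{k-1}-X_{k-1},\,W_{k-1}\}$ is just a compact rewriting of the paper's case split on whether $I_k=0$.
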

\begin{proof}
Noting the expression for $g(k)$ given in~\eqref{eq:gkGG12}, it is sufficient to show that $I_k+W_{k-1} \leq \hat{Z}_{k-1}$, and $I_k \leq (\hat{Z}_{k-1} - X_{k-1})^+$. The latter inequality follows from~\eqref{eq:IkGG12}. The former inequality is obviously true if there are no arrivals during the service of packet $(k-1)$; see Figure~\ref{fig:GG12}. If there is an arrival during the service of packet $(k-1)$, then $I_k = 0$. In this case $I_k+W_{k-1} = W_{k-1} < \hat{Z}_{k-1}$, since by definition there should be no arrival after packet $(k-1)$ arrived and before its service started.
\end{proof}

In the following theorem we present an upper bound $\Phi_2$ for the violation probability.
\begin{theorem}\label{thm:GG12Bounds}
For a GI/GI/1/2* system, assuming that the AoI process is stationary, the violation probability is bounded by,
\begin{align*}
\nu\mathbb{E}[\gamma^*(k)] \leq \P(\Delta > d) \leq  \Phi_2,
\end{align*}
where $\Phi_2 = \hat{\nu} \mathbb{E}[\Gamma_2(k)]$, for some $\hat{\nu} \geq \nu$.
\end{theorem}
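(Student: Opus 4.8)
The plan is to follow the proof of Theorem~\ref{thm:GG11Bounds} almost verbatim, simply substituting the GI/GI/1/2*-specific bounds on $g(k)$. The pathwise sandwich $\gamma^*(k)\le g(k)\le\Gamma_2(k)$ is already available: the left inequality is Lemma~\ref{lem:gkLB} (valid for any single-source-single-server queue), the right one is Lemma~\ref{lem:GammakGG12}. Feeding $\gamma(\omega,k)=\gamma^*(\omega,k)$ and $\Gamma(\omega,k)=\Gamma_2(\omega,k)$ into Theorem~\ref{thm:vioprobAoIPeakUB} gives
\begin{align*}
\mathbb{E}_\omega\!\left[\lim_{T\to\infty}\frac{1}{T}\sum_{k=1}^{M(T)}\gamma^*(\omega,k)\right]\le\P(\Delta>d)\le\mathbb{E}_\omega\!\left[\lim_{T\to\infty}\frac{1}{T}\sum_{k=1}^{M(T)}\Gamma_2(\omega,k)\right].
\end{align*}
It then remains to identify the left-hand side with $\nu\mathbb{E}[\gamma^*(k)]$ and to show the right-hand side is at most $\hat\nu\mathbb{E}[\Gamma_2(k)]$ for the chosen $\hat\nu\ge\nu$.

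For the upper bound I would split $\frac1T\sum_{k=1}^{M(T)}\Gamma_2(k)=\frac{M(T)}{T}\cdot\frac1{M(T)}\sum_{k=1}^{M(T)}\Gamma_2(k)$. The crucial point is that $\Gamma_2(k)$, as given in Lemma~\ref{lem:GammakGG12}, is a function of $X_k$, $X_{k-1}$ and $\hat Z_{k-1}$ only --- it does \emph{not} involve $W_{k-1}$. Since the service times are i.i.d., and the next-arrival gaps $\hat Z_{k-1}$ attached to the distinct real arrivals that become packets $k-1$ occupy disjoint inter-arrival epochs of the renewal arrival stream (a renewal/stopping-time argument analogous to the one in the proof of Theorem~\ref{thm:GG11}), $\Gamma_2(k)$ and $\Gamma_2(k+2)$ share no common randomness; hence $\{\Gamma_2(k)\}$ is s.i.i.d.\ and Lemma~\ref{lem:SLLN} gives $\frac1{M(T)}\sum_{k=1}^{M(T)}\Gamma_2(k)\to\mathbb{E}[\Gamma_2(k)]$ a.s. Combining this with the elementary a.s.\ estimate $\limsup_{T\to\infty}M(T)/T\le\hat\nu$ --- the departure rate exceeds neither the arrival rate $\lambda$ (one cannot depart more than one receives) nor the service rate $\mu$ (since $\sum_{k\le M(T)}X_k\le T$ and $\frac1{M(T)}\sum_{k\le M(T)}X_k\to1/\mu$), so $\hat\nu=\min\{\lambda,\mu\}$ is admissible, and $\hat\nu=\nu$ is admissible whenever $M(T)/T$ actually converges --- yields $\P(\Delta>d)\le\hat\nu\mathbb{E}[\Gamma_2(k)]=\Phi_2$. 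For the lower bound I would use that the GI/GI/1/2* system is (generically) regenerative: a packet that finds the server idle on arrival clears the queue and restarts the inter-arrival clock, and such packets recur with positive rate, so the stationary AoI process is ergodic; consequently $\frac1{M(T)}\sum_{k=1}^{M(T)}\gamma^*(k)\to\mathbb{E}[\gamma^*(k)]$ and $M(T)/T\to\nu$ a.s., so the left-hand side of the displayed inequality equals $\nu\mathbb{E}[\gamma^*(k)]$.

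I expect the lower-bound evaluation to be the main obstacle. Unlike GI/GI/1/1 --- where the start of service of each packet is a genuine renewal instant and $\{\TD(k)-\TD(k-1)\}$ is i.i.d.\ --- in GI/GI/1/2* the waiting time $W_{k-1}$ enters $I_k$ (cf.~\eqref{eq:IkGG12}), so the inter-departure times and $\gamma^*(k)$ inherit long-range dependence and are \emph{not} s.i.i.d.\ in general; Theorem~\ref{thm:evaluategk} therefore does not apply to $\gamma^*(k)$ directly, and the s.i.i.d./renewal-reward argument that worked for $\Gamma_1(k)$ in Theorem~\ref{thm:GG11Bounds} must be replaced by a Birkhoff-type ergodic argument anchored at the above regeneration epochs, after checking that $\gamma^*(k)$ and $\TD(k)-\TD(k-1)$ are measurable functionals of the stationary AoI process and that the regeneration rate is positive (this can only fail in the degenerate limit $\lambda\to\infty$, which is handled separately as a corollary). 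Finally, keeping $\hat\nu$ as a free parameter with $\hat\nu\ge\nu$ --- in practice the computable choice $\min\{\lambda,\mu\}$ --- is exactly what makes $\Phi_2$ usable when $\nu$ has no closed form, mirroring the role of $\hat\nu$ in the GI/GI/1/1 analysis.
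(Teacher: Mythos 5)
Your proposal matches the paper's approach: the paper omits this proof entirely, stating only that it follows the same steps as Theorem~\ref{thm:GG11Bounds}, namely sandwich $g(k)$ between $\gamma^*(k)$ (Lemma~\ref{lem:gkLB}) and $\Gamma_2(k)$ (Lemma~\ref{lem:GammakGG12}), feed both into Theorem~\ref{thm:vioprobAoIPeakUB}, evaluate the upper side via the s.i.i.d.\ SLLN on $\Gamma_2(k)$ (which depends only on $X_k,X_{k-1},\hat{Z}_{k-1}$), and relax $\nu$ to $\hat{\nu}$. Your additional observations---that $\{\TD(k)-\TD(k-1)\}$ and $\gamma^*(k)$ are no longer i.i.d./s.i.i.d.\ in GI/GI/1/2* because $W_{k-1}$ enters $I_k$, so Theorem~\ref{thm:evaluategk} cannot be invoked verbatim and the lower bound and the $M(T)/T$ limit require a separate regenerative/ergodic argument---are correct and in fact more careful than the paper's omitted proof.
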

\begin{proof}
The proof follows similar steps to the proof of Theorem~\ref{thm:GG11Bounds} and is omitted.
\end{proof}

A worst-case-performance guarantee for $\Phi_2$ is presented in the following theorem.
\begin{theorem}\label{thm:worst-caseUBGG12}
For the GI/GI/1/2* system, for a given $\hat{\nu} \ge \nu$, $\Phi_2$ has the following worst-case-performance guarantee.
\begin{align*}
\Phi_2 \leq \frac{\hat{\nu}}{\nu}\cdot \P(\Delta > d) + \hat{\nu}\eta.
\end{align*}
\end{theorem}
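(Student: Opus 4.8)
The plan is to mirror the proof of Theorem~\ref{thm:worst-caseUBGG11} step for step, replacing the GI/GI/1/1 expressions for $g(k)$ and $\Gamma_1(k)$ by their GI/GI/1/2* counterparts from Lemma~\ref{lem:gkGG12} and Lemma~\ref{lem:GammakGG12}. First I would start from the definition of $\Gamma_2(k)$ and compare it termwise with $g(k)$ in~\eqref{eq:gkGG12}. Since $I_k + W_{k-1} \le \hat{Z}_{k-1}$ and $I_k \le (\hat{Z}_{k-1}-X_{k-1})^+$ (both established inside the proof of Lemma~\ref{lem:GammakGG12}), the nonnegative quantity $\hat{Z}_{k-1} - I_k - W_{k-1}$ can be added to each argument of the $\min$ defining $g(k)$ to dominate the corresponding argument of $\Gamma_2(k)$; using $\min\{a+c,b+c\} = \min\{a,b\}+c$ for $c \ge 0$ (and monotonicity of $(\cdot)^+$), this yields
\begin{align*}
\Gamma_2(k) \le g(k) + \big(\hat{Z}_{k-1} - I_k - W_{k-1}\big).
\end{align*}

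Next I would take expectations. By Theorem~\ref{thm:GG12Bounds}, $\Phi_2 = \hat{\nu}\,\mathbb{E}[\Gamma_2(k)]$, so the displayed pathwise bound gives $\Phi_2 \le \hat{\nu}\big(\mathbb{E}[g(k)] + \mathbb{E}[\hat{Z}_{k-1}] - \mathbb{E}[I_k] - \mathbb{E}[W_{k-1}]\big)$. Here $\hat{Z}_{k-1}$ is an inter-arrival time, so $\mathbb{E}[\hat{Z}_{k-1}] = 1/\lambda$. For $\mathbb{E}[g(k)]$ I would invoke the (commented-out but assumed) relation $\P(\Delta > d) = \nu\,\mathbb{E}[g(k)]$ for the GI/GI/1/2* system, which follows from Theorems~\ref{thm:vioprobAoIPeak} and~\ref{thm:evaluategk} exactly as in Theorem~\ref{thm:GG11}; hence $\mathbb{E}[g(k)] = \P(\Delta > d)/\nu$. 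For the remaining two terms I would use the inter-departure identity $\TD(k)-\TD(k-1) = X_k + I_k$ valid for this system as well, giving $1/\nu = \mathbb{E}[X] + \mathbb{E}[I_k] = 1/\mu + \mathbb{E}[I_k]$, so that $\mathbb{E}[I_k] = 1/\nu - 1/\mu$, and therefore $\mathbb{E}[\hat{Z}_{k-1}] - \mathbb{E}[I_k] = 1/\lambda + 1/\mu - 1/\nu = \eta$ by~\eqref{eq:eta}. Dropping the nonnegative term $-\hat{\nu}\,\mathbb{E}[W_{k-1}]$ only weakens the bound, so we obtain $\Phi_2 \le \frac{\hat{\nu}}{\nu}\,\P(\Delta > d) + \hat{\nu}\eta$, as claimed.

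The main obstacle I anticipate is a bookkeeping one rather than a conceptual one: making sure the pathwise inequality $\Gamma_2(k) \le g(k) + (\hat{Z}_{k-1} - I_k - W_{k-1})$ is genuinely valid termwise, including the subtle point that in the second argument of the $\min$, $g(k)$ carries $X_k + I_k$ while $\Gamma_2(k)$ carries $X_k + (\hat{Z}_{k-1}-X_{k-1})^+$, so one must verify $(\hat{Z}_{k-1}-X_{k-1})^+ - I_k \le \hat{Z}_{k-1} - I_k - W_{k-1}$, i.e. $(\hat{Z}_{k-1}-X_{k-1})^+ + W_{k-1} \le \hat{Z}_{k-1}$ — which holds by the case analysis of Lemma~\ref{lem:GammakGG12} (if $I_k>0$ then $W_{k-1}=0$ and $(\hat{Z}_{k-1}-X_{k-1})^+ = \hat{Z}_{k-1}-X_{k-1}\le \hat{Z}_{k-1}$; if $I_k=0$ then $(\hat{Z}_{k-1}-X_{k-1})^+=0$ and $W_{k-1}<\hat{Z}_{k-1}$). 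Since the statement already hypothesizes $\hat{\nu}\ge\nu$ and only requires stationarity of the AoI process, no ergodicity or s.i.i.d.\ argument beyond what Theorem~\ref{thm:GG12Bounds} already supplies is needed, and the proof can reasonably be abbreviated to ``follows the steps of the proof of Theorem~\ref{thm:worst-caseUBGG11}.''
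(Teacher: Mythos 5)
Your strategy is close in spirit to the paper's, but two of your steps do not hold up, and one of them is exactly the difficulty the paper's proof is engineered to avoid. First, the pathwise inequality $\Gamma_2(k) \le g(k) + (\hat{Z}_{k-1} - I_k - W_{k-1})$ is false, and the case analysis you give for the second argument of the $\min$ is incorrect: $I_k>0$ does not force $W_{k-1}=0$ (the idle time before packet $k$ is governed by the arrival occurring \emph{after} packet $k-1$, whereas $W_{k-1}$ is determined by the server state when packet $k-1$ arrived), and $I_k=0$ does not force $(\hat{Z}_{k-1}-X_{k-1})^+=0$. Concretely, take $W_{k-1}=3$, $X_{k-1}=1$, $\hat{Z}_{k-1}=5$: then $I_k=(5-1-3)^+=1$ and $(\hat{Z}_{k-1}-X_{k-1})^+ + W_{k-1} = 7 > 5 = \hat{Z}_{k-1}$, so the second argument of $\Gamma_2(k)$ exceeds the second argument of $g(k)$ plus your correction term, and for small $d$ the $\min$ is attained there. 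This part is repairable: the weaker bound $\Gamma_2(k) \le g(k) + \hat{Z}_{k-1} - I_k$ does hold (each argument of the $\min$ increases by at most $\hat{Z}_{k-1}-I_k \ge 0$, using $I_k+W_{k-1}\le \hat{Z}_{k-1}$ and $(\hat{Z}_{k-1}-X_{k-1})^+\le \hat{Z}_{k-1}$), and since you discard the $-\hat{\nu}\,\mathbb{E}[W_{k-1}]$ term at the end anyway, nothing is lost.

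The second issue is not repairable along your route. You need $\nu\,\mathbb{E}[g(k)] \le \P(\Delta>d)$, which you get from the ``commented-out but assumed'' identity $\P(\Delta>d)=\nu\,\mathbb{E}[g(k)]$, claiming it follows from Theorems~\ref{thm:vioprobAoIPeak} and~\ref{thm:evaluategk} exactly as in Theorem~\ref{thm:GG11}. It does not: the paper states explicitly in Section~\ref{sec:GG12} that $g(k)$ in~\eqref{eq:gkGG12} does \emph{not} satisfy the s.i.i.d.\ property for the GI/GI/1/2* system (the waiting time $W_{k-1}$ introduces dependence that the two-step independence argument of Theorem~\ref{thm:GG11} cannot handle), which is precisely why that exact characterization is omitted and why Theorem~\ref{thm:GG12Bounds} gives only the sandwich $\nu\mathbb{E}[\gamma^*(k)] \le \P(\Delta>d)\le\Phi_2$. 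The paper's proof sidesteps $g(k)$ entirely: it establishes $\Gamma_2(k) \le \gamma^*(k) + \hat{Z}_{k-1} - I_k$ with $\gamma^*(k)$ the generic lower bound of Lemma~\ref{lem:gkLB}, then uses the already-proved inequality $\nu\,\mathbb{E}[\gamma^*(k)] \le \P(\Delta>d)$ from Theorem~\ref{thm:GG12Bounds} together with $\mathbb{E}[\hat{Z}_{k-1}]-\mathbb{E}[I_k] = \frac{1}{\lambda}+\frac{1}{\mu}-\frac{1}{\nu} = \eta$ from~\eqref{eq:nuGG112} and~\eqref{eq:eta}. Replacing $g(k)$ by $\gamma^*(k)$ throughout your argument (and dropping the unjustified $-W_{k-1}$ in the correction term) turns it into the paper's proof.
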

\begin{proof}
It is easy to show that $\Gamma_2(k) \leq \gamma^*(k) + \hat{Z}_{k-1} - I_k$. The rest of the proof follows similar steps as in the proof of Theorem~\ref{thm:worst-caseUBGG11} and is omitted.
\end{proof}
Thus, $\Phi_2$ also overestimates the violation probability by at most $\eta$, if $\nu$ is given. 
Therefore, given $\nu$ and for a fixed average service, $\Phi_2$ is tighter at higher utilization.
Since it is hard to compute $\nu$, in general, in the numerical section we compute $\Phi_2$ using $\hat{\nu} = \min\{\lambda,\mu\}$. 

\textit{\textbf{Remark 2:}} For both GI/GI/1/1 and GI/GI/1/2* systems $\nu = 1/(\mathbb{E}[X_k] + \mathbb{E}[I_k])$, and $g(k)$ for GI/GI/1/1 given by~\eqref{eq:gkGG11} seems to be closely related to $g(k)$ for GI/GI/1/2* given by~\eqref{eq:gkGG12}. 
Also, one can expect that the idle time in GI/GI/1/2* will be lower compared to that of GI/GI/1/1.  
However, for a given $d$, a comparison between the violation probabilities in these systems is non-trivial because of the waiting time in GI/GI/1/2* and higher idle time in GI/GI/1/1. 

\textit{\textbf{Remark 3:}} When the input rate approaches infinity, the inter-arrival time, waiting time, and idle time approach zero. Therefore, the upper bounds $\Phi_1$, $\Phi_2$, and the respective violation probabilities in GI/GI/1/1 and GI/GI/1/2*, all converge to the violation probability in the system using zero-wait policy. Thus, both $\Phi_1$ and $\Phi_2$ are asymptotically tight.

\section{Numerical Results}\label{sec:numerical}
In this section we validate the proposed upper bounds against the violation probability obtained through simulation for selected service-time and inter-arrival-time distributions. 
For all simulations we set $\mu = 1$ and thus the utilization increases with $\lambda$. We use $\lambda = .45$ and $d = 5$ as default values.

We first study the performance of $\Phi_1$ in comparison with overestimation factor $\eta$, when $\nu$ is given. To this end we consider the D/M/1/1 system and compute $\Phi_1$ by setting $\hat{\nu} = \nu = \lambda(1-e^{-\mu/\lambda})$. In Figure~\ref{fig:nonPreemptiveDM11_varR}, we plot $\Phi_1$ against the exact value for the violation probability given in Corollary~\ref{cor:DM11}. Observe that the gap between $\Phi_1$ and violation probability reduces as the arrival rate increases confirming our initial conclusion that the bound is tighter at higher utilization. Furthermore, $\Phi_1$ approaches the simulated violation probability asymptotically. 
For $d = 5$ and $\lambda = 0.4$,  we compute $\eta$ to be $0.28$, while the actual gap is $0.08$. For the same setting, but for $d = 10$, $\eta$ remains the same while the actual gap is $0.0012$. 
This suggests that the proposed upper bound is much lower than the worst-case-performance guarantee.   
\begin{figure}[t]
\centering
\includegraphics[width = 2.5in]{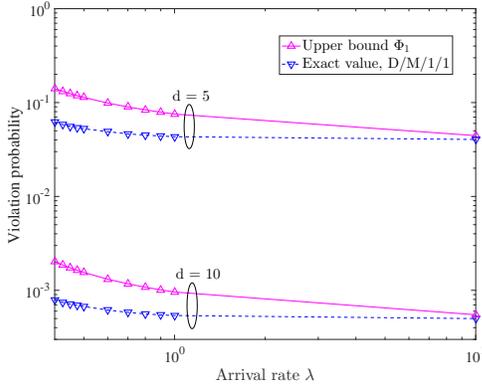}
\caption{Performance of $\Phi_1$ with varying $\lambda$, when $\nu$ is given, and $\mu = 1$}
\vspace{-.3cm}
\label{fig:nonPreemptiveDM11_varR}
\end{figure}

\begin{figure}[t]
\centering
\includegraphics[width = 2.5in]{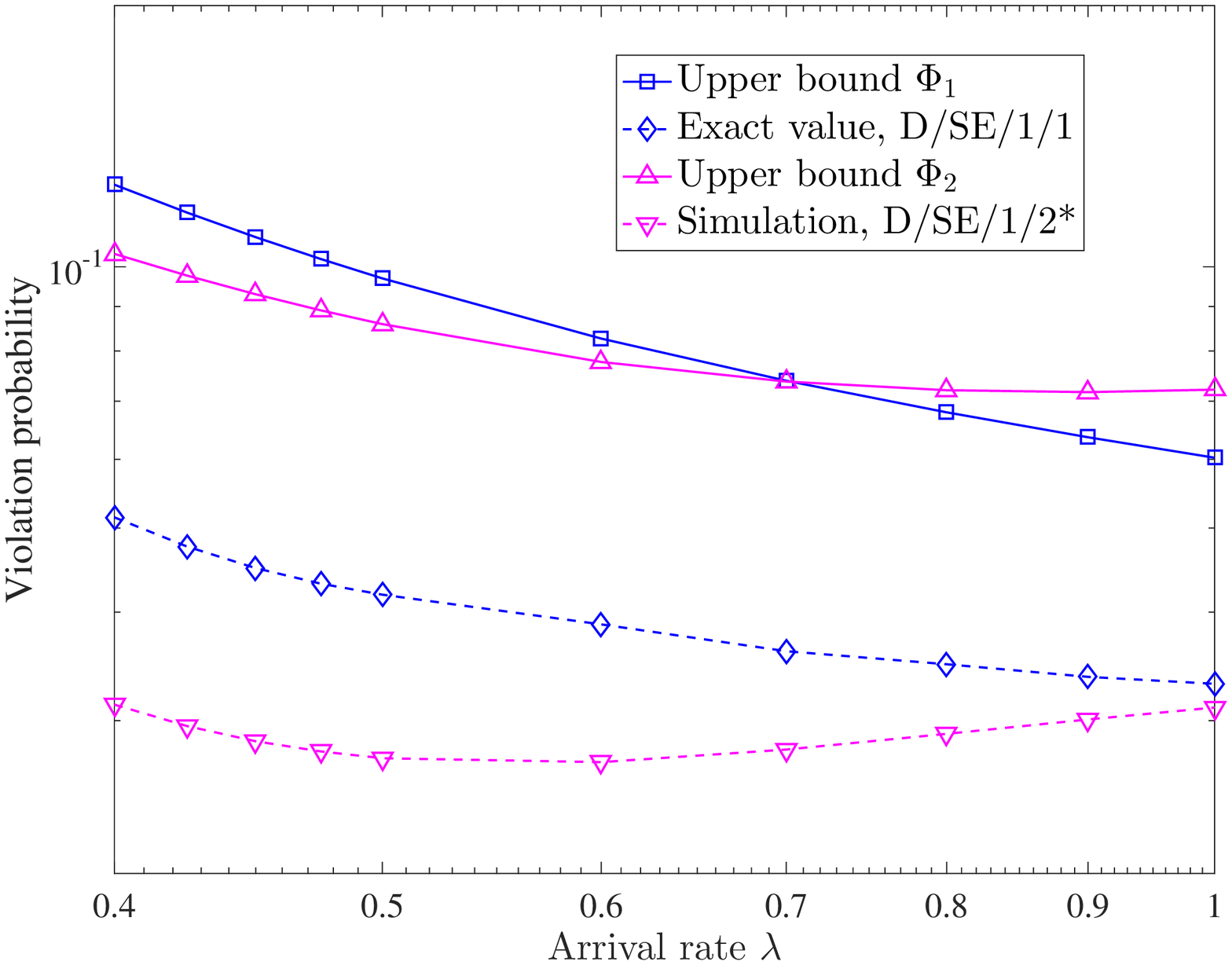}
\caption{Performance of upper bounds with varying $\lambda$, $d = 5$, $\mu = 1$, and shift equal to $0.11$ .}
\vspace{-.4cm}
\label{fig:nonPreemptiveDShiftExp1_varR}
\end{figure}

\begin{figure}[t]
\centering
\includegraphics[width = 2.5in]{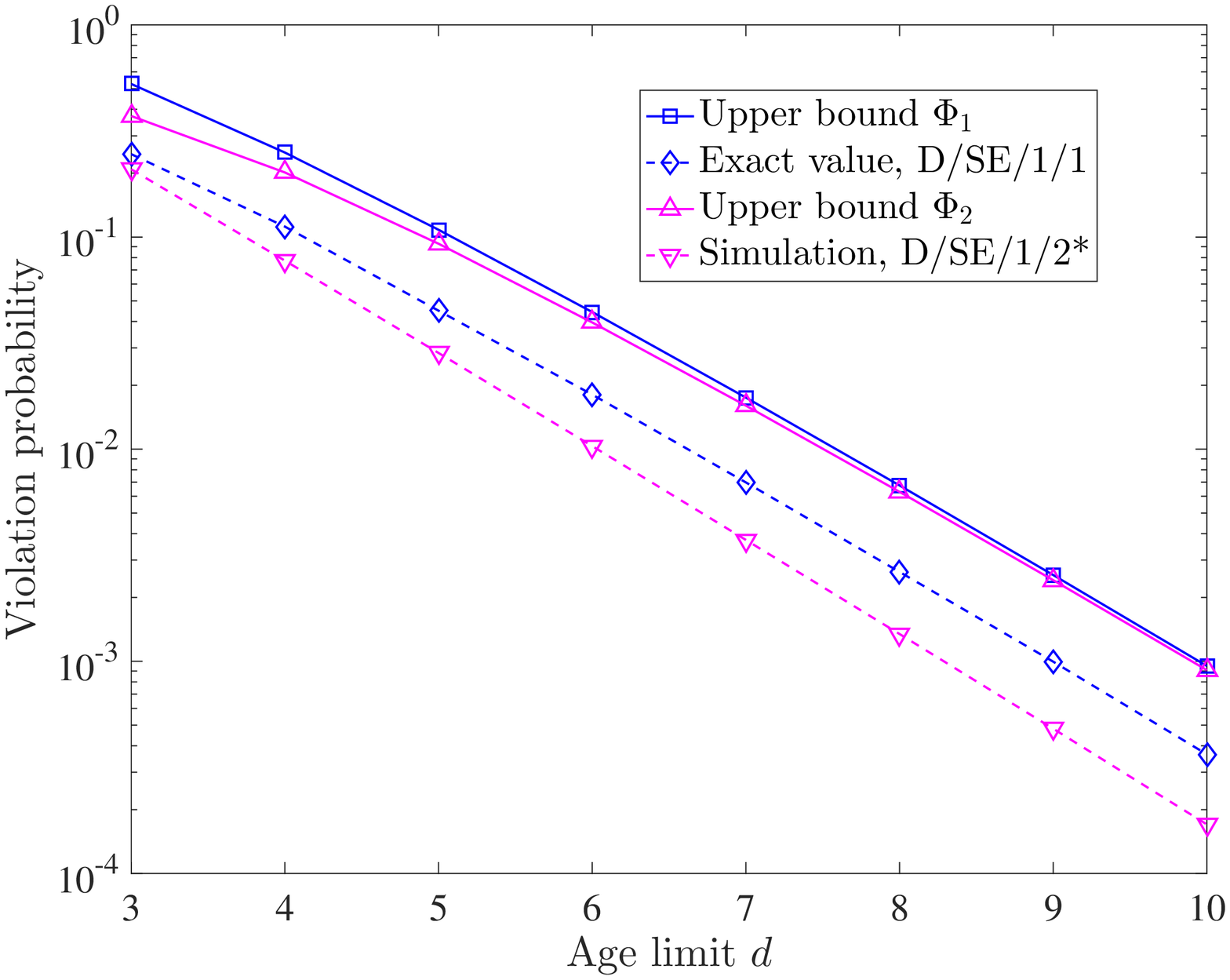}
\caption{Performance of upper bounds with varying $d$, $\lambda = .45$, $\mu = 1$, and shift equal to $0.11$.}
\vspace{-.3cm}
\label{fig:nonPreemptiveDShiftExp1_vard}
\end{figure}

\begin{figure}[t]
\centering
\includegraphics[width = 2.5in]{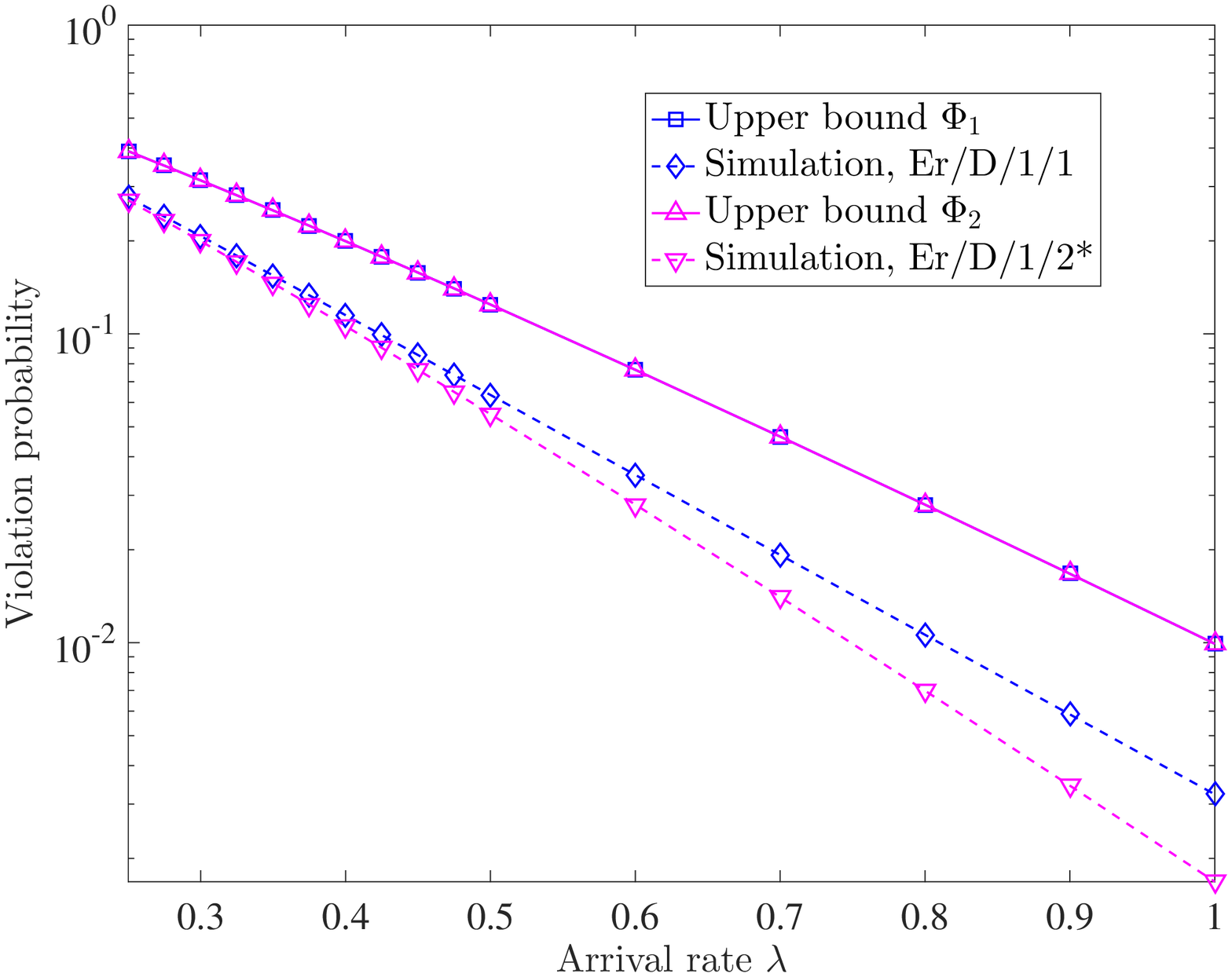}
\caption{Performance of  upper bounds with varying $\lambda$, Erlang shape parameter equal to $2$, $d = 5$, and $\mu = 1$.}
\vspace{-.4cm}
\label{fig:nonPreemptiveED1_varR}
\end{figure}

\begin{figure}[t]
\centering
\includegraphics[width = 2.5in]{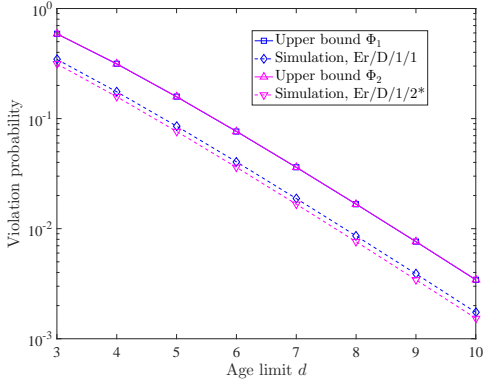}
\caption{Performance of upper bounds with varying $d$, $\lambda = .45$, Erlang shape parameter equal to $2$, and $\mu = 1$.}
\vspace{-.3cm}
\label{fig:nonPreemptiveED1_vard}
\end{figure}

Next, we consider two example systems where exact expressions for the distribution of AoI are hard to compute. 
For both systems, we use $\hat{\nu} = \min(\lambda,\mu)$ to compute $\Phi_1$ and $\Phi_2$.
In the first example system, we choose deterministic arrivals and Shifted-Exponential (SE) service times, i.e., D/SE/1/1 and D/SE/1/2*.  We set values of $d$ and $\lambda$ such that $d \geq \frac{1}{\lambda}$, $\mu = 1$ and shift parameter equal to $0.11$. 
In Figures~\ref{fig:nonPreemptiveDShiftExp1_varR} and~\ref{fig:nonPreemptiveDShiftExp1_vard}, we study the performance of the upper bounds, presented in Theorems~\ref{thm:GG11Bounds} and~\ref{thm:GG12Bounds}, for varying arrival rate $\lambda$ and varying age limit $d$, respectively. 
From Figure~\ref{fig:nonPreemptiveDShiftExp1_varR}, we again observe that the upper bounds are tighter at higher utilization.  For $\lambda > 1$ both upper bounds and the violation probabilities  converge to $0.029$. Interestingly, in contrast to D/SE/1/1 where the violation probability decreases with $\lambda$, D/SE/1/2* has minimum violation probability of $0.026$ at around $\lambda = 0.6$.
From Figure~\ref{fig:nonPreemptiveDShiftExp1_vard}, we observe that both bounds are tighter at smaller $d$ values. While the decay rate of $\Phi_1$ matches with that of the simulated violation probability, $\Phi_2$ becomes loose as $d$ increases. We conjecture that this is due to the inequality $I_k + W_{k-1} \leq \hat{Z}_{k-1}$ that we use to obtain this bound. 

In Figures~\ref{fig:nonPreemptiveED1_varR} and~\ref{fig:nonPreemptiveED1_vard}, we present a comparison for deterministic service and Erlang distributed inter-arrival times, i.e., Er/D/1/1 and Er/D/1/2*. We first note that for the parameter values chosen, $\Phi_1$ and $\Phi_2$ are equal in this case.  From Figure~\ref{fig:nonPreemptiveED1_varR}, we observe that the bounds are not tight at larger arrival rate. This can be attributed to the use of $\hat{\nu} = \min(\lambda,\mu)$. From Figure~\ref{fig:nonPreemptiveED1_vard}, we observe that the decay rate of the bounds matches the decay rate of the violation probabilities.
Finally, it is worth noting that, the violation probability in -/-/1/2* is lower than that in -/-/1/1 for the above example systems. 

In conclusion, for the considered systems, the upper bounds are well within an order of magnitude from the violation probability. For most cases the decay rate of the proposed bounds follow the decay rate of the simulated violation probability as $d$ increases. Also, the performance of these upper bounds can be improved further by finding non-trivial upper bounds for $\nu$.  Thus, we believe that the proposed upper bounds can be useful as first-hand metrics for measuring freshness of status updates in these systems.
\vspace{-.2cm}

\section{Conclusion and Future Work}\label{sec:conclusion}
In this work we have studied the distribution of AoI for GI/GI/1/1 and GI/GI/1/2* systems. Toward this end, we first established a fundamental result that is valid for any single-source-single-server queuing system with general service-time and general inter-arrival time distributions. Using this result we have derived exact expressions for the distribution of AoI for D/GI/1/1 and M/GI/1/1 systems. Further, we have proposed and analysed upper bounds for the violation probability for GI/GI/1/1 and GI/GI/1/2* when the distribution of AoI is intractable. An interesting feature of the proposed upper bounds is that, if the departure rate $\nu$ is given, then they overestimate the violation probability by at most $\eta$ which decreases with $\lambda$. Initial numerical results show that the bounds are well within an order of magnitude, and their decay rates closely match, in comparison with the simulated violation probability.

We note that the close-form expressions and the bounds can be used to study the AoI performance of  a wide range of systems which we aim to do in our future work. We also aim to improve the upper bounds by finding tighter bounds for $g(k)$ and $\nu$. Furthermore, we are interested in studying the systems under pre-emptive scheduling.




{\color{black} \appendix

\subsection{Proof of Lemma~\ref{lem:existenceDG11}}\label{proof:lem:existenceDG11}
We prove that $\P(\Delta > d)$ does not exist when $d < \frac{1}{\lambda}$. Consider the event $\{\Delta(t) > d\}$ at time $t$. If $d < \frac{1}{\lambda}$, there will be time instances, say $\hat{t}$, for which there is no arrival in the interval $[\hat{t}-d,\hat{t})$. This implies that at $\hat{t}$ the receiver/destination cannot have a packet with arrival time greater than $\hat{t} - d$. Therefore, the event $\{\Delta(\hat{t}) > d\}$ is true for all such $\hat{t}$. Let $\bar{t}$ denote any time instance $t \neq \hat{t}$, i.e., at $\bar{t}$ there exists an arrival in the interval $[\bar{t}-d,\bar{t})$. Since $d < \frac{1}{\lambda}$, there can be only one arrival in this interval. Therefore, for this case the event $\{\Delta(\hat{t}) > d\}$ is true if either the server is busy, in which case the packet is dropped, or the departure time of this packet exceeds $\bar{t}$.

From the above analysis, we conclude that $\P(\Delta(t) > d)$ depends on the value of $t$. Specifically, we infer that $\limsup_{t \rightarrow \infty} \P(\Delta(t) > d) = 1$, because the event $\{\Delta(\hat{t}) > d\}$ is true for all $\hat{t}$, which occur infinitely often as $t$ goes to infinity. Similarly, we infer that $\liminf_{t \rightarrow \infty} \P(\Delta(t) > d) < 1$, because the time instances $\bar{t}$ also occur infinitely often and at these time instances the occurrence of the event $\{\Delta(\bar{t}) > d\}$ is uncertain. Since the limit supremum and limit infimum are not equal  $\P(\Delta > d) = \lim_{t \rightarrow \infty} \P(\Delta(t) > d)$ does not exist for $d < \frac{1}{\lambda}$.

\subsection{Proof of Lemma~\ref{lem:g(k)Distr}}\label{proof:lem:g(k)Distr}
From~\eqref{eq:gkGG11}, we have
{\allowdisplaybreaks \begin{align*}
&\P\{g(k) > y\} \\
&= \P\{\min\{(X_{k-1}+X_{k}+I_k-d)^+,X_k+I_k\} > y\}\nonumber\\
&=  \P\{\max\{0,X_{k-1}+X_{k}+I_k-d\} > y,X_k+I_k > y\}\nonumber\\
&=  \P\{(y < 0, X_k+I_k > y) \nonumber\\ & \quad\quad\quad \cup(X_{k-1}+X_{k}+I_k-d > y,X_k+I_k > y)\}\nonumber \\
&=  \P\{X_{k-1}+X_{k}+I_k-d > y,X_k+I_k > y\} \\
&= \int_{0}^{\infty}\P\{X_{k}+I_k > y + d -x,X_k + I_k > y\}f_X(x)dx \\
&= \int_{0}^{d} \P\{X_k + I_k > y-x+d\}f_X(x)dx \\
&\quad\quad\quad + \int_{d}^{\infty}\P\{X_k+I_k>y\}f_X(x) dx.
\end{align*}}

\subsection{Proof of Corollary~\ref{cor:MM11}}\label{proof:cor:MM11}
Since $X_k \sim Exp(\mu)$ and $I_k \sim Exp(\lambda)$, we have
\begin{eqnarray}\label{eq:Xk+Ik}
\P(X_k + I_k > y) \! = \! \left\{\begin{array}{lc}
	 \!\frac{\mu e^{-\lambda(y-x+d)} - \lambda e^{-\mu(y-x+d)}}{\mu - \lambda} & \lambda \neq \mu,\\
   \!(1+\mu y)e^{-\mu y} & \lambda = \mu.
  \end{array}\right.
\end{eqnarray}
In the following we compute the distribution of $g(k)$by substituting~\eqref{eq:Xk+Ik} in $\P\{g(k) > y\}$ given in Lemma~\ref{lem:g(k)Distr}.

\textbf{Case 1:} $\mu \neq \lambda$. For this case, we have
{\allowdisplaybreaks \begin{align*}
&\P(g(k) > y) \\
&= \int_0^d \frac{\mu e^{-\lambda(y-x+d)} - \lambda e^{-\mu(y-x+d)}}{\mu - \lambda} f_X(x) dx \\
&\quad\quad\quad\quad\quad\quad + \frac{\mu e^{-\lambda y} - \lambda e^{-\mu y}}{\mu - \lambda}\int_d^{\infty}f_X(x)dx \\
&= \frac{\mu e^{-\mu(y+d)}}{\mu - \lambda}\left[\frac{\mu (e^{d(\lambda - \mu)}-1)}{\lambda - \mu} - \lambda d\right]\\
&\quad\quad\quad\quad\quad\quad + \frac{(\mu e^{-\lambda y} - \lambda e^{-\mu y})e^{-\mu d}}{\mu - \lambda}.
\end{align*}}
Integrating the above expression over $y$, we obtain the desired result.

\textbf{Case 2:} $\mu = \lambda$. For this case, we have
{\allowdisplaybreaks \begin{align*}
&\P(g(k) > y) \\
&= \int_0^d (1+\mu(y-x+d))e^{-\mu(y-x+d)}f_X(x)dx\\
&\quad\quad\quad\quad\quad\quad + \int_d^{\infty}(1+\mu y)e^{-\mu y} f_X(x) dx \\
&= \mu e^{-\mu(y+d)} \int_0^d (1+\mu(y-x+d)) dx + (1+\mu y)e^{-\mu (y+d)}\\
&= \mu e^{-\mu(y+d)} \left[(1+\mu(y+d))d - \frac{\mu d^2}{2}\right]+ (1+\mu y)e^{-\mu (y+d)}\\
&= \mu d e^{-\mu(y+d)} \left[1+\mu\left(y+\frac{d}{2}\right)\right]+ (1+\mu y)e^{-\mu (y+d)}\\
&= (\mu d+1)(1+\mu y)e^{-\mu (y+d)} + \frac{\mu^2 d^2}{2}e^{-\mu (y+d)}.
\end{align*}}
Therefore, integrating the above expression over $y$, we obtain
\begin{align*}
\mathbb{E}[g(x)] &= e^{-\mu d}\left[(\mu d + 1) \frac{2}{\mu} + \frac{\mu d^2}{2}\right]\\
&= \frac{\mu e^{-\mu d}}{2}\left(d+\frac{2}{\mu}\right)^2
\end{align*}

\subsection{Computation of $\mathbb{E}[\Delta]$ for M/M/1/1}\label{EAoI_MM11}
\textbf{Case 1:} $\lambda \neq \mu$. Recall that $\frac{1}{\nu} = \frac{1}{\lambda} + \frac{1}{\mu}$. Using $\mathbb{E}[g(k)]$ from Corollary~\ref{cor:MM11}, we obtain
\begin{align*}
&\mathbb{E}[\Delta]\! =\! \nu \! \!\int_0^\infty \!\!\left[\frac{\mu^2(e^{-\lambda y}-e^{-\mu y})}{\lambda(\mu - \lambda)^2}\! + \! e^{-\mu y}\left(\frac{1}{\lambda}\! +\!\frac{1}{\mu}\! -\! \frac{\lambda y}{\mu - \lambda}\right)\right]\!dy \\
&= \frac{\mu^2 \nu}{\lambda(\mu - \lambda)^2} \left(\frac{1}{\lambda} - \frac{1}{\mu}\right) + \left(\frac{1}{\lambda} + \frac{1}{\mu}\right)\frac{\nu}{\mu}\\ 
& \quad \quad - \frac{\lambda \nu}{\mu - \lambda}\left[\frac{-y e^{-\mu y}}{\mu}-\frac{e^{-\mu y}}{\mu^2}\bigg{|}_0^\infty\right]\\
&= \frac{\mu^2}{\lambda(\mu^2 - \lambda^2)} + \frac{1}{\mu} - \frac{\lambda^2}{\mu(\mu^2 - \lambda^2)}\\
&= \frac{\mu^3 - \lambda^3}{\lambda \mu (\mu^2 - \lambda^2)} + \frac{1}{\mu} = \frac{\mu^2 + \lambda^2 + \lambda \mu}{\lambda \mu (\mu + \lambda)} + \frac{1}{\mu}\\
&= \left(\frac{1}{\lambda} + \frac{2}{\mu} - \frac{1}{\mu + \lambda}\right).
\end{align*}

\textbf{Case 2:} $\lambda = \mu$. For this case $\nu = \frac{\mu}{2}$. Using $\mathbb{E}[g(k)]$ from Corollary~\ref{cor:MM11}, we obtain
\begin{align*}
&\mathbb{E}[\Delta] = \nu \int_0^\infty \frac{\mu e^{-\mu y}}{2}\left(y+\frac{2}{\mu}\right)^2 dy\\
&= \frac{\mu^2}{2} \int_0^\infty e^{-\mu y}\left(y+\frac{2}{\mu}\right)^2 dy \\
&= \frac{\mu^2e^2}{2} \int_{\frac{2}{\mu}}^\infty e^{-\mu z}z^2 dz \\
&= \frac{\mu^2e^2}{2} \left[\frac{e^{-\mu z}}{\mu} (-z^2-2(z/\mu+1/\mu^2))\bigg{|}_{z=\frac{2}{\mu}}^{\infty}\right] 
&= \frac{5}{2 \mu}.
\end{align*}
It is easy to verify that $\left(\frac{1}{\lambda} + \frac{2}{\mu} - \frac{1}{\mu + \lambda}\right)$ is equal to $\frac{5}{2 \mu}$ for $\lambda = \mu$.

\subsection{Proof of Corollary~\ref{cor:DM11}}\label{proof:cor:DM11}
\begin{figure*}[!t]
\normalsize
\setcounter{equation}{20}
\begin{align*}
A &= \int_{0}^{d}\P\left\{X_{k} > y + d  - \frac{\lceil \lambda x \rceil}{\lambda}\right\}f_X(x)dx \\
&= \int_{0}^{\frac{\lfloor \lambda d\rfloor}{\lambda}}\P\left\{X_{k} > y + d  - \frac{\lceil \lambda x \rceil}{\lambda}\right\}f_X(x)dx + \int_{\frac{\lfloor \lambda d\rfloor}{\lambda}}^{d}\P\left\{X_{k} > y + d  - \ceild\right\}f_X(x)dx \\
&= \sum_{m=1}^{\lfloor \lambda d\rfloor} \int_{\frac{m-1}{\lambda}}^{\frac{m}{\lambda}}\P\left\{X_{k} > y + d  - \frac{m}{\lambda}\right\}f_X(x)dx + \P\left\{X_{k} > y + d  - \ceild\right\} (e^{-\mu \frac{\lfloor \lambda d\rfloor}{\lambda}} - e^{-\mu d}) \\
&= \sum_{m=1}^{\lfloor \lambda d\rfloor} e^{-\frac{\mu  m}{\lambda}}\int_{\frac{m-1}{\lambda}}^{\frac{m}{\lambda}}f_X(x)dx + \P\left\{X_{k} > y + d  - \ceild\right\} (e^{-\mu \frac{\lfloor \lambda d\rfloor}{\lambda}} - e^{-\mu d}) \\
&= e^{-\mu(y+d)}\sum_{m=1}^{\lfloor \lambda d\rfloor} e^{\frac{\mu  m}{\lambda}} (e^{\frac{-\mu(m-1)}{\lambda}} -e^{-\frac{\mu m}{\lambda}}) + \P\left\{X_{k} > y + d  - \ceild\right\} (e^{-\mu \frac{\lfloor \lambda d\rfloor}{\lambda}} - e^{-\mu d}) \\
&= e^{-\mu(y+d)}(e^{\frac{\mu}{\lambda}}-1)\lfloor \lambda d\rfloor + \P\left\{X_{k} > y + d  - \ceild\right\} (e^{-\mu \frac{\lfloor \lambda d\rfloor}{\lambda}} - e^{-\mu d}).
\end{align*}
\hrulefill
\vspace*{4pt}
\end{figure*}

\begin{figure*}[!t]
\normalsize
\begin{align*}
\int_{0}^{\infty} A dy &= e^{-\mu d}(e^{\frac{\mu}{d}}-1)\lfloor \lambda d\rfloor \int_{0}^{\infty}e^{-\mu y} dy 
+ (e^{-\mu \frac{\lfloor \lambda d\rfloor}{\lambda}} - e^{-\mu d})\left[\int_{0}^{\ceild-d}dy  
 + e^{-\mu(d-\ceild)}\int_{\ceild-d}^{\infty}e^{-\mu y}dy \right] \\
&= \frac{1}{\mu}e^{-\mu d}(e^{\frac{\mu}{d}}-1)\lfloor \lambda d\rfloor + (e^{-\mu \frac{\lfloor \lambda d\rfloor}{\lambda}} - e^{-\mu d})\left[\ceild-d + \frac{1}{\mu}\right]
\end{align*}
\hrulefill
\vspace*{4pt}
\end{figure*}

\begin{figure*}[!t]
\normalsize
\begin{align*}
B &= \int_{d}^{\infty}\P\left\{X_k > y + x  - \frac{\lceil \lambda x \rceil}{\lambda} \right\}f_X(x)dx \\
&= \int_{d}^{\ceild}\P\left\{X_k > y + x  - \ceild \right\}f_X(x)dx 
+ \sum_{m=1}^{\infty} \int_{\beta_{m-1}}^{\beta_m}\P\left\{X_k > y + x  - \beta_m \right\}f_X(x)dx\\
&\quad \quad \quad \quad(\text{ where  } \beta_m = \frac{\lceil \lambda d\rceil + m}{\lambda})\\
&= \int_{d}^{\max\{d,\ceild - y\}}f_X(x)dx 
+ \int_{\max\{d,\ceild - y\}}^{\ceild}e^{-\mu(y-\ceild)}\mu e^{-2 \mu x}dx \\
&\quad +  \sum_{m=1}^{\infty} \left[\int_{\beta_{m-1}}^{\max\{\beta_{m-1},\beta_m-y\}}f_x(x) dx + \int_{\max\{\beta_{m-1},\beta_m-y\}}^{\beta_m} e^{-\mu(y-\beta_m)}\mu e^{-2\mu x}dx\right]\\
&= e^{-\mu d} - e^{-\mu \max\{d,\ceild - y\}} 
+ \frac{1}{2}e^{-\mu(y-\ceild)}\left(e^{-2\mu \max\{d,\ceild-y\}}-e^{-2\mu  \ceild}\right )\\
&\quad + \sum_{m=1}^{\infty} \left[e^{-\mu \beta_{m-1}} -e^{-\mu \max\{\beta_{m-1},\beta_m-y\}} + \frac{1}{2} e^{-\mu(y-\beta_m)} (e^{-2\mu \max\{\beta_{m-1},\beta_m-y\}} - e^{-2\mu \beta_m})  \right]
\end{align*}
\hrulefill
\vspace*{4pt}
\end{figure*}

\begin{figure*}[!t]
\normalsize
\begin{align*}
\int_{0}^{\infty} B dy &= \int_{0}^{\ceild - d}(e^{-\mu d} - e^{-\mu (\ceild-y)}) 
+ \int_{\ceild - d}^{\infty}\, 0 \, dy \\
&\quad +\frac{e^{\mu \ceild}}{2}\left[\int_{0}^{\ceild - d} e^{-\mu y}e^{-2\mu(\ceild - y)}dy + \int_{\ceild - d}^{\infty} e^{-\mu y}e^{-2\mu d} dy - \int_{0}^{\infty} e^{-\mu y} e^{-2\mu \ceild} dy \right]\\
&\quad + \sum_{m=1}^{\infty} \left[\int_{0}^{\frac{1}{\lambda}}(e^{-\mu \beta_{m-1}} - e^{-\mu (\beta_m-y)})dy + \int_{\frac{1}{\lambda}}^{\infty} \, 0 \, dy \right] \\
&\quad + \sum_{m=1}^{\infty}\frac{e^{\mu \beta_m}}{2}\left[\int_{0}^{\frac{1}{\lambda}}e^{-\mu y}e^{-2 \mu (\beta_m - y)}dy + \int_{\frac{1}{\lambda}}^{\infty} e^{-\mu y}e^{-2\mu \beta_{m-1}}dy - \int_{0}^{\infty}e^{-\mu y}e^{-2 \mu \beta_m}dy\right]\\
&= e^{-\mu d}\left(\ceild - d \right) - \frac{e^{-\mu \ceild}}{\mu} (e^{\mu (\ceild - d)}-1) + \frac{e^{-\mu d} - e^{-\mu \ceild}}{\mu}\\
&\quad + \sum_{m=1}^{\infty} \left[\frac{e^{-\mu \beta_{m-1}}}{\lambda} - \frac{e^{-\mu \beta_m}}{\mu}(e^{\frac{\mu}{\lambda}}-1)\right] + \sum_{m=1}^{\infty}\frac{e^{-\mu \beta_m}}{2}\left[\int_{0}^{\frac{1}{\lambda}}e^{\mu y} dy + e^{\frac{2\mu}{\lambda}}\int_{\frac{1}{\lambda}}^{\infty}e^{-\mu y}dy - \frac{1}{\mu}\right]\\
&= e^{-\mu d}\left(\ceild - d \right) + \sum_{m=1}^{\infty}\left[\frac{e^{-\mu \beta_{m-1}}}{\lambda} - \frac{e^{-\mu \beta_m}}{\mu}(e^{\frac{\mu}{\lambda}}-1)\right] + \sum_{m=1}^{\infty}\frac{e^{-\mu \beta_m}}{\mu}\left(e^{\frac{\mu}{\lambda}}-1\right)\\
&=  e^{-\mu d}\left(\ceild - d \right) + \frac{e^{-\mu \ceild}}{\lambda(1-e^{-\frac{\mu}{\lambda}})}
\end{align*}
\hrulefill
\vspace*{4pt}
\end{figure*}

\begin{figure*}[!t]
\normalsize
\begin{align*}
\mathbb{E}[g(k)] &= \int_0^{\infty} P(g(k) > y)dy = \int_0^{\infty} A dy + \int_0^{\infty} Bdy \\
&= \frac{1}{\mu}e^{-\mu d}(e^{\frac{\mu}{d}}-1)\lfloor \lambda d\rfloor + (e^{-\mu \frac{\lfloor \lambda d\rfloor}{\lambda}} - e^{-\mu d})\left[\ceild-d + \frac{1}{\mu}\right] + e^{-\mu d}\left(\ceild - d \right) + \frac{e^{-\mu \ceild}}{\lambda(1-e^{-\frac{\mu}{\lambda}})}\\
&= \frac{e^{-\mu \ceild}}{\lambda(1-e^{-\frac{\mu}{\lambda}})} + e^{-\mu \frac{\lfloor \lambda d\rfloor}{\lambda}}\left[\ceild-d + \frac{1}{\mu}\right] + \frac{e^{-\mu d}}{\mu}\left((e^{\frac{\mu}{\lambda}}-1)\lfloor \lambda d\rfloor - 1\right)
\end{align*}
\hrulefill
\vspace*{4pt}
\end{figure*}

In the following we first derive $\mathbb{E}[\lceil \lambda X \rceil]$. 
\begin{align*}
\mathbb{E}[\lceil \lambda X \rceil] &= \int_{0}^{\infty} \lceil \lambda x \rceil \mu e^{-\mu x}dx\\
&= \sum_{m=1}^{\infty}m \int_{\frac{m-1}{\lambda}}^{\frac{m}{\lambda}}\mu e^{-\mu x}dx \\
&= (e^{\mu/\lambda}-1)\sum_{m=1}^{\infty} m (e^{-\mu/\lambda})^m\\
&= (e^{\mu/\lambda}-1)e^{-\mu/\lambda}/(1-e^{-\mu/\lambda})^2 = 1/(1-e^{-\mu/\lambda}).
\end{align*}
In the following we compute $P(g(k) > y)$. Recall that $I_k = \frac{\lceil \lambda X_{k-1} \rceil}{\lambda} - X_{k-1}$ (Theorem~\ref{thm:GG11}). Using this and Lemma~\ref{lem:g(k)Distr} we obtain
\begin{align*}
&P(g(k) > y) \\
&= \int_{0}^{d}\P\left\{X_{k} + \ceild - x > y + d - x \right\}f_X(x)dx\\
&\quad + \int_{d}^{\infty}\P\left\{X_k + \ceild - x > y \right\}f_X(x)dx \\
&= \int_{0}^{d}\P\left\{X_{k} > y + d  - \ceild \right\}f_X(x)dx\\
&\quad + \int_{d}^{\infty}\P\left\{X_k > y + x  - \ceild  \right\}f_X(x)dx \\ 
&= A + B
\end{align*}
We compute the terms $A$ and $B$ below, and use $\mathbb{E}[g(k)] = \int_{0}^{\infty}P(g(k) > y) dy$ to obtain the result. 

}

\end{document}